\documentclass[a4paper,11pt, headings=small, DIV=classic, numbers=enddot]{scrartcl}
\usepackage[utf8]{inputenc}
\usepackage[T1]{fontenc}
\usepackage[english]{babel}

\usepackage{scrpage2}
\usepackage{authblk}
\usepackage{etoolbox}
\usepackage{amsmath}
\usepackage{amssymb}
\usepackage{amsfonts}
\usepackage[mathscr]{eucal}
\usepackage{amsthm}
\usepackage{lmodern}
\usepackage{todonotes}
\usepackage{multirow}
\usepackage{booktabs}

\usepackage[centerdot]{mathtools}
\usepackage{tikz} 
\usepackage{multirow}
\usepackage{subcaption}
\usepackage{ifthen}
\usepackage{array}
\usepackage{pdflscape}
\usepackage{csquotes}
\usepackage[abbreviations]{foreign}
\usepackage{microtype}

\usetikzlibrary{shadows}
\usetikzlibrary{decorations.pathmorphing}

\definecolor{CeruleanRef}{RGB}{12,127,172}





\KOMAoptions{DIV=last}
\usetikzlibrary{external}
\tikzexternalize

\tikzsetexternalprefix{external-figures/}
\tikzset{external/only named=true}

\chead{}
\ihead{}
\ohead{}
\ifoot{}
\cfoot{\pagemark}
\ofoot{}
\pagestyle{scrheadings}

\addtokomafont{disposition}{\rmfamily\bfseries}
\addtokomafont{title}{\normalfont\sffamily}
\addtokomafont{caption}{\small}
\addtokomafont{captionlabel}{\bfseries\small\rmfamily}
\addtokomafont{descriptionlabel}{\rmfamily}
\setkomafont{pageheadfoot}{\rmfamily}
\addtokomafont{pagenumber}{\rmfamily}
\setkomafont{subparagraph}{\normalfont\textit}

\setcapindent{0pt}

\definecolor{mydarkblue}{rgb}{0.25, 0.47, 0.61}
\definecolor{mygray}{rgb}{0.89, 0.89, 0.89}
\definecolor{mylightblue}{rgb}{0.91, 0.93, 0.95}
\definecolor{mydarkgray}{rgb}{0.41, 0.33, 0.33}
\definecolor{myred}{rgb}{0.85, 0.16, 0.11}
\definecolor{mygreen}{rgb}{0.16, 0.85, 0.11}
\definecolor{myyellow}{rgb}{0.8, 0.77, 0.0}

\definecolor{myviolet}{HTML}{71277a}

\definecolor{dissbasered}{HTML}{B32B27}
\definecolor{dissbaseblue}{HTML}{2775B3}
\definecolor{dissbasegreen}{HTML}{7AAE2b}
\definecolor{dissbasegray}{HTML}{6F7267}

\colorlet{dissdarkred}{dissbasered!70!black}
\colorlet{dissdarkblue}{dissbaseblue!70!black}
\colorlet{dissdarkgreen}{dissbasegreen!70!black}
\colorlet{dissdarkgray}{dissbasegray!70!black}

\colorlet{disslightred}{dissdarkred!30!white}
\colorlet{disslightblue}{dissdarkblue!30!white}
\colorlet{disslightgreen}{dissdarkgreen!30!white}
\colorlet{disslightgray}{dissdarkgray!10!white}

\colorlet{dissblue}{dissdarkblue!60!white}
\colorlet{dissred}{dissdarkred!60!white}
\colorlet{dissgreen}{dissdarkgreen!60!white}
\colorlet{dissgray}{dissdarkgray!40!white}

\tikzset{
  myshadow/.style={opacity=.4, shadow xshift=0.03, shadow yshift=-0.03},
  mynode/.style={
  fill=mygray!40!white, 
  minimum size=5pt,
  inner sep=0mm, 
  draw=mydarkgray,
  thick,
  circle,
  drop shadow=myshadow
  }, 
  mybluenode/.style={mynode,draw=mydarkblue},
  mylabel/.style={fill=none,draw=none,font=\footnotesize, rectangle, inner sep=0.75mm},
  myedge/.style={very thick, mydarkgray},
  inactive/.style={thick,mygray!60!black, dashed},
  tA/.style={mynode,draw=myred,rectangle, thick},
  rA/.style={tA, fill=myred},
  tB/.style={mynode,draw=mydarkblue,diamond, minimum size=7pt},
  rB/.style={tB, fill=mydarkblue},
  tC/.style={mynode,draw=dissdarkgreen!90!white,star, minimum size=7pt},
  rC/.style={tC, fill=dissdarkgreen!90!white},
  tD/.style={mynode,draw=myyellow,cloud, minimum size=7pt},
  rD/.style={tD, fill=myyellow},
  tE/.style={mynode,draw=myviolet, isosceles triangle, shape border rotate=90, isosceles triangle stretches, minimum size=7pt},
  rE/.style={tE, fill=myviolet},
  removed/.style={dissbasegreen, ultra thick},
  added/.style={dissbasered, ultra thick},
  untouched/.style={mydarkgray, thick}
}

\DeclareRobustCommand{\nodetypeA}{%
\begin{tikzpicture}
  \draw node[tA,anchor=south] (v0) {};
\end{tikzpicture}}
\DeclareRobustCommand{\nodetypeB}{%
\begin{tikzpicture}
  \draw node[tB,anchor=south] (v0) {};
\end{tikzpicture}}
\DeclareRobustCommand{\nodetypeC}{%
\begin{tikzpicture}
  \draw node[tC,anchor=south] (v0) {};
\end{tikzpicture}}
\DeclareRobustCommand{\nodetypeD}{%
\begin{tikzpicture}
  \draw node[tD,anchor=south] (v0) {};
\end{tikzpicture}}
\DeclareRobustCommand{\nodetypeE}{%
\begin{tikzpicture}
  \draw node[tE,anchor=south] (v0) {};
\end{tikzpicture}}

\DeclareRobustCommand{\roottypeA}{%
\begin{tikzpicture}
  \draw node[rA,anchor=south] (v0) {};
\end{tikzpicture}}
\DeclareRobustCommand{\roottypeB}{%
\begin{tikzpicture}
  \draw node[rB,anchor=south] (v0) {};
\end{tikzpicture}}
\DeclareRobustCommand{\roottypeC}{%
\begin{tikzpicture}
  \draw node[rC,anchor=south] (v0) {};
\end{tikzpicture}}


%

\newcommand{\N}{\ensuremath{\mathbb{N}}}

\newcommand{\R}{\ensuremath{\mathbb{R}}}

\newcommand{\Rp}{\ensuremath{\mathbb{R}_{\geq 0}}}
\newcommand{\Zp}{\ensuremath{\mathbb{Z}_{\geq 0}}}

%

\DeclareMathOperator{\conv}{conv}

\DeclareMathOperator{\dist}{dist}

\DeclareMathOperator{\Proj}{Proj}

\DeclarePairedDelimiter{\abs}{\lvert}{\rvert}

%

%
\newcommand{\tforall}{\text{for all}}
\newcommand{\tand}{\text{and}}

\newcommand{\tandall}{\text{and all}}
\newcommand{\tandinteger}{\text{and integer}}

\newcommand{\srsum}[1]{\smashoperator[r]{\sum_{#1}}}

\newcommand{\slrsum}[1]{\smashoperator[lr]{\sum_{#1}}}

\newcommand{\transp}{\mathrm{T}}

\makeatletter
\def\moverlay{\mathpalette\mov@rlay}
\def\mov@rlay#1#2{\leavevmode\vtop{%
\baselineskip\z@skip \lineskiplimit-\maxdimen
\ialign{\hfil$\m@th#1##$\hfil\cr#2\crcr}}}
\newcommand{\charfusion}[3][\mathord]{
#1{\ifx#1\mathop\vphantom{#2}\fi
\mathpalette\mov@rlay{#2\cr#3}
}
\ifx#1\mathop\expandafter\displaylimits\fi}
\makeatother


\newcommand{\cset}[2]{\{#1 \mid #2\}}

\newcommand{\tuset}[1]{\{#1\}}

\newcommand{\bigsetm}[2]{\bigl\{#1\mathrel{}\bigm\lvert\mathrel{} #2\bigr\}}
\newcommand{\Bigsetm}[2]{\Bigl\{#1\mathrel{}\Bigm\lvert\mathrel{} #2\Bigr\}}


\newlength{\lpRhsSeparation}
\newboolean{lpNeedStepRefcounter}

\newcommand{\objn}[2]{#1\quad & \mathrlap{#2} &&&\hspace{\lpRhsSeparation}&}
\newcommand{\obj}[2]{\objn{#1}{#2}\ifthenelse{\boolean{lpNeedStepRefcounter}}{\refstepcounter{equation}}{}\tag{\theequation}}

%

\newcommand{\thmicon}{}
\newcommand{\thmfont}[1]{#1}
\theoremstyle{plain}
\newtheorem{theorem}{\thmicon\thmfont{Theorem}}[section]
\newtheorem{lemma}[theorem]{\thmicon\thmfont{Lemma}}
\newtheorem{corollary}[theorem]{\thmicon\thmfont{Corollary}}
\newtheorem{definition}[theorem]{\thmicon\thmfont{Definition}}
\newtheorem{property}[theorem]{\thmicon\thmfont{Property}}

\theoremstyle{definition}

\theoremstyle{remark}

\newcounter{property}


\newcommand{\eps}{\varepsilon}
\renewcommand{\epsilon}{\eps}

%

%
\usetikzlibrary{decorations}
\usetikzlibrary{intersections}
\usetikzlibrary{shapes}
\usetikzlibrary{patterns}
\usetikzlibrary{calc}
\usetikzlibrary{positioning}
\tikzset{
    solid part/.style={%
        postaction={solid, decorate, draw,
            decoration={moveto,
                pre=curveto, 
                post=curveto, 
                pre length=#1,
                post length=0}}
    }
}   

\tikzset{
    small circles/.style={draw, circle, minimum
        size=0.3em, inner sep=0.1em, font=\scriptsize},
    copied/.style={fill=white!90!black, small circles}
}

\tikzset{
	graphnode/.style  = {line width=0pt, circle, draw=none, fill, minimum size=5pt, inner sep=0cm},
	graphedge/.style  = {very thick, >=stealth},
	genlabel/.style   = {font={\footnotesize}},
	graphlabel/.style = {genlabel}	
}

\tikzset{
  dissnode/.style = {label distance=0.0mm, fill, draw, circle, inner sep=0pt, minimum size=8pt},
  rectdissnode/.style = {dissnode, rectangle, minimum size=7pt},
  diamdissnode/.style = {dissnode, diamond, minimum size=10pt},
  disslabel/.style = {rectangle, inner sep=0.5mm, minimum size=0mm, fill=none, draw=none, font=\footnotesize}
}

\tikzset{
  fillednode/.style = {thick, draw, font=\small, fill=lightgray, minimum size=15pt, circle, inner sep=0mm} }

\tikzset{%
  add/.style args={#1 and #2}{dashed,to path={%
      ($(\tikztostart)!-#1!(\tikztotarget)$)--($(\tikztotarget)!-#2!(\tikztostart)$)%
      \tikztonodes}}
}

\newcolumntype{m}[1]{>{\centering\let\newline\\\arraybackslash\hspace{0pt}}p{#1}}
\newcolumntype{f}[1]{>{\raggedright\let\newline\\\arraybackslash\hspace{0pt}}p{#1}}

%

%


%



\newcommand{\xij}{x_{ij}}
\newcommand{\fij}{f_{ij}}
\newcommand{\fji}{f_{ji}}





\newcommand{\edgeij}{\{i,j\}}

\newcommand{\rootset}{\mathfrak{R}}
\newcommand{\terms}{\mathfrak{T}}
\newcommand{\nonrootterms}{\terms\setminus\rootset}
\newcommand{\LP}{\mathfrak{L}}
\newcommand{\sfpoly}{\mathfrak{F}}
\newcommand{\steinerset}{\mathfrak{N}}
\newcommand{\relcuts}{\mathfrak{S}}
\newcommand{\nonrelcuts}{\overline{\relcuts}}

\title{MIP Formulations for the Steiner Forest Problem}
\date{September 2017}

\author[1]{Daniel R. Schmidt}
\author[2]{Bernd Zey}
\author[3]{Fran\c cois Margot}

\affil[1]{%
Universit\"at zu K\"oln\\
Institut für Informatik\\
\texttt{schmidt@informatik.uni-koeln.de}
}

\affil[2]{
TU Dortmund\\
Fakult\"at f\"ur Informatik\\
\texttt{bernd.zey@tu-dortmund.de}
}

\affil[3]{
  Carnegie Mellon University\\
  Tepper School of Business\\
}

\begin{document}
\maketitle
\begin{abstract}
The Steiner Forest problem is among the fundamental network design problems. 
Finding tight linear programmming bounds for the problem is the key for both fast Branch-and-Bound algorithms and good primal-dual approximations.
On the theoretical side, the best known bound can be obtained from an integer program by Könemann, Leonardi, Schäfer and van~Zwam~\cite{KLSvZ2008a}.
It guarantees a value that is a ($2-\eps$)-approximation of the integer optimum.
On the practical side, bounds from a mixed integer program by Magnanti and Raghavan~\cite{MR2005} are very close to the integer optimum in computational experiments, but the size of the model limits its practical usefulness.
We compare a number of known integer programming formulations for the problem and propose three new formulations.
We can show that the bounds from our two new cut-based formulations for the problem are within a factor of 2 of the integer optimum.
In our experiments, the formulations prove to be both tractable and provide better bounds than all other tractable formulations.
In particular, the factor to the integer optimum is much better than 2 in the experiments.
\end{abstract}

\section{Introduction} 

The Steiner Forest problem (SFP) is one of the fundamental network design problems.
Given an undirected graph $G=(V,E)$ with an edge weight $c_e$ for each edge $e \in E$ and terminal sets $T^1,\dots,T^K \subseteq V$, it asks for a minimum weight, cycle-free subgraph of $G$ in which the nodes in each terminal set are connected.
The aim of this work is to find stronger linear programming bounds for the Steiner Forest Problem by adapting Integer Programming formulations from the Steiner Tree problem -- the special case where $K=1$.
Our motivation comes from the fact that two provenly strong Integer Programming formulations for the Steiner Tree problem have no known analogon in the Steiner Forest case. 
We aim to bridge this gap.

There are a number of known Integer Linear Programming formulations for the Steiner Forest problem:
The equivalent undirected \emph{flow}-based and \emph{cut-set}-based formulations are known to only yield weak bounds through their linear programming relaxation.
Stronger, layered formulations draw their strength from the following observation~\cite{Martin1986,BMW1989,CR1994}: 
Instead of connecting each terminal set with an undirected tree, we can equivalently bi-direct the input graph, select a root node for each terminal set and then search for arborescences (i.e., a directed trees).
Any such arborescence induces an orientation of each edge and this can be used to strenghten the linear programming relaxations.
In the Steiner \emph{Tree} case where only one terminal set -- and thus, only one root node -- exists, this process is straight-forward. 
When multiple root nodes are present, however, the different arborescences may in general impose conflicting orientations to the edges. 
This is a major additional difficulty in solving the Steiner Forest problem. 
Magnanti and Raghavan~\cite{MR2005} show how to consolidate the conflicts and obtain a very strong linear programming relaxation.
Unfortunately, the resulting formulation is (doubly) exponentially large and to date, no efficient separation procedure is known.

The issues with conflicting orientations can be avoided altogether by using strong undirected formulations. 
Goemans~\cite{Goemans1994}, Lucena~\cite{Lucena1993}, as well as Margot, Prodon and Liebling~\cite{MPL1994} independently propose an ILP formulation for the Steiner \emph{Tree} problem that builds on Edmond's complete description of the tree polytope~\cite{Edmonds2003}.
This tree-based formulation has a straight-forward extension to the Steiner \emph{Forest} problem: 
It has, however, weakly coupled layers for each terminal set that diminish the strength of the originally strong formulation. 

We propose three new formulations for the Steiner Forest problem that strive to improve on the weaknesses of the existing formulations.
Our cut-set variants of the Magnanti-Raghavan formulation reduce the number of variables and consist of efficiently separable constraints.
While they yield a weaker lower bound than the original formulation, our experiments show that it consistently provides better lower bounds than all of the other ILP formulations. 
As a second result, we propose a new formulation that is tree-based and has a stronger coupling between the layers.
Figure~\ref{fig:lp-comparison} shows a comparison of the formulations on a small example instance.

\begin{figure}[t]
\centering
\begin{subfigure}{0.3\textwidth}
\centering
\begin{tikzpicture}
  \useasboundingbox (-0.6,-0.6) rectangle (1.6,1.6);
  \begin{scope}[scale=1]
  \draw (0,0) node[tA] (v1) {};
  \draw (0,1) node[tA] (v2) {};
  \draw (1,1) node[tA] (v3) {};
  \draw (1,0) node[tA] (v4) {};
  \end{scope}
  
  \begin{scope}[myedge]
  \draw (v1) -- (v2);
  \draw (v1) -- (v4);
  \draw (v2) -- (v3);
  \draw (v3) -- (v4);
  \end{scope}
\end{tikzpicture}
\caption{instance (A)}
\end{subfigure}
\begin{subfigure}{0.3\textwidth}
  \centering
  \begin{tikzpicture}
   \useasboundingbox (-0.6,-0.6) rectangle (1.6,1.6);
  \begin{scope}[scale=1]
    \draw (0,0) node[tB] (v1) {};
    \draw (0,1) node[tA] (v2) {};
    \draw (1,1) node[tB] (v3) {};
    \draw (1,0) node[tA] (v4) {};
  \end{scope}
  \begin{scope}[myedge]
  \draw (v1) -- (v2);
  \draw (v1) -- (v4);
  \draw (v2) -- (v3);
  \draw (v3) -- (v4);
  \end{scope}
\end{tikzpicture}
\caption{instance (B)}
\end{subfigure}
 \begin{subfigure}{0.3\textwidth}
   \centering
  \begin{tikzpicture}
  \useasboundingbox (-0.1,-0.1) rectangle (2.1,2.1);
  \begin{scope}[scale=2]
    \draw (0,0) node[tB] (v1) {};
    \draw (0,1) node[tA] (v2) {};
    \draw (1,1) node[tC] (v3) {};
    \draw (1,0) node[tD] (v4) {};
  \end{scope}
  \begin{scope}[xshift=0.5cm,yshift=0.5cm] 
    \draw (0,0) node[tC] (v5) {};
    \draw (0,1) node[tD] (v6) {};
    \draw (1,1) node[tB] (v7) {};
    \draw (1,0) node[tA] (v8) {};
  \end{scope}
  \begin{scope}[myedge]
  \draw (v1) -- (v2);
  \draw (v1) -- (v4);
  \draw (v2) -- (v3);
  \draw (v3) -- (v4);
  
  \draw (v5) -- (v6);
  \draw (v5) -- (v8);
  \draw (v6) -- (v7);
  \draw (v7) -- (v8);
  
  \draw (v1) -- (v5);
  \draw (v2) -- (v6);
  \draw (v3) -- (v7);
  \draw (v4) -- (v8);
  \end{scope}
\end{tikzpicture}
\caption{instance (C)}
\end{subfigure}\\[\baselineskip]

\begin{subfigure}{\textwidth}
\centering
\begin{tabular}{rrrr}
\toprule
formulation        & A & B & C\\ 
\cmidrule(rl){1-1}\cmidrule(rl){2-4}
basic flow/cut                                           & 2 & 2 & 4\\
\addlinespace
lifted cut~\cite{KLSvZ2008a}                             & 2 & 2 & 4\\
layered directed~\cite{Martin1986,BMW1989,CR1994}        & 3 & 2 & 4\\
layered tree-based~\cite{Goemans1994,Lucena1993,MPL1994} & 3 & 2 & 4\\
full directed flow-based~\cite{MR2005}                   & 3 & 3 & 6\\
\addlinespace
our tree-based                                           & 3 & 2.67 & 5\\ 
our extended cut-based                                   & 3 & 2.5  & 5.14\\
our strengthened extended cut-based                      & 3 & 3  & 6 \\
our extended flow-based                                  & 3 & 3  & 6 \\
\addlinespace
\emph{integer optimum} & \emph{3} & \emph{3} & \emph{7}\\
\bottomrule
\end{tabular}
\caption{Optima of the linear programming relaxations.}
\end{subfigure}
\caption{\label{fig:lp-comparison}%
A comparison of lower bounds obtained from linear relaxations. 
The terminal sets of the three Steiner Forest instances are depicted in different colors and shapes 
({\nodetypeA}, {\nodetypeB}, {\nodetypeC}, and {\nodetypeD}).
All edges have unit cost.}
\end{figure}

\subsection{Notation} 
Throughout, let $G=(V,E)$ be an undirected, simple graph with
nodes $V=\tuset{1,\dots,n}$ and edges $E=\tuset{e_1,\dots,e_m}$.
A \emph{cut-set} in $G$ is a subset $S \subseteq V$ of the nodes of $G$.
Any cutset $S\subseteq V$ induces a \emph{cut} $\delta(S) :=
\cset{\edgeij \in E}{\abs{\edgeij \cap S} = 1}$.
If $S=\{i\}$ is a singleton, we abbreviate $\delta(i) := \delta(\{i\})$.
If $D=(V,A)$ is a directed graph, we distinguish the \emph{outgoing cut}
$\delta^+(S) = \cset{(i,j) \in A}{i \in S\ \tand\ j \in V\setminus S}$ and
the \emph{incoming cut} $\delta^-(S) = \cset{(i,j) \in A}{i \in V\setminus S\ \tand\ j \in S}$.
Given a finite, ordered set $A$ and an arbitrary set $B \subseteq \R$, we interpret functions
$x: A \to B$ as vectors $x = (x_a)_{a \in A} \in B^A$ and vice-versa.
Furthermore, we write $x(A') := \sum_{a \in A'} x(a)$ for a subset $A' \subseteq A$.
Finally, for~$k \in \Zp$, let~$[k] := \tuset{1,\dots,k}$.

\subsection{The Steiner Forest Problem}
Consider the undirected graph $G=(V,E)$ and let $T^1,\dots,T^K \subseteq V$ be $K \in \N$ terminal sets.
A \emph{feasible} Steiner Forest of $(G, T^1,\dots,T^K)$ is a forest $(V_F
\subseteq V, E_F \subseteq E)$ in $G$ that, for all $k \in [K]$, contains an
$s$-$t$-path for all $s,t \in T^k$.
A feasible forest $(V_F, E_F)$ is optimum with respect to edge weights $c \in \Rp^E$ if it minimizes the total cost $\sum_{e \in E_F} c_e$.
Without loss of generality, we can assume that the terminal sets are pairwise disjoint: 
If $T^k$ and $T^\ell$ share at least one node, then any forest is feasible for
$T^1,\dots,T^K$ if and only if it is feasible for the instance where $T^k$ and $T^\ell$ are replaced by $T^k\cup T^\ell$.
We denote the set of all terminal nodes by $\terms := T^1 \cup \dots \cup T^K$ and
write $\tau(t) := k$ to denote the index of the unique terminal set that contains the terminal $t \in \terms$.
Furthermore, we say that the non-terminal nodes $\steinerset := V\setminus \terms$ are \emph{Steiner Nodes}. 
For each terminal set $T^k$, $k \in [K]$, we select an arbitrary node $r^k$ as a fixed root node and define $\rootset := \{r_1,\dots,r^k\}$ to be the set of all root nodes.
 
 A cut-set $S \subseteq V$ is relevant for the terminal set $T^k$ if it separates $r^k$ from some terminal $t \in T^k$, i.e., if $r^k \in S$ but $t \not\in S$ for some $t \in T^k$.
 We write $\relcuts^k$ for the set of all cut-sets that are relevant for $T^k$ and $\relcuts := \relcuts^1 \cup \dots \cup \relcuts^K$ for the set of all relevant cut-sets.
    
The convex hull of the characteristic vectors of all feasible Steiner Forests of $(G,\terms)$ is called the \emph{Steiner Forest Polytope}
\begin{align*}
\sfpoly(G, \terms) := \conv\bigsetm{x \in \{0,1\}^E}{x\text{ induces a feasible Steiner Forest of } (G,\terms)}.
\end{align*}
If $P := \cset{(x,y) \in \R^{n_1+n_2}}{Ax + By = d}$ is an arbitrary polyhedron given in its linear description, we write $\Proj_x(P) := \cset{x \in \R^{n_1}}{\exists\ y \in \R^{n_2}\ \text{s.t.}\ (x,y) \in P}$ to denote the projection of $P$ onto the $x$ variables.

\section{A Catalogue of State of the Art Steiner Forest Formulations}
In the spirit of Goeman's and Myung's~\emph{Catalogue of Steiner Tree Formulations}~\cite{GM1993} we first give an overview of Steiner \emph{Forest} MIP formulations from the literature.
An experimental comparison of the formulations in this section can be found in Section~\ref{sec:experiments}.
\subsection{Basic Formulations}
The Steiner Forest problem has a straight-forward flow formulation:
We introduce a variable $x_{ij}$ for each edge $\{i,j\} \in E$ and define two flow variables $f_{ij}^t, f^t_{ji}$ for each non-root terminal $t \in \nonrootterms$.  
Then, a selection of edges induced by the $x$ variables forms a feasible Steiner Forest, if it allows us to send one unit of flow from the $k$-th root $r^k$ to any terminal $t \in T^k$, for all $k \in [K]$. 
This leads to the formulation 
\begin{align}
\min\cset{c^\transp x}{(x,f) \in \LP^{uf}\ \tandinteger} \tag{IPuf}\label{ip:undir-flow}
\end{align}
 where 
\begin{subequations}\label{lp:undir-flow}
\begin{align}
  \LP^{uf} := \Bigl\{ (x,f) \mathrel{}\Bigm\lvert 
         f^t(\delta^+(i)) - f^t(\delta^-(i)) &=
                                                 \begin{cases}
                                                       1,&\text{if $i = r_{\tau(t)}$}\\
                                                      -1,&\text{if $i = t$}\\ 
                                                       0,&\text{otherwise}
                                                \end{cases} && \begin{aligned} &\tforall\ i \in V\\ &\tandall\ t \in \nonrootterms\end{aligned}\label{ip:undir-flow-1}\\
       \fij^t + \fji^t                &\leq \xij		&& \begin{aligned} &\tforall\ \edgeij \in E\\ &\tandall\ t\in\nonrootterms\end{aligned}\label{ip:undir-flow-2}\\[\baselineskip]
       \fij^t, \fji^t                   &\in [0,1]		&& \begin{aligned} &\tforall\ \edgeij \in E\\ &\tandall\ t\in\nonrootterms\end{aligned}\label{ip:undir-flow-3}\\
       x_{ij}                          &\in [0,1]      	&& \tforall\ \{i,j\} \in E \label{ip:undir-flow-4}
      \,\Bigr\}. 
\end{align}
\end{subequations}
The program can be reformulated with cut-conditions, using again a variable $x_{ij}$ for each edge~$\{i,j\} \in E$:
A selection of edges is feasible if and only if every relevant cut is covered by at least one edge.
The corresponding ILP formulation reads 
\begin{align}
\min\cset{c^\transp x}{x \in \LP^{uc}\ \tandinteger}\tag{IPuc} \label{ip:undir-cut}
\end{align}
 where
\begin{subequations}\label{lp:undir-cut}
\begin{align}
  \LP^{uc} := \Bigl\{ x \mathrel{}\Bigm\lvert
      x(\delta(S)) \geq 1\ \tforall\ S \in \relcuts,\ \text{where}\ x &\in [0,1]^E
      \Bigr\}\label{ip:undir-cut-1}
\end{align}
\end{subequations}
By Gale's extension~\cite{Gale1957} of the max-flow min-cut theorem, the projection
of~$\LP^{uf}$ onto the $x$-variables is exactly~$\LP^{uc}$.
Thus, the linear programming relaxations of~\eqref{ip:undir-flow} and~\eqref{ip:undir-cut} yield the same
lower bound. 
\begin{lemma}
\label{lemma:uc:vs:uf}
$\Proj_x(\LP^{uf}) = \LP^{uc}$. \qed
\end{lemma}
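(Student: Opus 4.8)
The plan is to exploit that the flow system defining $\LP^{uf}$ decouples completely over the non-root terminals, and then to invoke Gale's extension of the max-flow--min-cut theorem once for each such terminal. Since both $\Proj_x(\LP^{uf})$ and $\LP^{uc}$ are subsets of $[0,1]^E$, it is enough to prove, for a fixed $x\in[0,1]^E$: there exists $f$ with $(x,f)\in\LP^{uf}$ if and only if $x(\delta(S))\ge 1$ for every $S\in\relcuts$. The key structural observation is that constraints~\eqref{ip:undir-flow-1}--\eqref{ip:undir-flow-3} never couple the blocks $f^t$ and $f^{t'}$ belonging to distinct terminals $t,t'\in\nonrootterms$. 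Hence such an $f$ exists if and only if, for \emph{each} $t\in\nonrootterms$ separately, there is a nonnegative block $f^t$ routing one unit of flow from $r^{\tau(t)}$ to $t$ in the bidirection of $G$, subject to the joint capacity bound $\fij^t+\fji^t\le\xij$ on every edge $\edgeij\in E$.

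First I would fix one terminal $t\in\nonrootterms$ and establish the per-terminal equivalence. For the ``only if'' direction, sum the conservation equation~\eqref{ip:undir-flow-1} over all $i\in S$, for an arbitrary cut-set $S$ with $r^{\tau(t)}\in S$ and $t\notin S$; the internal arcs cancel and one is left with $f^t(\delta^+(S))-f^t(\delta^-(S))=1$. Since $f^t\ge 0$ and, for each $\edgeij\in\delta(S)$, exactly one of the two arcs lies in $\delta^+(S)$, constraint~\eqref{ip:undir-flow-2} gives $f^t(\delta^+(S))\le\sum_{\edgeij\in\delta(S)}(\fij^t+\fji^t)\le x(\delta(S))$, hence $x(\delta(S))\ge 1$. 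For the ``if'' direction I would apply Gale's extension~\cite{Gale1957} of the max-flow--min-cut theorem to the bidirection of $G$ with source $r^{\tau(t)}$, sink $t$, and the real capacities $\xij$ encoded through the coupled antiparallel arc pair: if every $r^{\tau(t)}$--$t$ cut has capacity at least $1$, a feasible flow of value $1$ exists; and because $0\le\fij^t\le\xij\le 1$, the box constraint~\eqref{ip:undir-flow-3} is then automatically met.

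Finally I would intersect these statements over all $t\in\nonrootterms$. Putting the blocks together, an $f$ with $(x,f)\in\LP^{uf}$ exists if and only if $x(\delta(S))\ge 1$ for every $t\in\nonrootterms$ and every cut-set $S$ with $r^{\tau(t)}\in S$ and $t\notin S$. By the definitions of $\relcuts^k$ and of $\relcuts=\relcuts^1\cup\dots\cup\relcuts^K$, the family of cut-sets occurring here is exactly $\relcuts$: for each $k$ the terminals $t\in T^k\setminus\{r^k\}$ lie in $\nonrootterms$ and generate precisely the relevant cuts $\relcuts^k$ (and if $T^k=\{r^k\}$ then $\relcuts^k=\emptyset$, so nothing is missed). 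This identifies the condition with $x\in\LP^{uc}$ and finishes the argument.

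The one point that needs a little care is the faithful encoding of an undirected edge of capacity $\xij$ by the pair $\fij^t,\fji^t$ with the shared bound $\fij^t+\fji^t\le\xij$, so that an undirected cut $\delta(S)$ of $G$ corresponds to a directed cut of the bidirection of the \emph{same} total capacity; here one should note that an optimal single-commodity flow never needs to use both orientations of an edge, which reduces the coupled model to ordinary arc capacities and makes Gale's theorem directly applicable. Once this correspondence is set up, both implications are routine, and using \emph{Gale's} theorem rather than the integral max-flow--min-cut theorem is exactly what accommodates the fractional capacities $\xij$ of the LP relaxation.
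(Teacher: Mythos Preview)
Your proposal is correct and follows exactly the approach the paper indicates: the paper does not give a detailed proof but simply remarks that the equality follows from Gale's extension~\cite{Gale1957} of the max-flow--min-cut theorem and marks the lemma with a \qed. You have spelled out the standard details---the per-terminal decoupling, the cut-summation for one inclusion, and the appeal to Gale's theorem for the other---that the paper leaves implicit.
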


\subsection{Consistent Edge Orientations for Each Terminal Set}

While the linear programming bounds obtained from~\eqref{ip:undir-flow} and~\eqref{ip:undir-cut} are known to be weak~\cite{CR1994}, the formulations can be potentially strengthened.
To this aim, we impose that the edges used to connect each terminal set induce a consistent orientation of the edges in the following way. 
\begin{align}
  \min\cset{c^\transp x}{(x,f) \in \LP^{df}\ \tandinteger}\tag{IPdf} \label{ip:dir-flow}
\end{align}
where
\begin{subequations}\label{lp:dir-flow}
  \begin{align}
  \LP^{df} := \Bigl\{ (x,f) \mathrel{}\Bigm\lvert
   f^t(\delta^+(i)) - f^t(\delta^-(i)) &=
    \begin{cases}
         1,&\text{if $i = r_{\tau(t)}$}\\
        -1,&\text{if $i = t$}\\
         0,&\text{otherwise}
    \end{cases}                && \begin{aligned} &\tforall\ i \in V\\ &\tandall\ t \in \nonrootterms\end{aligned}\label{ip:dir-flow-1}\\
       0 \leq \fij^s + \fji^t &\leq \xij && \begin{aligned} &\tforall\ \edgeij \in E,\\ &\text{all}\ s,t \in T^k\setminus\{r^k\}\\ &\tandall\ k \in [K]\end{aligned}\label{ip:dir-flow-2}\\[\baselineskip]
       \fij^t, \fji^t &\in [0,1] && \begin{aligned} &\tforall\ \edgeij \in E\\ &\tandall\ t\in\nonrootterms\end{aligned}\label{ip:dir-flow-3}\\
       x_{ij} &\in [0,1] && \tforall\ \{i,j\} \in E\, \Bigr\}. \label{ip:dir-flow-4}
   \end{align}
\end{subequations}
The corresponding Steiner Tree formulation of~\eqref{ip:dir-flow} goes back to \cite{Martin1986,BMW1989}.
Here, the important change is in constraint~\eqref{ip:dir-flow-2} that forces any two flows $f^s$ and $f^t$ to agree on the orientation of any edge $\{i,j\}$ given that $s$ and $t$ belong to the same terminal set~$T^k$.
Again, this formulation can be turned into an equivalent cut-set formulation in which we have additional decision variables $y^k_{ij}, y^k_{ji}$ for each $k \in [K]$ and each $\edgeij \in E$.
\begin{align}
  \min\cset{c^\transp x}{ (x,y) \in \LP^{dc}\ \tandinteger} \tag{IPdc} \label{ip:dir-cut}
\end{align}
where
\begin{subequations}\label{lp:dir-cut}
\begin{align}
  \LP^{dc} := \Bigl\{ (x,y) \mathrel{}\Bigm\lvert y^k(\delta^+(S)) &\geq 1 && \begin{aligned} &\tforall\ k \in [K]\\ &\tandall\ S \in \relcuts^k\end{aligned}\label{ip:dir-cut-1}\\
       y^k_{ij} + y^k_{ji} &\leq x_{ij} && \begin{aligned} &\tforall\ \{i,j\} \in E\\ &\tandall\ k \in [K]\end{aligned}\\
       y^k_{ij}, y^k_{ji} &\in [0,1] && \begin{aligned} &\tforall\ \{i,j\} \in E\\ &\tandall\ k \in [K]\end{aligned}\\
       x_{ij} &\in [0,1] && \tforall\ \{i,j\} \in E\,\Bigr\}.
\end{align}
\end{subequations}
Invoking Gale's Theorem~\cite{Gale1957}, we can find a consistent orientation of the edges for each terminal set $T^k$ if and only if all cuts in~$\relcuts^k$ have at least one outgoing edge.
Thus, the projection of $\LP^{df}$ onto the $x$-variables is exactly the same as the projection of $\LP^{dc}$ onto the $x$-variables and we see that the linear programming relaxations of~\eqref{ip:dir-flow} and of~\eqref{ip:dir-cut} yield the same lower bound.
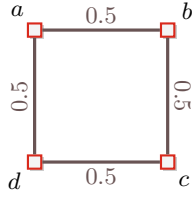
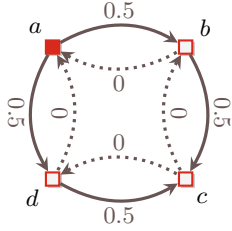
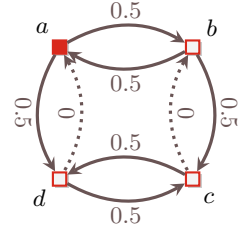
\begin{figure}[t]
\begin{subfigure}[t]{0.3\textwidth}
  \centering
  \begin{tikzpicture}
      \useasboundingbox (-0.7,-0.7) rectangle (2,2);
  \begin{scope}[scale=1.75]
    \draw (0,0) node[tA,label={[mylabel]210:$d$}] (v1) {};
    \draw (0,1) node[tA,label={[mylabel]120:$a$}] (v2) {};
    \draw (1,1) node[tA,label={[mylabel]30:$b$}] (v3) {};
    \draw (1,0) node[tA,label={[mylabel]300:$c$}] (v4) {};
  \end{scope}
  \begin{scope}[myedge, mylabel, sloped, above]
  \draw (v1) -- node{0.5} (v2);
  \draw (v1) -- node[below]{0.5} (v4);
  \draw (v2) -- node{0.5} (v3);
  \draw (v3) -- node{0.5} (v4);
  \end{scope}
\end{tikzpicture}
\caption{A feasible solution for~\eqref{lp:undir-cut}.
The edges $\{c,d\}$ and $\{b,c\}$ cover the cuts $\delta(\{a,b,c\})$ and $\delta(\{a,b,d\})$.}
\end{subfigure}\qquad
\begin{subfigure}[t]{0.3\textwidth}
  \centering
  \begin{tikzpicture}
   \useasboundingbox (-0.7,-0.7) rectangle (2,2);
  \begin{scope}[scale=1.75]
    \draw (0,0) node[tA,label={[mylabel]210:$d$}] (v1) {};
    \draw (0,1) node[rA,label={[mylabel]120:$a$}] (v2) {};
    \draw (1,1) node[tA,label={[mylabel]30:$b$}] (v3) {};
    \draw (1,0) node[tA,label={[mylabel]300:$c$}] (v4) {};
  \end{scope}
  \begin{scope}[myedge, bend left, ->, >=stealth, sloped, above]
  \draw[bend right] (v2) to node[mylabel,below]{0.5} (v1);
  \draw[bend right] (v1) to node[mylabel,below]{0.5} (v4);
  \draw             (v2) to node[mylabel]{0.5} (v3);
  \draw             (v3) to node[mylabel]{0.5} (v4);
  \end{scope}
  \begin{scope}[myedge, bend right, <-, >=stealth, dotted, sloped, below]
  \draw[bend left] (v2) to node[mylabel]{0} (v1);
  \draw            (v2) to node[mylabel]{0} (v3);
  \draw            (v3) to node[mylabel,above]{0} (v4);
  \draw[bend left] (v1) to node[mylabel,above]{0} (v4);
  \end{scope}
\end{tikzpicture}
  \caption{An \emph{infeasible} solution for~\eqref{lp:dir-cut}.
   The \emph{arcs} $(d,c)$ and $(b,c)$ cover the cut~$\delta(\{a,b,d\})$,
   but not the cut~$\delta(\{a,b,c\})$. 
  }
\end{subfigure}\qquad
\begin{subfigure}[t]{0.3\textwidth}
  \centering
  \begin{tikzpicture}
      \useasboundingbox (-0.7,-0.7) rectangle (2,2);
  \begin{scope}[scale=1.75]
    \draw (0,0) node[tA,label={[mylabel]210:$d$}] (v1) {};
    \draw (0,1) node[rA,label={[mylabel]120:$a$}] (v2) {};
    \draw (1,1) node[tA,label={[mylabel]30:$b$}] (v3) {};
    \draw (1,0) node[tA,label={[mylabel]300:$c$}] (v4) {};
  \end{scope}
  \begin{scope}[myedge, bend left, ->, >=stealth, sloped, above]
  \draw[bend right] (v2) to node[mylabel,below]{0.5} (v1);
  \draw[bend right] (v1) to node[mylabel,below]{0.5} (v4);
  \draw             (v2) to node[mylabel]{0.5} (v3);
  \draw             (v3) to node[mylabel]{0.5} (v4);
  \end{scope}
  \begin{scope}[myedge, bend right, <-, >=stealth, dotted, sloped, below]
  \draw[bend left] (v2) to node[mylabel]{0} (v1);
  \draw[solid]     (v2) to node[mylabel]{0.5} (v3);
  \draw            (v3) to node[mylabel,above]{0} (v4);
  \draw[bend left,solid] (v1) to node[mylabel,above]{0.5} (v4);
  \end{scope}
\end{tikzpicture}
\caption{A feasible solution for~\eqref{lp:dir-cut}. 
Additional capacity is needed on $(c,d)$ and $(b,a)$ to cover all relevant cuts.
}
\end{subfigure}
\caption{\label{fig:cut-undir-vs-dir}An example where the directed cut-set relaxation~\eqref{lp:dir-cut} yields a stronger linear programming bound than the undirected cut-set relaxation~\eqref{lp:undir-cut}.
The instance has a single terminal set that contains all four nodes of the graph. 
Node $a$ has been chosen as the root.
We assume unit costs on the edges.
}
\end{figure} 
If all $|T^k|=2$ for all $k \in [K]$, then~$\LP^{df}$ and~$\LP^{dc}$ are equivalent to~$\LP^{uf}$ and~$\LP^{uc}$, respectively. 
Otherwise, Figure~\ref{fig:cut-undir-vs-dir} shows why the linear programming bound obtained from the former two formulations is potentially stronger than the one obtained from the latter.
On the other hand, the bounds obtained from $\LP^{df}$ and~$\LP^{dc}$ cannot be weaker than the ones from $\LP^{uf}$ and~$\LP^{uc}$, respectively: 
The constraints~\eqref{ip:dir-flow-1}--\eqref{ip:dir-flow-4} are a subset of the constraints~\eqref{ip:undir-flow-1}--\eqref{ip:undir-flow-4} and thus $\LP^{df}$ is contained in $\LP^{uf}$.

\begin{lemma}
\label{lemma:df:vs:dc}
$\Proj_x(\LP^{df}) = \text{Proj}_x(\LP^{dc})$.\qed
\end{lemma}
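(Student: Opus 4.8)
The plan is to follow the template of Lemma~\ref{lemma:uc:vs:uf} -- one application of the max-flow min-cut theorem per commodity -- but to insert a preliminary step dealing with the per-terminal-set coupling constraint~\eqref{ip:dir-flow-2}, which has no counterpart in the undirected case. The point is that in $\LP^{df}$ the flows of one terminal set $T^k$ are coupled \emph{pairwise through the edge capacity $x$}, whereas in $\LP^{dc}$ they are coupled \emph{through a common capacity vector $y^k$}; so I would first show that these two forms of coupling are equivalent. Throughout, $\bidir{G}$ denotes the bidirection of $G$, i.e. each edge $\edgeij$ is split into the two arcs $(i,j)$ and $(j,i)$.

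\emph{Key reformulation.} Fix $k \in [K]$ and $x \in [0,1]^E$. A family of feasible unit flows $(f^t)_{t \in T^k \setminus \{r^k\}}$ in $\bidir{G}$ satisfies~\eqref{ip:dir-flow-2} (for all $\edgeij \in E$) if and only if there is a vector $y^k$ with $y^k_{ij}, y^k_{ji} \geq 0$ and $y^k_{ij} + y^k_{ji} \leq x_{ij}$ for all $\edgeij \in E$ such that $f^t_{ij} \leq y^k_{ij}$ and $f^t_{ji} \leq y^k_{ji}$ for every $t \in T^k \setminus \{r^k\}$. The ``if'' direction is immediate, since $0 \leq f^s_{ij} + f^t_{ji} \leq y^k_{ij} + y^k_{ji} \leq x_{ij}$. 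For ``only if'', I would take the coordinate-wise maxima $y^k_{ij} := \max_t f^t_{ij}$ and $y^k_{ji} := \max_t f^t_{ji}$ over $t \in T^k \setminus \{r^k\}$ (empty maximum $= 0$); then each $f^t$ is dominated arc-wise by $y^k$ by construction, and $y^k_{ij} + y^k_{ji} = f^{s^\ast}_{ij} + f^{t^\ast}_{ji} \leq x_{ij}$, where $s^\ast, t^\ast$ are the (possibly equal) maximizing terminals -- this is exactly where one uses that~\eqref{ip:dir-flow-2} ranges over \emph{all} ordered, not necessarily distinct, pairs from $T^k \setminus \{r^k\}$.

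\emph{The two inclusions.} For $\Proj_x(\LP^{df}) \subseteq \Proj_x(\LP^{dc})$: given $x$ with witnessing flows $(f^t)_{t \in \nonrootterms}$, define $y^k$ for each $k$ by the maxima above. The reformulation yields the coupling constraints $y^k_{ij} + y^k_{ji} \leq x_{ij}$, and for any $k$ and $S \in \relcuts^k$ I would pick $t \in T^k$ with $r^k \in S$ and $t \notin S$ (so $t \in \nonrootterms$), sum the conservation equations~\eqref{ip:dir-flow-1} of $f^t$ over $i \in S$ to obtain $f^t(\delta^+(S)) - f^t(\delta^-(S)) = 1$, and conclude $y^k(\delta^+(S)) \geq f^t(\delta^+(S)) \geq 1$; hence $(x,y) \in \LP^{dc}$. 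For the reverse inclusion: given $x$ with witnessing vectors $(y^k)_{k \in [K]}$, fix $t \in \nonrootterms$ and set $k := \tau(t)$; every cut-set separating $r^k$ from $t$ is relevant for $T^k$ and so has $y^k(\delta^+(\cdot)) \geq 1$, whence the max-flow min-cut theorem provides a unit $r^k$-$t$-flow $f^t$ in $\bidir{G}$ with $f^t_{ij} \leq y^k_{ij}$ and $f^t_{ji} \leq y^k_{ji}$. Since for a fixed $k$ all these flows are dominated by the common capacity $y^k$, the ``if'' direction of the reformulation gives~\eqref{ip:dir-flow-2}, and the remaining box constraints follow from $0 \leq y^k \leq x \leq 1$, so $(x,f) \in \LP^{df}$.

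The only step that is not pure bookkeeping is the reformulation lemma -- more precisely, the observation that quantifying~\eqref{ip:dir-flow-2} over all ordered pairs within a terminal set is exactly what makes the coordinate-wise maxima a valid common capacity. Everything after that mirrors Lemma~\ref{lemma:uc:vs:uf}: summing flow conservation over a cut-set, and one invocation of max-flow min-cut per non-root terminal.
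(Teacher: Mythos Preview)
Your argument is correct and is precisely what the paper has in mind: the lemma is stated with an immediate \qed, the only justification being the sentence preceding it that invokes Gale's max-flow min-cut theorem per terminal set. Your ``key reformulation'' via $y^k_{ij} := \max_t f^t_{ij}$ is exactly the bridge between the pairwise coupling~\eqref{ip:dir-flow-2} and the common-capacity coupling in~$\LP^{dc}$, and the paper uses this same max-trick explicitly later in the proof sketch of Lemma~\ref{lemma:edc:vs:dc}.
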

\begin{lemma}
\label{lemma:uf:vs:df}
$\Proj_x(\LP^{uf}) \supsetneq \text{Proj}_x(\LP^{df})$.\qed
\end{lemma}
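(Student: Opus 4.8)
The plan is to split the statement into the inclusion $\Proj_x(\LP^{uf})\supseteq\Proj_x(\LP^{df})$ and the strictness of that inclusion. The first part is already essentially recorded in the discussion preceding the lemma: taking $s=t$ in~\eqref{ip:dir-flow-2} recovers~\eqref{ip:undir-flow-2}, so every constraint defining $\LP^{df}$ is also a constraint of $\LP^{uf}$, whence $\LP^{df}\subseteq\LP^{uf}$ and therefore $\Proj_x(\LP^{df})\subseteq\Proj_x(\LP^{uf})$. It then remains to produce a single point witnessing strictness, and the instance of Figure~\ref{fig:cut-undir-vs-dir} is exactly such a witness: let $G$ be the $4$-cycle $a$--$b$--$c$--$d$--$a$, let the only terminal set be $T^1=\{a,b,c,d\}$ with root $r^1=a$, and set $\bar x_e:=\tfrac12$ for every edge $e$.

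First I would check $\bar x\in\Proj_x(\LP^{uf})$. By Lemma~\ref{lemma:uc:vs:uf} this is the same as $\bar x\in\LP^{uc}$, i.e.\ $\bar x(\delta(S))\ge 1$ for all $S\in\relcuts$. Since every node is a terminal, a cut-set is relevant as soon as $a\in S\subsetneq V$; and in a cycle every proper nonempty node set $S$ satisfies $\lvert\delta(S)\rvert\ge 2$, so $\bar x(\delta(S))=\tfrac12\lvert\delta(S)\rvert\ge 1$. Next I would show $\bar x\notin\Proj_x(\LP^{df})$, which by Lemma~\ref{lemma:df:vs:dc} reduces to showing $\bar x\notin\Proj_x(\LP^{dc})$: there is no $y=y^1$ with $y_{ij}+y_{ji}\le\bar x_{ij}=\tfrac12$ on every edge and $y(\delta^+(S))\ge 1$ for every $S\in\relcuts^1$. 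This is a short forcing chain. The relevant cut $S=\{a\}$ gives $y_{ab}+y_{ad}\ge1$, forcing $y_{ab}=y_{ad}=\tfrac12$ (and hence $y_{ba}=y_{da}=0$); the relevant cut $S=\{a,b,d\}$ (which omits the terminal $c$) has $\delta^+(S)=\{(b,c),(d,c)\}$, so $y_{bc}+y_{dc}\ge1$, forcing $y_{bc}=y_{dc}=\tfrac12$ and $y_{cb}=0$; but then the relevant cut $S=\{a,c,d\}$ (which omits the terminal $b$) has $\delta^+(S)=\{(a,b),(c,b)\}$ and requires $y_{ab}+y_{cb}\ge1$, while $y_{ab}+y_{cb}=\tfrac12<1$ — a contradiction. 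Combining the three parts yields $\Proj_x(\LP^{uf})\supsetneq\Proj_x(\LP^{df})$.

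The only step that is more than bookkeeping is the non-membership $\bar x\notin\Proj_x(\LP^{dc})$, and even there the obstacle is mild: one must observe that each capacity bound $y_{ij}+y_{ji}\le\tfrac12$ combined with a cut inequality $y_{pq}+y_{rs}\ge1$ having $y_{pq},y_{rs}\le\tfrac12$ pins both summands to exactly $\tfrac12$, so the forcing chain leaves no freedom, and that the three cut-sets used genuinely lie in $\relcuts^1$ because each contains $r^1=a$ and omits a terminal. I expect the write-up to be short; alternatively one may phrase the last step through Gale's theorem as in the surrounding text, arguing that no fractional orientation of capacity $\tfrac12$ on each edge can cover all relevant directed cuts, but the direct forcing argument above is the most transparent.
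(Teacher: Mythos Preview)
Your proposal is correct and mirrors exactly what the paper does: the inclusion is the constraint-containment observation made in the text just before the lemma, and strictness is witnessed by the $4$-cycle instance of Figure~\ref{fig:cut-undir-vs-dir} with $\bar x\equiv\tfrac12$, which you verify in detail via Lemmas~\ref{lemma:uc:vs:uf} and~\ref{lemma:df:vs:dc}. One small wording slip to fix: after observing that taking $s=t$ in~\eqref{ip:dir-flow-2} recovers~\eqref{ip:undir-flow-2}, the correct conclusion is that every constraint of $\LP^{uf}$ is among those of $\LP^{df}$ (not the other way around), which is what gives $\LP^{df}\subseteq\LP^{uf}$.
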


\subsection{Edmond's Tree Polytope}

Another Steiner Tree ILP formulation~\cite{Goemans1994,Lucena1993,MPL1994} is based on Edmond's complete description of the tree polytope~\cite{Edmonds2003}. 
It is a natural idea to generalize this formulation to forests by simply enforcing that any connected component of the solution is a Steiner tree.
To this end, we introduce a variable $x^k_{ij}$ for each $k\in [K]$ and each $\edgeij \in E$ that select the edges used to connect $T^k$.
A new variable $y^k_i$ exists for each node $i \in V$ and each $k \in [K]$ and decides whether node $i$ is connected to the Steiner Tree of terminal set $T^k$.
The coupling variables $x_{ij}$ for each edge $\edgeij \in E$ decide if the edge~$\edgeij$ is included in the forest.
\begin{align}
  \min\cset{c^\transp x}{(x,y) \in \LP^{lt}\ \tandinteger} \tag{IPlt} \label{ip:layered-tree}
\end{align}
where 
\begin{subequations}
\begin{align}
\LP^{lt} := \Bigl\{ (x,y) \mathrel{}\Bigm\lvert
     y^k(V) - x^k(E)    &= 1           && \tforall\ k \in K\label{ip:layered-tree-1}\\
     y^k(S) - x^k(E[S]) &\geq y^k_i      && \begin{aligned}
                                              &\tforall\ k \in K\\ 
                                              &\tandall\ S \subseteq V\\ 
                                              &\tandall\ i \in S
                                            \end{aligned}\label{ip:layered-tree-2}\\
     x^k_{ij}           &\leq x_{ij}   && \begin{aligned}
                                              &\tforall\ \{i,j\} \in E\\ 
                                              &\tandall\ k \in [K]
                                             \end{aligned}\label{ip:layered-tree-3}\\
     y^k_i              &= 1           && \begin{aligned}
                                              &\tforall\ k \in [K]\\ 
                                              &\tandall\ i \in V
                                             \end{aligned}\label{ip:layered-tree-4}\\
     x^k_{ij}           &\in \{0,1\}   && \begin{aligned}
                                              &\tforall\ k \in [K]\\ 
                                              &\tandall\ \{i,j\} \in E\\
                                             \end{aligned}\label{ip:layered-tree-5}\\
     y^k_i              &\in \{0,1\}   && \begin{aligned}
                                              &\tforall\ k \in [K]\\ 
                                              &\tandall\ i \in V\\
                                            \end{aligned}\label{ip:layered-tree-6}\\
     x_{ij}             &\in \{0,1\}   && \tforall\ \{i,j\} \in E\,\Bigr\}. \label{ip:layered-tree-7}
\end{align}        
\end{subequations}                     
Here, constraint~\eqref{ip:layered-tree-2} ensures that the choice of the $y^k$ is consistent and that the graph that connects the terminals in $T^k$ is cycle-free.
Together with constraint~\eqref{ip:layered-tree-1}, this means that the graph connecting $T^k$ is a tree.

\subsection{Consistent Edge Orientations Across all Terminal Sets}

The strengthened formulations in the previous sections rely on a layering argument: 
They impose that each terminal set -- each layer -- is connected by a Steiner Tree and their
additional strength comes solely from an independent strengthening of each
layer.
As can be seen in Figure~\ref{fig:consistent-orientations}, the coupling between the layers is weak, however. 
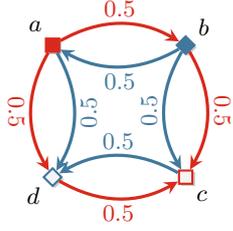
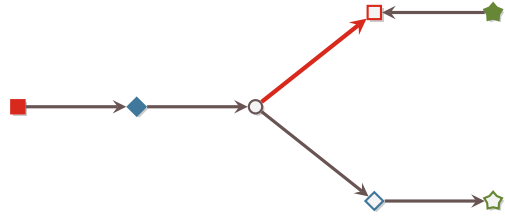
\begin{figure}
\begin{subfigure}[t]{0.45\textwidth}
  \centering
  \begin{tikzpicture}
   \useasboundingbox (-0.7,-0.7) rectangle (2,2);
  \begin{scope}[scale=1.75]
    \draw (0,0) node[tB,label={[mylabel]210:$d$}] (v1) {};
    \draw (0,1) node[rA,label={[mylabel]120:$a$}] (v2) {};
    \draw (1,1) node[rB,label={[mylabel]30:$b$}]  (v3) {};
    \draw (1,0) node[tA,label={[mylabel]300:$c$}] (v4) {};
  \end{scope}
  \begin{scope}[myedge, bend left, ->, >=stealth, sloped, above, myred]
  \draw[bend right] (v2) to node[mylabel,below]{0.5} (v1);
  \draw[bend right] (v1) to node[mylabel,below]{0.5} (v4);
  \draw             (v2) to node[mylabel]{0.5}       (v3);
  \draw             (v3) to node[mylabel]{0.5}       (v4);
  \end{scope}
  \begin{scope}[myedge, bend right, <-, >=stealth, sloped, below, mydarkblue]
  \draw                                     (v1) to node[mylabel]{0.5}         (v2);
  \draw                                     (v2) to node[mylabel]{0.5}         (v3);
  \draw[bend left]                          (v4) to node[mylabel,above]{0.5}   (v3);
  \draw[bend left]                          (v1) to node[mylabel,above]{0.5}   (v4);
  \end{scope}
\end{tikzpicture}
  \caption{%
  Detailed picture of instance (B) from Figure~\ref{fig:lp-comparison}.
  The \textcolor{myred}{red} and \textcolor{mydarkblue}{blue} arcs form a solution for formulation~\eqref{ip:dir-flow} for the \textcolor{myred}{red} (\nodetypeA) and \textcolor{mydarkblue}{blue} (\nodetypeB) terminal set.
  Looking for a Steiner arborescence for each terminal set does not cut off a fractional optimum of cost~$2$. 
  An integer optimum has cost~$3$.}
\end{subfigure}\qquad
\begin{subfigure}[t]{0.45\textwidth}
\begin{tikzpicture}
\begin{scope}[scale=1.25,xscale=1.25]
  \draw (0, 0) node[rA]     (v1) {};
  \draw (1, 0) node[rB]     (v2) {};
  \draw (2, 0) node[mynode] (v3) {};
  \draw (3, 1) node[tA]     (v4) {};
  \draw (3,-1) node[tB]     (v5) {};
  \draw (4, 1) node[rC]     (v6) {};
  \draw (4,-1) node[tC]     (v7) {};
\end{scope}
\begin{scope}[myedge, ->, >=stealth]
  \draw (v1) to (v2);
  \draw (v2) to (v3);
  \draw[ultra thick, myred] (v3) to (v4);
  \draw (v3) to (v5);
  \draw (v6) to (v4);
  \draw (v5) to (v7);
\end{scope}
\end{tikzpicture}
\caption{An example graph with three terminal sets. 
No choice of root nodes admits an orientation of the edges such that there is a directed path from~\roottypeA\ to~\nodetypeA, from~\roottypeB\ to~\nodetypeB, and from~\roottypeC\ to~\nodetypeC.
}
\end{subfigure}
\caption{\label{fig:consistent-orientations}
Subfigure (a) shows an example where the constraint $f^s_{ij} + f^t_{ji} \leq 1$ for $s \in T^k, t \in T^{\ell}, k \not= \ell$ would cut off fractional points.
Subfigure (b) shows why this additional constraint is not valid for the Steiner Forest polytope.}
\end{figure}
One way to improve the coupling of the layers would be to enforce a consistent orientation of the edges across the terminal sets (as opposed to having a consistent orientation of each terminal set only).
At first glance, we could try to achieve such an orientation by including constraint~\eqref{ip:dir-flow-2} in~\eqref{ip:dir-flow} for terminals that do not lie in the same terminal set. 
Unfortunately, this extension cuts of optimum integer solutions, see Figure~\ref{fig:consistent-orientations}: 
There are instances where the edges cannot be oriented such that they point away from all root nodes. 
In other words: Generally, there does not exist an orientation of the edges
of a feasible Steiner Forest such that there is a directed path from $r^k$ to
all~$t \in \nonrootterms$ \emph{for all~$k \in [K]$}.
This remains true even if we do not fix the choice of root nodes.

\begin{figure}
\centering
  \begin{tikzpicture}
    \begin{scope}[scale=1.75]
      \draw (0,0) node[tB,label={[mylabel]210:$d$}] (v1) {};
      \draw (0,1) node[rA,label={[mylabel]120:$a$}] (v2) {};
      \draw (1,1) node[rB,label={[mylabel]30:$b$}]  (v3) {};
      \draw (1,0) node[tA,label={[mylabel]300:$c$}] (v4) {};
    \end{scope}
    
    \begin{scope}[yscale=0.9,xscale=1.25, xshift=3cm, yshift=1cm]
      \draw (0, 0) node[rC,label={[mylabel]210:$e$}]     (w1) {};
      \draw (1, 0) node[rD,label={[mylabel]210:$f$}]     (w2) {};
      \draw (2, 0) node[mynode,label={[mylabel]210:$g$}] (w3) {};
      \draw (3, 1) node[tC,label={[mylabel]120:$h$}]     (w4) {};
      \draw (3,-1) node[tD,label={[mylabel]210:$i$}]     (w5) {};
      \draw (4, 1) node[rE,label={[mylabel]120:$j$}]     (w6) {}; 
      \draw (4,-1) node[tE,label={[mylabel]210:$k$}]     (w7) {};
    \end{scope}
    \begin{scope}[myedge, ->, >=stealth]
      \draw (w1) to (w2);
      \draw (w2) to (w3);
      \draw (w3) to (w4);
      \draw (w3) to (w5);
      \draw (w4) to (w6);
      \draw (w5) to (w7);
    \end{scope}
    
    \begin{scope}[myedge, ->, >=stealth]
      \draw (v2) -- (v1);
      \draw (v2) -- (v3);
      \draw (v3) -- (v4); 
    \end{scope}
  \end{tikzpicture}
\caption{\label{fig:mr-orientation}
An instance of the Steiner Forest problem that has five terminal sets (\nodetypeA, \nodetypeB, \nodetypeC, \nodetypeD, and \nodetypeE).
We assume that the terminal sets are indexed in this order.
The figure shows a consistent orientation of the edges in the sense of~\cite{MR2005}: 
All arcs in the connected component on the left point away from the lowest index root~$a$~(\roottypeA); all arcs in the connected component on the right point away from the lowest index root~$e$~(\roottypeC).}
\end{figure}
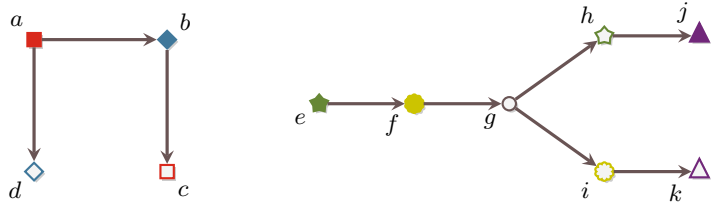
On the other hand, Magnanti and Raghavan~\cite{MR2005} observe that indeed any Steiner Forest can be oriented such that across all terminal sets, all edges consistently point away from \emph{some} root node:
Since each connected component $W$ of a feasible Steiner Forest is free of cycles, we can assign an orientation to all edges such that there is a directed path from the lowest index root node $r^\ell$ in $W$ to all terminals (including other root nodes) in $W$. 
See Figure~\ref{fig:mr-orientation} for an example. 
Thus, finding a consistent orientation for a given feasible Steiner Forest is easy. 
Enforcing such an orientation in a ILP is not, however, since we do not know the connected components of an optimum Steiner Forest a priori.
The model by Magnanti and Raghavan approaches the problem by looking for a directed graph $F=(V,A)$ with the following properties.
\begin{enumerate}
  \item For each $k \in [K]$ and each terminal $t \in T^k\setminus\{r^k\}$, there is a unique $\ell \leq k$ for which $F$ contains a directed $r^\ell$-$t$-path.
  \item If in the above condition we have $\ell < k$, then there is a directed $r^\ell$-$r^k$-path in $F$.
  \item For all edges $\{i,j\} \in E$, the digraph $F$ contains at most one of $(i,j)$ and $(j,i)$.
\end{enumerate}
If $F$ satisfies properties (1) and (2), then $F$ induces a feasible Steiner Forest. 
To state the model, let now for each $k \in [K]$
\begin{align*}
  \mathcal{O}(r^k) := \bigcup_{\ell=k}^K \bigsetm{(r^\ell,t)}{t \in T^k}
\end{align*}
and define $\mathcal{D} := \mathcal{O}(r_1) \cup \dots \cup\mathcal{O}(r^k)$ be the union of the $O(r^k)$. 
Magnanti and Raghavan consider the set
\begin{align*}
\mathcal{C}:= \mathcal{O}(r_1) \times \dots \times \mathcal{O}(r^K).
\end{align*}

This leads to the formulation proposed in~\cite[Formulation (14)]{MR2005}.
The formulation uses flow variables to check the existence of the paths from conditions (1) and (2):
The flow defined by $f^{s,t}$ establishes a directed path from $s$ to $t$, where $(s,t) \in \mathcal{D}$.
Here, the variables $y_{ij}$ and $y_{ji}$ decide if the arc $(i,j)$ and the arc $(j,i)$, respectively, can carry flow, for all $\{i,j\} \in E$, while $x_{ij}$ decides if the edge $\{i,j\}$ is included in the forest.
In the form published in~\cite{MR2005}, the formulation has $O(|E|\cdot K + |\mathcal{C}|^2 + |\mathcal{C}|\cdot |E|)$ constraints.
Raghavan~\cite[Formulation (7.4)]{Raghavan1995} shows how to reduce this number to~$O(|E|\cdot K + |\mathcal{C}|\cdot |E|)$.
We use the reduced variant~\eqref{ip:mr} in our experiments.
\begin{align}
  \min\cset{c^\transp x}{ (x, f) \in \LP^{mr}\ \tandinteger} \tag{IPmr} \label{ip:mr}
\end{align}
where
\begin{subequations}
\begin{align}
  \LP^{mr} := \Bigl\{ (x,f) \mathrel{}\Bigm\lvert 
   f^{s,t}(\delta^+(i)) - f^{s,t}(\delta^-(i)) &\begin{cases}
                                               \leq 1  &\text{if $i=s$}\\
                                               \geq -1 &\text{if $i=t$}\\
                                               =0      &\text{otherwise}
                                              \end{cases} && 
                                              \begin{aligned}
                                                &\tforall\ i \in V\\
                                                &\tandall\ (s,t) \in \mathcal{D}
                                              \end{aligned}\label{ip:mr-1}\\
   -\sum_{\ell=0}^{\tau(t)}\ \srsum{\{i,t\} \in E} f^{r^\ell,t}_{it} &= -1 && 
                                              \begin{aligned}
                                                &\tforall\ t \in \nonrootterms                                                
                                              \end{aligned}\label{ip:mr-2}\\
   \srsum{\{i,t\}\in E} f^{r^\ell,t}_{it} &\leq \srsum{\{i,r^k\}\in E} f_{ir^k}^{r^\ell,r^k} &&
                                              \begin{aligned}
                                                &\tforall\ t \in \nonrootterms\\
                                                &\tandall\ \ell \leq \tau(t)
                                              \end{aligned}\label{ip:mr-3}\\[\baselineskip]
   \srsum{(s,t) \in C} f^{s,t}_{ij}  &\leq y_{ij} &&
                                               \begin{aligned}
                                                 &\tforall\ \{i,j\} \in E\\
                                                 &\tandall\ C \in \mathcal{C}
                                               \end{aligned}\label{ip:mr-4}\\
   \srsum{(s,t) \in C} f^{s,t}_{ji}  &\leq y_{ji} &&
                                               \begin{aligned}
                                                 &\tforall\ \{i,j\} \in E\\
                                                 &\tandall\ C \in \mathcal{C}
                                               \end{aligned}\label{ip:mr-5}\\
   y_{ij} + y_{ji}                   &\leq x_{ij} &&
                                                \tforall\ \{i,j\} \in E\label{ip:mr-6}\\[\baselineskip]
   \sum_{i \in V} \srsum{(s,t)\in C} f^{s,t}_{ij} &\leq 1 &&
                                                \begin{aligned}
                                                 &\tforall\ j \in V\\
                                                 &\tandall\ C \in \mathcal{C}
                                                \end{aligned}\label{ip:mr-7}\\[\baselineskip]
   f_{ij}^{s,t}, f_{ji}^{s.t} &\in [0,1]        &&
                                                \begin{aligned}
                                                &\tforall\ \{i,j\} \in E\\
                                                &\tandall\ (s,t) \in \mathcal{D}
                                                \end{aligned}\label{ip:mr-8}\\
   y_{ij}, y_{ji}, x_{ij} &\in [0,1]            && \tforall\ \{i,j\} \in E\Bigr\}
\end{align}
\end{subequations}
The constraints~\eqref{ip:mr-2} ensures that each terminal receives at least one unit of flow, i.e. they ensure property (1).
Property (2) is enforced by the constraints~\eqref{ip:mr-3}: 
They ensure that if for some $\ell < k \in [K]$, flow is sent from $r^\ell$ to $t \in T^k$, then at least the same amount of flow is sent from $r^\ell$ to $r^k$.
The constraints~\eqref{ip:mr-4}--\eqref{ip:mr-6} establish property (3). 
Magnanti and Raghavan show that the improved formulation~\eqref{ip:mr} is stronger than the undirected cut formulation~\eqref{ip:undir-cut}.
\begin{lemma}[\cite{MR2005}]\label{lemma:uc:vs:mr}
  $\LP^{uc} \supsetneq \Proj_x(\LP^{mr})$.\qed
\end{lemma}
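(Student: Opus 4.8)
The claim consists of an inclusion, $\Proj_x(\LP^{mr})\subseteq\LP^{uc}$, together with strictness. For the inclusion I would prove the formally stronger statement $\Proj_x(\LP^{mr})\subseteq\Proj_x(\LP^{uf})$ and then invoke Lemma~\ref{lemma:uc:vs:uf}. By Gale's theorem (used exactly as in the proof of that lemma), this amounts to the following: for every $(x,f)\in\LP^{mr}$ and every non-root terminal $t$, writing $k:=\tau(t)$, there is an undirected flow of value one from $r^k$ to $t$ that respects $f^{t}_{ij}+f^{t}_{ji}\le x_{ij}$. Such flows force $x(\delta(S))\ge 1$ for every $S$ separating $r^k$ from $t$, and since every $S\in\relcuts^k$ separates $r^k$ from some such $t$, this yields $x\in\LP^{uc}$.

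To build the witness flow for $t$, put $v_\ell:=f^{r^\ell,t}(\delta^-(t))$ for $\ell\le k$, so that $\sum_{\ell\le k}v_\ell=1$ by constraint~\eqref{ip:mr-2}. For $\ell=k$ the flow $f^{r^k,t}$ already carries $v_k$ units from $r^k$ to $t$. For $\ell<k$, constraint~\eqref{ip:mr-3} guarantees that $f^{r^\ell,r^k}$ carries at least $v_\ell$ units from $r^\ell$ to $r^k$; prepending a value-$v_\ell$ sub-flow of $f^{r^\ell,r^k}$, traversed in reverse, to a value-$v_\ell$ sub-flow of $f^{r^\ell,t}$ produces $v_\ell$ units running from $r^k$ to $t$, the supply at the intermediate node $r^\ell$ cancelling out. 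Superposing over all $\ell\le k$ gives an $r^k$-$t$-flow of value $\sum_\ell v_\ell=1$. That the arcs used can be bounded against $x$ rests on constraints~\eqref{ip:mr-4}--\eqref{ip:mr-6} (which bound each $f^{s,t'}$ arc-wise by $y$) and on $y_{ij}+y_{ji}\le x_{ij}$, using that opposing flow on an edge may be cancelled within an undirected flow.

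The capacity accounting is, I expect, where the real work lies, and the obstacle is common to every variant of the argument. The pairs $(r^\ell,t)$ and $(r^\ell,r^k)$ for $\ell\le k$ all belong to $\mathcal{O}(r^k)$, so every $C\in\mathcal{C}$ contains at most one of them, and constraints~\eqref{ip:mr-4}--\eqref{ip:mr-5} therefore never bound their sum; a careless superposition of the $\ell$-contributions can pile up to $k\,y_{ij}$ on a single edge. One must route the $v_\ell$ units so that, after cancellation, no cut carries more than the total $\sum_\ell v_\ell=1$. I would try to control the overlaps with the per-node unit-inflow bound~\eqref{ip:mr-7}, refining the splicing so that the forward and reversed pieces for different $\ell$ take disjoint routes; alternatively, one can aim for the cut inequality directly: given $S\in\relcuts^k$ with $r^k\in S$ and $t\notin S$, split $\{\ell\le k\}$ by whether $r^\ell\in S$, in which case the $r^\ell$-$t$-flow crosses $\delta(S)$ by at least $v_\ell$, or $r^\ell\notin S$, in which case property~(2) together with~\eqref{ip:mr-3} makes the $r^\ell$-$r^k$-flow cross $\delta(S)$ by at least $v_\ell$, and then add up to obtain $x(\delta(S))\ge\sum_\ell v_\ell=1$ — again modulo the same aggregation step.

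Strictness is witnessed by a single point. Consider instance~(B) of Figure~\ref{fig:lp-comparison}: a four-cycle whose two terminal sets are the two pairs of opposite corners, with unit edge costs. Every relevant cut of this instance cuts exactly two of the four edges, so $x\equiv\tfrac12$ satisfies $x(\delta(S))=1$ for all $S\in\relcuts$ and hence lies in $\LP^{uc}$ with $c^{\transp}x=2$. On the other hand the linear relaxation of~\eqref{ip:mr} has optimum $3$ on instance~(B) (see the table in Figure~\ref{fig:lp-comparison}), so every point of $\Proj_x(\LP^{mr})$ has objective value at least $3$; in particular $x\notin\Proj_x(\LP^{mr})$, so the inclusion is strict.
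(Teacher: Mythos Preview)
The paper does not give a proof of this lemma: it is stated with a citation to~\cite{MR2005} and closed immediately with \qed. So there is nothing in the paper to compare your argument against. That said, let me comment on the argument itself.

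Your overall plan is right, and the strictness via instance~(B) is clean. The ``aggregation obstacle'' you worry about, however, is a phantom caused by a misreading of $\mathcal{O}(r^k)$. (The displayed definition in the paper has an index slip; the intended meaning, as in~\cite{MR2005,Raghavan1995}, is that $\mathcal{O}(r^\ell)$ collects the commodity pairs \emph{with source $r^\ell$}, namely $\mathcal{O}(r^\ell)=\{(r^\ell,t):\tau(t)\ge \ell\}$.) Under that reading the pairs $(r^\ell,t)$ and $(r^\ell,r^k)$ lie in $\mathcal{O}(r^\ell)$, and for distinct $\ell\le k$ these are \emph{different} factors of the product $\mathcal{C}$. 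Hence a single $C\in\mathcal{C}$ may simultaneously contain one such pair for every $\ell$, and constraints~\eqref{ip:mr-4}--\eqref{ip:mr-5} do bound the relevant sum by $y$.

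Concretely, your cut argument then closes with no residual ``modulo'': given $S\in\relcuts^k$ with $r^k\in S$, $t\notin S$, pick $C\in\mathcal{C}$ whose $\ell$-th coordinate is $(r^\ell,t)$ if $r^\ell\in S$ and $(r^\ell,r^k)$ if $r^\ell\notin S$ (arbitrary choices in the remaining factors). By~\eqref{ip:mr-4}--\eqref{ip:mr-5}, $y_a\ge\sum_{\ell\le k} g^\ell_a$ on every arc $a$, where $g^\ell$ is the chosen flow. Summing over all arcs crossing $S$ in either direction and using flow conservation together with~\eqref{ip:mr-2},~\eqref{ip:mr-3} gives
\[
x(\delta(S))\ \ge\ y(\delta^+(S))+y(\delta^-(S))\ \ge\ \sum_{\ell\le k}\bigl(g^\ell(\delta^+(S))+g^\ell(\delta^-(S))\bigr)\ \ge\ \sum_{\ell\le k} v_\ell\ =\ 1,
\]
since each $g^\ell$ sends net value $\ge v_\ell$ across $S$ (outward if $r^\ell\in S$, inward if not). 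No cancellation tricks or appeals to~\eqref{ip:mr-7} are needed.
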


\subsection{Other Lower Bounds}

Könemann, Leonardi, Schäfer, and van Zwam~\cite{KLSvZ2008a} propose another method to compute a lower bound for an optimum Steiner Forest.
They formulate a linear programming relaxation of an integer linear program that potentially computes suboptimal solutions. 
The linear programming relaxation, however, yields a proper lower bound on the cost of an optimum Steiner Forest.

Observe that we can equivalently state our Steiner Forest instance $(G, T^1,\dots,T^K,c)$ in the following way:
For each $k \in K$ and each $t \in T^k\setminus \{r^k\}$ define a terminal pair $(r^k, t)$.
Denote the resulting set of all terminal pairs as $R$.
Then, a Steiner Forest is feasible for $(G, T^1,\dots,T^K,c)$ if and only if it contains an undirected $s$-$t$-path for all $(s,t) \in R$.
We impose an arbitrary ordering $(s_1, t_1), \dots, (s_L, t_L), L = |R|$, on the terminal pairs.
Denote by $\dist(s,t)$ the shortest path distance from $s$ to $t$ in $G$ with respect to the edge weights~$c$.

We recall that $\relcuts$ is the set of cut-sets $S\subseteq V$ with $s \in S$ and $t \not\in S$ for some $(s,t) \in R$.
Additionally, we define $\nonrelcuts$ to be the set of those cut-sets $S \not\in \relcuts$ with $s,t \in S$ for some $(s,t) \in R$.
For all $\ell \in [L]$, the set $\mathfrak{H}^\ell \subseteq \relcuts$ denots the set of all cuts $S \in \relcuts$ that contain $s^\ell$ or $t^\ell$ and where $\ell$ is the highest index of any terminal contained in the cut. 
Likewise, $\overline{\mathfrak{H}}^\ell \subseteq \nonrelcuts$ is the set of all cuts $S \in \nonrelcuts$ that contain $s^\ell$ and $t^\ell$ where $\ell$ is the highest index of any terminal contained in the cut.

We introduce two variables $y_\ell$ and $\overline{y}_\ell$ for each terminal pair $\ell \in [L]$ and a variable $x_{ij}$ for each edge $\{i,j\} \in E$.  
\begin{align}
  \min\Bigsetm{\sum\nolimits_{\{i,j\}\in E} x_{ij} + \sum\nolimits_{\ell=1}^L \dist(s_\ell,t_\ell)(y_{\ell} + \overline{y}_{\ell})}{(x,y) \in \LP^{klvz}\ \tandinteger} \tag{IPklsvz} \label{ip:klsvz}
\end{align}
where
\begin{subequations}
\begin{align}
  \LP^{klsvz} := \Bigl\{ (x,y) \mathrel{}\Bigm\lvert 
       x(\delta(S)) + y_\ell                &\geq 1 && \begin{aligned} &\tforall\ \ell \in [L]\\ &\tandall\ S \in\mathfrak{H}^\ell\end{aligned}\label{ip:klsvz-1}\\
       x(\delta(S)) + y_\ell + \overline{y}_\ell &\geq 1 && \begin{aligned} &\tforall\ \ell \in [L]\\ &\tandall\ S \in\overline{\mathfrak{H}}^\ell\end{aligned}\label{ip:klsvz-2}\\
       x_{ij} &\in [0,1] && \tforall\ \{i,j\} \in E\\
       y_\ell, \overline{y}_\ell &\in [0,1] &&\tforall\ \ell \in [L]\,\Bigr\}.
\end{align}
\end{subequations}
The idea behind constraint~\eqref{ip:klsvz-1} is that we can either cover any relevant cut with an edge, or we opt to pay for a direct connection between the terminal pair $(s_\ell, t_\ell)$ that is responsible for the cut by setting $y_\ell=1$. 
Könemann et al. show that the optimum value of the linear programming relaxation of~\eqref{ip:klsvz} is a lower bound for the cost of an optimum Steiner Forest of the underlying instance. 
At the same time, they prove that the projection of $\LP^{klsvz}$ onto the $x$-variables is contained in~$\LP^{uc}$, i.e., the linear programming bound obtained from~\eqref{ip:klsvz} is never worse than the one obtained from~\eqref{ip:undir-cut} and~\eqref{ip:undir-flow}.
Moreover, there are instances where $\LP^{klsvz}$ yields a strictly better linear programming bound than~\eqref{ip:undir-cut} and~\eqref{ip:undir-flow}. 
\begin{lemma}[\cite{KLSvZ2008a}]\label{lemma:uc:vs:klsvz}
  The linear programming bound obtained from~\eqref{ip:klsvz} is stronger than the one obtained from~\eqref{ip:undir-cut}.\qed
\end{lemma}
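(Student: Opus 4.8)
\medskip
\noindent\textit{Proof plan.}\ The statement splits into two claims: (a) on every instance the optimum of the linear relaxation of~\eqref{ip:klsvz} is at least the optimum of the linear relaxation of~\eqref{ip:undir-cut}; and (b) on some instance this inequality is strict. That the relaxation of~\eqref{ip:klsvz} is a valid lower bound for the Steiner Forest optimum at all has already been recorded above, so it suffices to compare the two relaxations, which I carry out at the level of optimal objective values.

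For~(a) the plan is to transform an optimal solution $(x,y,\overline y)$ of the relaxation of~\eqref{ip:klsvz} into a point $\tilde x\in\LP^{uc}$ of no larger objective value; then the optimum of~\eqref{ip:undir-cut} is at most $c^\transp\tilde x$, hence at most the optimum of~\eqref{ip:klsvz}. For each terminal pair $\ell\in[L]$ fix the edge set $P_\ell$ of a shortest $s_\ell$-$t_\ell$-path, so $\sum_{e\in P_\ell}c_e=\dist(s_\ell,t_\ell)$, and set $\tilde x_e:=\min\{1,\ x_e+\sum_{\ell\,:\,e\in P_\ell}(y_\ell+\overline y_\ell)\}$ for $e\in E$. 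Then $\tilde x\in[0,1]^E$ by construction, and since truncating at~$1$ only decreases cost, $c^\transp\tilde x\le c^\transp x+\sum_\ell\bigl(\sum_{e\in P_\ell}c_e\bigr)(y_\ell+\overline y_\ell)=c^\transp x+\sum_\ell\dist(s_\ell,t_\ell)(y_\ell+\overline y_\ell)$, which is exactly the objective value of $(x,y,\overline y)$ in~\eqref{ip:klsvz}.

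It remains to verify that $\tilde x$ satisfies the cut inequalities defining $\LP^{uc}$, i.e.\ $\tilde x(\delta(S))\ge1$ for every $S\in\relcuts$, and this is the heart of the argument. The families $\mathfrak{H}^\ell$, $\ell\in[L]$, partition $\relcuts$ (each relevant cut contains a terminal and lies in $\mathfrak{H}^\ell$ for $\ell$ the largest terminal index occurring in it); fix $S$ and the corresponding $\ell$, so \eqref{ip:klsvz-1} gives $x(\delta(S))+y_\ell\ge1$. If $S$ contains exactly one endpoint of pair $\ell$, then $P_\ell$ crosses $\delta(S)$, and either an edge of $\delta(S)$ was truncated to~$1$ or $\tilde x(\delta(S))\ge x(\delta(S))+(y_\ell+\overline y_\ell)\ge1$; done. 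The delicate case is that $S$ contains \emph{both} $s_\ell$ and $t_\ell$ (so the added copy of $P_\ell$ need not cross $\delta(S)$) yet is relevant because it separates some other pair. This is handled by processing the pairs in order of decreasing index and, at pair $\ell$, contracting the already-added paths $P_{\ell+1},\dots,P_L$ — which in particular identifies the two endpoints of each of those pairs — so that in the contracted instance the cut $S$ has strictly smaller top index, or has become non-relevant and is then covered through the constraints~\eqref{ip:klsvz-2} attached to the families $\overline{\mathfrak{H}}^\ell$. This induction, which is exactly where the design of $\mathfrak{H}^\ell$ and $\overline{\mathfrak{H}}^\ell$ enters, is the step I expect to be the main obstacle; the rest is routine bookkeeping.

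For~(b) it suffices to exhibit a single separating instance. One chooses an instance on which the relaxation of~\eqref{ip:undir-cut} reaches its optimum by spreading fractional weight thinly over edges in a pattern that~\eqref{ip:klsvz} cannot match for free: to cover the responsible cut cheaply, \eqref{ip:klsvz} would have to raise some $y_\ell$, incurring the penalty $\dist(s_\ell,t_\ell)$, and avoiding this forces extra $x$-weight because~\eqref{ip:klsvz} additionally constrains the non-relevant cuts in $\overline{\mathfrak{H}}^\ell$. Evaluating both relaxations on such an instance — e.g.\ the one given by Könemann et al.~\cite{KLSvZ2008a} — certifies the strict inequality, which together with~(a) completes the proof.
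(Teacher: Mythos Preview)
The paper does not supply a proof of this lemma: it is attributed to~\cite{KLSvZ2008a} and closed with a bare \qed, so there is no argument in the paper to compare yours against.

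On the substance of your attempt, the easy steps are fine: the cost bound for~$\tilde x$ is correct, and when $S\in\mathfrak{H}^\ell$ separates the pair~$\ell$ the path~$P_\ell$ crosses $\delta(S)$ and the constraint~\eqref{ip:klsvz-1} finishes the job. The gap is exactly where you flag it. If $S\in\mathfrak{H}^\ell$ contains \emph{both} $s_\ell$ and $t_\ell$ (which is perfectly compatible with the definition of $\mathfrak{H}^\ell$, since $S$ can be relevant by separating a lower-index pair~$m<\ell$), then $P_\ell$ may lie entirely inside~$S$ and contribute nothing to $\tilde x(\delta(S))$; the only constraint you have for this~$S$ is $x(\delta(S))+y_\ell\ge 1$, while the paths that \emph{do} cross~$\delta(S)$ carry weights $y_m+\overline y_m$ for indices~$m$ unrelated to~$y_\ell$. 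Your proposed fix---process pairs in decreasing index and contract the already-added paths---is only a slogan at this point: you do not say how the constraint families $\mathfrak{H}^{\bullet}$ and $\overline{\mathfrak{H}}^{\bullet}$ transform under contraction, why the contracted cut lands in the right family with a usable index, or why the residual~$x$ still satisfies the appropriate inequalities in the smaller instance. Without that bookkeeping the induction does not close, and as written this is a plan rather than a proof.

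For part~(b) you simply defer to~\cite{KLSvZ2008a} for a separating instance; that is acceptable given that the paper itself does the same.
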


\section{New Formulations for the Steiner Forest Problem}

\subsection{A Tree-Based Formulation}
\enlargethispage{2\baselineskip}
We derive another integer linear programming formulation for the Steiner Forest problem based on Edmond's tree polytope~\cite{Edmonds2003}.
This formulation does not have a layer for each terminal set, but instead strives to directly find a feasible forest.
In the formulation, we have a decision variable $x_{ij}$ for each edge $\{i,j\}$ of $G$ and a decision variable $y_i$ for each node $i$ of $G$.
Additionally, the variable $R$ encodes the number of connected components of the solution.
For each pair of terminal sets $T^k$ and $T^{\ell}$, $k, \ell \in [K]$, we introduce a variable $w_{k\ell}$ with the intuition that $w_{k\ell} = 1$ if and only if $T^k$ and $T^\ell$ lie in the same connected component of the solution.
Lastly, we need a variable $a_k$ for each $k\in [K]$ and a variable $z_{ik}$ for each pair of a node $i\in V$ and a terminal set $T^k$.
We will make sure that $z_{ik} = 1$ if node $i$ lies in the same connected component of the solution as $T^k$.
\begin{align}
\min\cset{c^\transp x}{(x,y,w,z,a,R) \in \LP^{et}\ \tandinteger} \tag{IPet}\label{ip:ext-tree}
\end{align}
where
\begin{subequations}
\begin{align}
    \LP^{et} &&:= \Bigl\{ (x,y,w,z,a,R) \mathrel{}\Bigm\lvert
    \label{ip:ext-tree-1}  y(V) - x(E)    &= R                              &&\\
    \label{ip:ext-tree-2} && y(S) - x(E[S]) &\geq y_i                         &&\tforall\ S \subseteq V,\ \tforall\ i \in S\\[1em]
    \label{ip:ext-tree-3} &&         a([K]) &= R                              &&\\
    \label{ip:ext-tree-4} &&            a_k &\leq 1 - w_{k\ell}               &&\tforall\ \ell,k \in [K] \text{ with } k < \ell\\[1em]
    \label{ip:ext-tree-5} && x_{ij} - (1-z_{ik}) - (1-z_{jl}) &\leq w_{kl}    &&\multirow{2}{5cm}[0.25cm]{$\tforall\ \{i,j\}\in E$, $\tforall\ k,\ell \in [K] \text{ with } k < \ell$}\\
    \label{ip:ext-tree-6} && x_{ij} - (1-z_{i\ell}) - (1-z_{jk}) &\leq w_{kl} &&\\[1em]
    \label{ip:ext-tree-7} && x_{ij} + z_{jk} -1 &\geq z_{ik}                  &&\multirow{2}{3.5cm}[0.25cm]{$\tforall\ \{i,j\} \in E$, $\tforall\ k \in [K]$}\\
    \label{ip:ext-tree-8} && x_{ij} + z_{ik} -1 &\geq z_{jk}                  && \\[1em]
    \label{ip:ext-tree-9} && w_{k\ell} + w_{\ell m} -1 &\leq w_{km}           &&\multirow{3}{3.5cm}{$\tforall\ k,l,m \in [K]$, $k<l<m$}\\
    \label{ip:ext-tree-10}&& w_{km} + w_{\ell m} -1 &\leq w_{k\ell}           && \\
    \label{ip:ext-tree-11}&& w_{k\ell} + w_{km} -1 &\leq w_{\ell m}           && \\[1em]
    \label{ip:ext-tree-12}&& z_{ik}             &= 1                          && \multirow{2}{4cm}[0.25cm]{$\tforall\ k\in K$, $\tforall\ i \in T^k$}\\
    \label{ip:ext-tree-13}&& y_{i}              &= 1                          && \\[1em]
                          && x_{ij}             &\in [0,1]                  && \tforall\ \{i,j\} \in E\\
                          && y_i                &\in [0,1]                  && \tforall\ i \in V\\
                          && w_{k\ell}          &\in [0,1]                  && \tforall\ k,\ell \in [K], k < \ell\\
                          && z_{ik}             &\in [0,1]                  && \tforall\ i\in V, k \in [K]\\
                          && a_k                &\in [0,1]                  && \tforall\ k \in [K]\\
                          && R                  &\in [K]                      && \Bigr\}. 
\end{align}
\end{subequations}
\newcommand{\fxystar}{F_{x^\ast,y^\ast}}
We show that the formulation~\eqref{ip:ext-tree} is indeed correct. 
In the following, let $\fxystar = (V_F, E_F)$ with $V_F:=\cset{i\in V}{y^\ast_i=1}$ and $E_F:=\cset{\{i,j\}\in E}{x^\ast_{ij}=1}$ the forest induced by a feasible integer solution $(x^\ast,y^\ast,r^\ast,w^\ast,z^\ast,R^\ast)$ to~\eqref{ip:ext-tree}.
\begin{lemma}
  Let $(x^\ast,y^\ast,a^\ast,w^\ast,z^\ast,R^\ast)$ be a feasible integer solution to~\eqref{ip:ext-tree}.
  Then $\fxystar$ consists of exactly $R^\ast$ connected components.
\end{lemma}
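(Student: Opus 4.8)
The goal is to show that the forest $\fxystar$ induced by a feasible integer solution has exactly $R^\ast$ connected components. The plan is to first establish that $\fxystar$ really is a forest (no cycles), then count its components via the Euler-type identity in constraint~\eqref{ip:ext-tree-1}, and finally relate that count to $R^\ast$ through the auxiliary variables $a^\ast$ via constraint~\eqref{ip:ext-tree-3}.

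\textbf{Step 1: $\fxystar$ is cycle-free.}
I would invoke the standard argument for Edmond's tree polytope, exactly as it is used for constraint~\eqref{ip:layered-tree-2} in the layered tree formulation: constraint~\eqref{ip:ext-tree-2}, specialized to an integer point, says that for every $S\subseteq V$ and every $i\in S$ we have $y^\ast(S)-x^\ast(E[S])\geq y^\ast_i$. Taking $S$ to be the vertex set of any connected subgraph of $\fxystar$ contained in a single component, and $i$ any vertex of $S$ with $y^\ast_i=1$, this forces $|E[S]\cap E_F|\leq |S\cap V_F| - 1$, so no subset of $V_F$ can induce a cycle. Hence each connected component of $\fxystar$ is a tree on its vertex set.

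\textbf{Step 2: counting components.}
Let $c$ be the number of connected components of $\fxystar$ (counting only components that contain at least one node of $V_F$; I should check the convention on isolated nodes — since $y_i\in\{0,1\}$ and edges only connect nodes, a node with $y^\ast_i=0$ contributes nothing to either side, and one can restrict attention to $V_F$). Since every component is a tree, summing $|V_F\cap W| - |E_F\cap E[W]| = 1$ over the $c$ components gives $y^\ast(V) - x^\ast(E) = c$. By constraint~\eqref{ip:ext-tree-1}, $y^\ast(V)-x^\ast(E)=R^\ast$, so $c = R^\ast$. The consistency of this with $a^\ast([K])=R^\ast$ from~\eqref{ip:ext-tree-3} is automatic once Step 1 holds; strictly speaking the lemma as stated only needs~\eqref{ip:ext-tree-1} and~\eqref{ip:ext-tree-2}, so I would keep the argument to those two constraints and not drag in the $a$, $w$, $z$ variables.

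\textbf{Main obstacle.}
The only delicate point is making Step 1 fully rigorous for the forest (rather than tree) case: I must ensure that \eqref{ip:ext-tree-2} is strong enough to rule out cycles that live inside a single component but that the inequality, applied to $S$ equal to the whole vertex set of a multi-component graph, is consistent with more than one component (it gives $y^\ast(V)-x^\ast(E)\geq y^\ast_i = 1$, not $=1$, which is why the equality in~\eqref{ip:ext-tree-1} is needed to pin down the count). The clean way is: apply~\eqref{ip:ext-tree-2} with $S$ ranging over individual component vertex sets to get $\leq$ (tree or forest within each component), then use~\eqref{ip:ext-tree-1} to get the matching $\geq$, so that every component is exactly a tree and there are exactly $R^\ast$ of them. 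I expect the write-up of this balancing argument to be the bulk of the proof; everything else is bookkeeping.
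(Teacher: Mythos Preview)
Your proposal is correct and follows exactly the paper's approach: use constraint~\eqref{ip:ext-tree-2} to rule out cycles, then invoke constraint~\eqref{ip:ext-tree-1} together with the standard forest identity $|V_F|-|E_F|=\text{(number of components)}$ to conclude. The ``main obstacle'' you flag is not really one: once cycle-freeness is established globally, the count follows immediately from~\eqref{ip:ext-tree-1} without any per-component balancing argument, so the write-up can be shorter than you anticipate.
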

\begin{proof}
Since $(x^\ast, y^\ast)$ in particular satisfies the constraints~\eqref{ip:ext-tree-2} for all $S\subseteq V$ and all $i \in S$, the subgraph
$F:=\cset{e\in E}{x^\ast_e=1}$ induced by $x^\ast$ cannot contain any cycles.
Thus, we have $|V[F]| - |F| = R^\ast$ by constraint~\eqref{ip:ext-tree-1} and thus $F$ has exactly $R^\ast$ connected components.
\end{proof}
\begin{lemma}\label{lem:tree-poly-corr-path}
  Let $(x^\ast,y^\ast,a^\ast,w^\ast,z^\ast,R^\ast)$ be a feasible integer solution to~\eqref{ip:ext-tree} and let $P$ be a path in $\fxystar$.
  Assume that $z_{ik} = 1$ for some $i \in V[P]$. Then, we have $z_{jk} = 1$ for all $j \in V(P)$.
\end{lemma}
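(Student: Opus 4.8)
The plan is to exploit constraints~\eqref{ip:ext-tree-7} and~\eqref{ip:ext-tree-8}, which for a fixed $k \in [K]$ and a fixed edge $\{i,j\}\in E$ read $x_{ij}+z_{jk}-1\geq z_{ik}$ and $x_{ij}+z_{ik}-1\geq z_{jk}$. First I would observe that whenever $\{i,j\}$ is an edge of $\fxystar$, i.e.\ $x^\ast_{ij}=1$, these two inequalities simplify to $z^\ast_{jk}\geq z^\ast_{ik}$ and $z^\ast_{ik}\geq z^\ast_{jk}$, hence $z^\ast_{ik}=z^\ast_{jk}$. In other words, the $k$-th connection indicator $z^\ast_{\cdot k}$ takes the same value on both endpoints of any edge of the forest.

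Next, since $P$ is a path in $\fxystar$, every edge of $P$ is an edge of $\fxystar$ and therefore has its $x^\ast$-value equal to $1$. Writing $P$ as a vertex sequence $i=v_0,v_1,\dots,v_p$ with $\{v_{t-1},v_t\}\in E_F$ for all $t\in[p]$, the previous observation gives $z^\ast_{v_{t-1}k}=z^\ast_{v_tk}$ for every $t\in[p]$. A straightforward induction on $t$ then yields $z^\ast_{v_tk}=z^\ast_{v_0k}=z^\ast_{ik}=1$ for all $t\in\{0,\dots,p\}$, i.e.\ $z^\ast_{jk}=1$ for every $j\in V(P)$, as claimed.

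There is essentially no obstacle here: the equality $z^\ast_{ik}=z^\ast_{jk}$ along an edge of the forest follows directly from combining the two inequalities, and does not even use the integrality of the $z$-variables; moreover $P$ is finite, so the induction terminates. The only points to be careful about are that the two inequalities~\eqref{ip:ext-tree-7} and~\eqref{ip:ext-tree-8} must be used \emph{together} — either one alone would only give an inequality between $z^\ast_{ik}$ and $z^\ast_{jk}$ — and that one must use that $P$ lies inside $\fxystar$ (and not merely in $G$) in order to conclude $x^\ast_{ij}=1$ on each of its edges.
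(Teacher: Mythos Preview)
Your argument is correct and essentially identical to the paper's: both proceed by induction along the path, using that constraints~\eqref{ip:ext-tree-7} and~\eqref{ip:ext-tree-8} force $z^\ast_{ik}=z^\ast_{jk}$ whenever $x^\ast_{ij}=1$. The only cosmetic difference is that the paper explicitly reduces (by splitting $P$ at $i$) to the case where $i$ is an endpoint before inducting, whereas you tacitly write $i=v_0$; since your edge-wise equality already shows $z^\ast_{\cdot k}$ is constant along all of $P$, this is harmless.
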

\begin{proof}
  Suppose that $P$ visits the nodes $i_1,\dots,i_s$ in that order.
  We can assume w.l.o.g. that $i=i_1$ since otherwise, we can split $P$ at $i$ and prove the claim separately for the two resulting subpaths.
  If $s=1$, then $z_{i_1,k} = z_{ik} = 1$ and the claim is true.
  Otherwise, assume that $s > 1$:
  By induction, we have $z_{i_{s-1},k}=1$ and since the edge $\{s-1,s\}$ lies on $P\subseteq\cset{e\in E}{x^\ast_e=1}$ it follows that $x_{s-1,s}=1$.
  Thus, we obtain from constraints~\eqref{ip:ext-tree-7} and~\eqref{ip:ext-tree-8} that $z_{sk} = 1$.
\end{proof}
\begin{lemma}\label{lem:tree-poly-corr-merge}
  Let $(x^\ast,y^\ast,a^\ast,w^\ast,z^\ast,R^\ast)$ be a feasible integer solution to~\eqref{ip:ext-tree} and let $P$ be an $s$-$t$-path in $\fxystar$ with $z^\ast_{sk} = z^\ast_{t\ell} = 1$
  for some $k < \ell$, $s\not=t$.
  Then $w^\ast_{k\ell}=1$.
\end{lemma}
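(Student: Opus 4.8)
The plan is to propagate the connectivity labels $z^\ast$ along the whole path $P$ by means of Lemma~\ref{lem:tree-poly-corr-path}, and then to read off $w^\ast_{k\ell}=1$ from constraint~\eqref{ip:ext-tree-5} applied to a single edge of $P$.

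Concretely, I would first apply Lemma~\ref{lem:tree-poly-corr-path} to $P$ with the terminal-set index $k$: since $s\in V(P)$ and $z^\ast_{sk}=1$, this yields $z^\ast_{jk}=1$ for every $j\in V(P)$. Applying the same lemma to $P$ with the index $\ell$ and the hypothesis $z^\ast_{t\ell}=1$ (using $t\in V(P)$) yields $z^\ast_{j\ell}=1$ for every $j\in V(P)$. Since $s\neq t$, the path $P$ contains at least one edge $\{i,j\}$, and because $P$ lies in $\fxystar$ we have $x^\ast_{ij}=1$; moreover $i,j\in V(P)$, so $z^\ast_{ik}=1$ and $z^\ast_{j\ell}=1$. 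The constraint~\eqref{ip:ext-tree-5}, which is present in $\LP^{et}$ because $k<\ell$, then reads $x^\ast_{ij}-(1-z^\ast_{ik})-(1-z^\ast_{j\ell})\le w^\ast_{k\ell}$, i.e.\ $1-0-0\le w^\ast_{k\ell}$. As $w^\ast_{k\ell}\in[0,1]$ and is integral, we conclude $w^\ast_{k\ell}=1$.

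There is no genuinely hard step once Lemma~\ref{lem:tree-poly-corr-path} is available; the two points that need a little care are (i) that $P$ really contains an edge — this is exactly where the hypothesis $s\neq t$ enters — and (ii) that both labels $z^\ast_{\cdot k}$ and $z^\ast_{\cdot\ell}$ must hold at the two endpoints of the chosen edge, which is why one propagates each label over all of $V(P)$ rather than only along the relevant initial segment. Propagating both labels also makes the argument insensitive to which endpoint of the edge plays the role of $i$ and which plays $j$ in~\eqref{ip:ext-tree-5}, so the companion constraint~\eqref{ip:ext-tree-6} is not needed for this direction.
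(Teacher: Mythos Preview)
Your proposal is correct and follows essentially the same approach as the paper: propagate the $z^\ast$-labels along $P$ via Lemma~\ref{lem:tree-poly-corr-path} and then apply the coupling constraint~\eqref{ip:ext-tree-5} to an edge of $P$. The only cosmetic difference is that the paper propagates just the $k$-label, picks the specific edge $\{j,t\}$ incident to $t$, and therefore cites both~\eqref{ip:ext-tree-5} and~\eqref{ip:ext-tree-6} to cover the two possible roles of the endpoints, whereas you propagate both labels over all of $V(P)$ and can hence pick any edge and invoke~\eqref{ip:ext-tree-5} alone.
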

\begin{proof}
  Consider the unique edge $\{j,t\}$ on $P$ that is incident to $t$;
  in particular, let $j \in V$ be on $P$.
  Lemma~\ref{lem:tree-poly-corr-path} yields that $z_{jk} = z_{tk} = 1$.
  Thus, by constraints~\eqref{ip:ext-tree-5} and~\eqref{ip:ext-tree-6}, we have $w_{k\ell} \geq x_{jt} = 1$.
\end{proof}
\begin{definition}
  Let $(x^\ast,y^\ast,a^\ast,w^\ast,z^\ast,R^\ast)$ be a feasible integer solution to~\eqref{ip:ext-tree}.
  We say that the root $r^k$ is active if and only if $a^\ast_k=1$, for all $k \in [K]$.
\end{definition}
The following lemma follows immediately from constraint~\eqref{ip:ext-tree-3}.
\begin{lemma}\label{lem:tree-poly-corr-repr}
Let $(x^\ast,y^\ast,a^\ast,w^\ast,z^\ast,R^\ast)$ be a feasible integer solution to~\eqref{ip:ext-tree}.
Then exactly $R^\ast$ out of $r_1,\dots,r^K$ are active.\qed
\end{lemma}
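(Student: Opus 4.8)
The plan is to read off the claim directly from the integrality of the solution together with constraint~\eqref{ip:ext-tree-3}. First I would recall that the solution $(x^\ast,y^\ast,a^\ast,w^\ast,z^\ast,R^\ast)$ is an \emph{integer} feasible point, so in particular $a^\ast_k \in \Z \cap [0,1] = \{0,1\}$ for every $k \in [K]$. By the preceding definition, the root $r^k$ is active precisely when $a^\ast_k = 1$, so the number of active roots is $\abs{\cset{k \in [K]}{a^\ast_k = 1}}$.

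Next I would observe that, since each $a^\ast_k$ is a $0/1$ value, this count equals $\sum_{k \in [K]} a^\ast_k = a^\ast([K])$. Finally, constraint~\eqref{ip:ext-tree-3} states exactly $a^\ast([K]) = R^\ast$, which gives that the number of active roots is $R^\ast$, as claimed. There is no real obstacle here: the only point worth spelling out is the use of integrality to turn the sum $a^\ast([K])$ into a cardinality, and everything else is immediate from the definition of "active" and from constraint~\eqref{ip:ext-tree-3}. Hence the whole argument is a one-line consequence, matching the remark in the text that the lemma "follows immediately from constraint~\eqref{ip:ext-tree-3}".
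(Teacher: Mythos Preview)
Your argument is correct and matches the paper's approach exactly: the paper gives no explicit proof and simply notes that the lemma ``follows immediately from constraint~\eqref{ip:ext-tree-3}'', which is precisely the one-line reasoning you spell out using the integrality of the $a^\ast_k$ and the definition of ``active''.
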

\begin{lemma}\label{lem:tree-poly-corr-active}
Let $(x^\ast,y^\ast,a^\ast,w^\ast,z^\ast,R^\ast)$ be a feasible integer solution to~\eqref{ip:ext-tree}
and let $F$ be a connected component of $\fxystar$.
Then there is exactly one $k \in [K]$ such that $r^k \in F$ and $r^k$ is active.
\end{lemma}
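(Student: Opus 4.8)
The plan is to combine the structural lemmas already proved with a short counting argument.

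First I would observe that every root belongs to a well-defined connected component of $\fxystar$: since $r^k$ is a terminal of $T^k$, constraint~\eqref{ip:ext-tree-13} forces $y^\ast_{r^k}=1$, so $r^k\in V_F$. This yields a map $\Phi$ from the set of active roots to the set of connected components of $\fxystar$, assigning to each active root the component that contains it.

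The core step is to show that $\Phi$ is injective, i.e.\ that no connected component contains two active roots. Suppose, for contradiction, that $r^k$ and $r^\ell$ are both active and both lie in one component $F$. Since the terminal sets are pairwise disjoint and $r^k\in T^k$, $r^\ell\in T^\ell$, we have $r^k\neq r^\ell$, and we may assume $k<\ell$. Because $F$ is connected, there is an $r^k$-$r^\ell$-path $P$ in $\fxystar$. Constraint~\eqref{ip:ext-tree-12} gives $z^\ast_{r^k,k}=z^\ast_{r^\ell,\ell}=1$, so Lemma~\ref{lem:tree-poly-corr-merge} applies to $P$ and yields $w^\ast_{k\ell}=1$; but then constraint~\eqref{ip:ext-tree-4} forces $a^\ast_k\leq 1-w^\ast_{k\ell}=0$, contradicting that $r^k$ is active.

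Finally I would close the argument by counting. By Lemma~\ref{lem:tree-poly-corr-repr} there are exactly $R^\ast$ active roots, and by the preceding lemma $\fxystar$ has exactly $R^\ast$ connected components. An injective map between finite sets of equal cardinality is a bijection, so $\Phi$ is onto: every component is hit, hence contains at least one active root, and by injectivity exactly one. I do not expect a genuine obstacle here, since the work is carried by Lemmas~\ref{lem:tree-poly-corr-merge} and~\ref{lem:tree-poly-corr-repr}; the one point that needs care is the ``at least one'' direction, which does not follow locally from the constraints but only from the cardinality step, so I would make sure both counts ($R^\ast$ active roots and $R^\ast$ components) are in hand before invoking pigeonhole. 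A secondary detail is verifying the hypotheses of Lemma~\ref{lem:tree-poly-corr-merge}---in particular $s=r^k\neq r^\ell=t$ and $k<\ell$---which is exactly where the disjointness reduction on the terminal sets is used.
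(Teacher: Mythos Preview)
Your proposal is correct and follows essentially the same route as the paper's proof: show that no component can contain two active roots via Lemma~\ref{lem:tree-poly-corr-merge} and constraint~\eqref{ip:ext-tree-4}, then match the $R^\ast$ active roots to the $R^\ast$ components by pigeonhole. Your version is in fact a bit more careful than the paper's---you explicitly verify $r^k\in V_F$, fix $k<\ell$ before invoking Lemma~\ref{lem:tree-poly-corr-merge}, and phrase the counting step as a bijection---but there is no substantive difference in strategy.
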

\begin{proof}
Assume that $r^k, r^\ell \in F$ and that $a^\ast_k=a^\ast_\ell=1$.
In particular, there is an $r^k$-$r^\ell$-path~$P$ in~$F$. 
By the fixing~\eqref{ip:ext-tree-12}, we have $z^\ast_{{r^k},k}=z^\ast_{r^\ell,\ell}=1$ and Lemma~\ref{lem:tree-poly-corr-merge} yields that $w_{k\ell} = 1$.  
This contradicts constraint~\eqref{ip:ext-tree-4} and thus, no connected component of $\fxystar$ can contain two active roots.
On the other hand, by Lemma~\ref{lem:tree-poly-corr-repr} there are exactly as many active representatives as there are connected components in $\fxystar$.
Applying the pidgeon-hole principle tells us that each connected component of $\fxystar$ must contain exactly one active root.
\end{proof}
\begin{lemma}
Let $(x^\ast,y^\ast,a^\ast,w^\ast,z^\ast,R^\ast)$ be a feasible integer solution to~\eqref{ip:ext-tree} and let $F$ be a connected component of $\fxystar$.
Suppose that $F$ contains a terminal node $t \in T^k$ for some $k \in [K]$.
Then $t$ also contains $r^k$.
\end{lemma}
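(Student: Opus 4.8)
The plan is to use the auxiliary variables $z$, $w$, $a$ to pin down which connected component of $\fxystar$ each terminal set "lives in", and then to show that the component of $t$ and the component of $r^k$ must coincide. First I would record two structural facts. By Lemma~\ref{lem:tree-poly-corr-path} together with the connectivity of each component, for every $j\in[K]$ the integer value $z^\ast_{i,j}$ is the same for all nodes $i$ of a fixed component $C$ of $\fxystar$; write $K(C):=\{\,j\in[K]: z^\ast_{i,j}=1\text{ for }i\in V(C)\,\}$. Moreover, if $|V(C)|\ge 2$ and $p<q$ both lie in $K(C)$, then picking any edge $\{i,j\}$ of $C$ and substituting $x^\ast_{ij}=1$, $z^\ast_{ip}=z^\ast_{jq}=1$ into~\eqref{ip:ext-tree-5} gives $w^\ast_{pq}=1$. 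Combining this with~\eqref{ip:ext-tree-4}, the unique active root of $C$ (which exists by Lemma~\ref{lem:tree-poly-corr-active}) must carry the index $\max K(C)$: if $r^\ell$ is that active root then $\ell\in K(C)$ (since $z^\ast_{r^\ell,\ell}=1$ by~\eqref{ip:ext-tree-12}), and if some $q\in K(C)$ had $q>\ell$ then $w^\ast_{\ell q}=1$ and~\eqref{ip:ext-tree-4} would force $a^\ast_\ell=0$, a contradiction.

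Now for the statement. Since $t\in T^k\cap V(F)$, fixing~\eqref{ip:ext-tree-12} gives $z^\ast_{t,k}=1$, so $k\in K(F)$; let $r^\ell$ be the active root of $F$, so $\ell=\max K(F)\ge k$. If $|V(F)|=1$ then $V(F)=\{t\}$ and the only node of $F$ must be its active root, so $t=r^\ell\in T^\ell\cap T^k$, hence $\ell=k$ by disjointness of the terminal sets and $r^k=t\in V(F)$; and if $k=\ell$ we are likewise done. So assume $|V(F)|\ge 2$ and $k<\ell$. Then $w^\ast_{k\ell}=1$ by the first paragraph, whence $a^\ast_k=0$ by~\eqref{ip:ext-tree-4}; in particular $r^k$ is not active. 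Suppose for contradiction that $r^k\notin V(F)$, and let $F'$ be the component containing $r^k$. Then $|V(F')|\ge 2$ (otherwise $r^k$ would be the active root of $F'$, contradicting $a^\ast_k=0$), and $z^\ast_{r^k,k}=1$ gives $k\in K(F')$. Writing $r^m$ for the active root of $F'$, we have $m=\max K(F')\ge k$; moreover $m\ne k$ (else $a^\ast_k=1$) and $m\ne\ell$ (else the root $r^\ell=r^m$ would lie in the two distinct components $F$ and $F'$). Hence $k<m$, and applying the first paragraph inside $F'$ gives $w^\ast_{km}=1$.

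It remains to derive the contradiction from the triangle inequalities on $w$. I now have $w^\ast_{k\ell}=w^\ast_{km}=1$ with $k<\ell$, $k<m$, $\ell\ne m$. Applying~\eqref{ip:ext-tree-11} to the indices $k<\min\{\ell,m\}<\max\{\ell,m\}$ — so that the two $w^\ast$-terms on the left are precisely $w^\ast_{k\ell}$ and $w^\ast_{km}$ — yields $w^\ast_{\min\{\ell,m\},\,\max\{\ell,m\}}=1$. If $\ell<m$ this reads $w^\ast_{\ell m}=1$, and~\eqref{ip:ext-tree-4} forces $a^\ast_\ell=0$, contradicting that $r^\ell$ is active; if $m<\ell$ it reads $w^\ast_{m\ell}=1$, forcing $a^\ast_m=0$, contradicting that $r^m$ is active. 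Either way we reach a contradiction, so $r^k\in V(F)$. The part that needs the most care is exactly this bookkeeping: constraints~\eqref{ip:ext-tree-4} and~\eqref{ip:ext-tree-11} are only stated for increasing index tuples, so one must track which of $\ell$ and $m$ is larger; and the degenerate single-vertex components must be disposed of up front so that the "pick an edge of $C$" step in the first paragraph is legitimate.
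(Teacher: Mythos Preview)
Your proof is correct and follows essentially the same approach as the paper: propagate the $z$-values along paths (Lemma~\ref{lem:tree-poly-corr-path}), extract $w^\ast_{k\ell}=w^\ast_{km}=1$ from the edge constraints~\eqref{ip:ext-tree-5}--\eqref{ip:ext-tree-6} (this is exactly what Lemma~\ref{lem:tree-poly-corr-merge} packages), apply the transitivity constraints~\eqref{ip:ext-tree-9}--\eqref{ip:ext-tree-11}, and contradict~\eqref{ip:ext-tree-4} for an active root. Your version is somewhat more meticulous than the paper's in tracking index orderings and disposing of the single-vertex components, and your auxiliary observation that the active root of a component carries the index $\max K(C)$ is a nice piece of structure, but the underlying argument is the same.
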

\begin{proof}
Assume that $r^k \not\in F$, but instead $r^k \in F' \not= F$.
Then, by Lemma~\ref{lem:tree-poly-corr-active}, there are active roots $r^\ell \in F$ and $r_m \in F'$. 
Also, by construction, there is a $t$-$r_{\ell}$-path in $F$ and there is a $r^k$-$r_m$-path in $F'$.
By Lemma~\ref{lem:tree-poly-corr-merge} we know that either $w^{\ast}_{k\ell}=1$ or $w^\ast_{\ell k} = 1$, depending on whether $\ell < k$ or $k < \ell$; and either $w^\ast_{km}=1$ or $w^\ast_{mk}=1$, depending on whether $m < k$ or $k < m$.
Thus, the transitivity constraints~\eqref{ip:ext-tree-9}--\eqref{ip:ext-tree-11} ensure that either $w^\ast_{m\ell} =  1$ or $w^\ast_{\ell m} = 1$.
In either case, this is a contradiction to constraint~\eqref{ip:ext-tree-4} because both $a^\ast_\ell=1$ and $a^\ast_m = 1$.
\end{proof}
\begin{corollary}
If $(x^\ast,y^\ast,a^\ast,w^\ast,z^\ast,R^\ast)$ is a feasible integer solution to~\eqref{ip:ext-tree},
then $\fxystar$ is a feasible Steiner Forest with the same cost.
If $F'=(V'_F, E'_F)$ is a feasible Steiner Forest, then there exists a feasible solution $(x^\ast,y^\ast,a^\ast,w^\ast,z^\ast,R^\ast)$ to~\eqref{ip:ext-tree} with the same cost.
\end{corollary}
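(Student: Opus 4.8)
The plan is to prove the two implications separately: the first is essentially a consolidation of the lemmas already established, and the second calls for exhibiting one explicit integer point built from the component structure of $F'$ and then checking the constraint families one at a time.

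\emph{Integer solution $\Rightarrow$ Steiner Forest.} Let $(x^\ast,y^\ast,a^\ast,w^\ast,z^\ast,R^\ast)$ be feasible and integral. Constraint~\eqref{ip:ext-tree-2} forces $\fxystar$ to be acyclic (as in the first lemma of this subsection), the fixing~\eqref{ip:ext-tree-13} puts every node of $\terms$ into $V_F$, and the last lemma above places $r^k$ into every connected component of $\fxystar$ that meets $T^k$. Hence, for each $k$, all of $T^k$ lies in the single component of $\fxystar$ that contains $r^k$, so any two terminals of $T^k$ are joined by a path in $\fxystar$; thus $\fxystar$ is a feasible Steiner Forest, with cost $\sum_{\{i,j\}\in E_F}c_{ij}=c^\transp x^\ast$, the objective value of the solution.

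\emph{Steiner Forest $\Rightarrow$ integer solution.} Conversely, let $F'=(V'_F,E'_F)$ be a feasible Steiner Forest. I would first normalize it: discard every terminal-free connected component (such a component consists only of Steiner nodes) and add every terminal of $\terms$ not already present as an isolated node -- neither step affects feasibility or the edge cost, and afterwards every connected component contains at least one root, because a component containing $t\in T^k$ contains $r^k$ (by feasibility of $F'$ when $\abs{T^k}\ge 2$, trivially when $T^k=\{r^k\}$). In particular the number $r$ of components is at most $K$. Listing the components as $C_1,\dots,C_r$, I would set $x^\ast,y^\ast$ to the incidence vectors of $E'_F$ and of the node set of the normalized forest, put $R^\ast:=r$, set $z^\ast_{ik}:=1$ exactly when $i$ lies in the component of $r^k$, set $w^\ast_{k\ell}:=1$ (for $k<\ell$) exactly when $r^k$ and $r^\ell$ share a component, and set $a^\ast_k:=1$ exactly when $r^k$ has the \emph{largest} index among the roots of its component. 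Verifying the constraints is then routine case analysis: \eqref{ip:ext-tree-1} is the identity ``$\abs{V_F}-\abs{E_F}$ equals the number of components'' for a forest; \eqref{ip:ext-tree-2} is the classical fact that the incidence vector of a forest satisfies Edmonds' subtour inequalities (once $S$ meets $V_F$, the edges of $E_F$ inside $S$ number at most $\abs{S\cap V_F}-1$); \eqref{ip:ext-tree-3} and~\eqref{ip:ext-tree-4} hold because each component has a unique largest-index root, so exactly $r$ roots are active and an active root shares its component with no higher-index root; \eqref{ip:ext-tree-5}--\eqref{ip:ext-tree-8} propagate the labels $z^\ast_{\cdot k}$ along selected edges and raise $w^\ast_{k\ell}$ whenever a selected edge joins a $T^k$-labelled and a $T^\ell$-labelled node, which follows from the ``same component'' reading of $z^\ast$ and the fact that the endpoints of an edge of $E_F$ lie in one component; \eqref{ip:ext-tree-9}--\eqref{ip:ext-tree-11} hold because ``sharing a component'' is an equivalence relation on $\rootset$; and the fixings~\eqref{ip:ext-tree-12},~\eqref{ip:ext-tree-13} hold because every terminal is in $V_F$ and, by feasibility of $F'$, in the component of its root. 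Finally $c^\transp x^\ast=\sum_{\{i,j\}\in E'_F}c_{ij}$ is the cost of $F'$.

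I expect two spots to need care. The only non-obvious choice is the assignment of $a^\ast$: since~\eqref{ip:ext-tree-4} couples only pairs $k<\ell$, the active root of a component must be its \emph{maximum}-index member (in contrast to the minimum-index root used for the Magnanti--Raghavan orientation), and one should confirm that this matches the behaviour forced on the integer side by Lemmas~\ref{lem:tree-poly-corr-repr} and~\ref{lem:tree-poly-corr-active}. The other subtlety is the normalization step in the reverse direction: a feasible forest carrying a terminal-free component with edges admits no integer point of exactly the same cost, but discarding such components first is harmless for the conclusion we care about -- that~\eqref{ip:ext-tree} is a correct formulation, i.e.\ that the minimum over its integer points equals the Steiner Forest optimum.
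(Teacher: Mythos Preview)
Your proposal is correct. The paper states this result as a corollary with no proof, relying on the preceding lemmas for the forward direction and leaving the converse entirely implicit; your write-up fills in exactly what the paper omits.

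Two remarks on the places you flagged. First, your reading of constraint~\eqref{ip:ext-tree-4} is right: since it bounds $a_k$ for the \emph{smaller} index $k$ in each pair $k<\ell$, the unique active root in a component is forced to be the one of \emph{maximum} index, and your assignment of $a^\ast$ reflects this correctly. This is indeed opposite to the Magnanti--Raghavan convention used elsewhere in the paper, and the paper's own lemmas only guarantee uniqueness of the active root per component without pinning down which one, so your observation is genuinely needed for the converse. Second, your normalization step (discarding terminal-free components, adding missing singleton terminals as isolated nodes) is the appropriate way to handle the edge cases; as you note, it can only lower cost, so the ``same cost'' clause of the corollary should be read as yielding an integer point of cost at most that of $F'$, which is all that is needed for correctness of the formulation.
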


\subsection{A Flow-Based Directed Formulation}
The following three bi-directed formulations are based on the orientation argument by Magnanti and Raghavan~\cite{MR2005} but turn out to be significantly smaller.
Consider a feasible Steiner Forest~$F$ with connected components~$F^1,\dots,F^Q$, $Q \leq K$.
For each~$q \in [Q]$, we denote the lowest index of any root in component~$F^q$ by $\rho(q)$.
We then repeat the argument from~\cite{MR2005}: 
The feasible forest~$F$ can be directed such that each connected component~$F^q$ is an arborescence rooted at $r^{\rho(q)}$, i.e., all arcs point away from~$r^{\rho(q)}$:
If~$t \not= r^{\rho(q)}$ is a terminal in~$F^q$ and~$S \subseteq V$ is a cut with~$r^{\rho(q)} \in S, t \not\in S$, then~$\delta^+_F(S)$ cannot be not empty.  

The problem is again that we cannot know beforehand what the $F^1,\dots,F^Q$ will be and thus we cannot know the lowest index roots, either.
Instead, we want an IP-Formulation that finds an optimum \emph{directed} Steiner Forest on the bidirected graph induced by $G$ and makes sure that this forest satisfies the Magnanti-Raghavan condition.
Any feasible (undirected) Steiner Forest will then correspond to a feasible solution of our IP-formulation and vice-versa; thus we can use the IP-formulation to find optimum undirected Steiner Forests.

Our first formulation is a flow-based one, and it uses the same principles as~\eqref{ip:mr}:
For each $\ell \in [K]$, all non-root terminals $t \in T^\ell$ must receive one unit of flow in total. 
The flow may be sent from any root $r^k$ with $k\leq \ell$, and in a fractional solution, multiple roots can send flow to $t$ at the same time.
However, if $z^{k\ell}$ is the combined value of flow sent from $r^k$ to the terminals in $T^\ell$, then there must be a flow of value $z^{k\ell}$ from $r^k$ to $r^\ell$.
 
To describe the formulation in detail, we need some additional notation.
Let $T_r^k := T^k\setminus\{r^k\}$. 
Moreover, let $\terms^{k\ldots K} := \bigcup_{\ell\in\{k, \ldots, K\}} T^\ell, \forall k\in[K]$, denote the union of terminal sets $T^k, \ldots, T^K$ and 
$\terms_r^{k\ldots K} := \terms^{k\ldots K}\setminus\{r^k\}$ the same set without the $k$th root node (all other root nodes are still included). 

The formulation has two flow variables $f^{kt}_{ij}, f^{kt}_{ji}$ for each edge $\{i,j\} \in E$, each $k \in [K]$ and all terminals $t \in \terms^{k\ldots K}$.
Aside from these standard flow variables, it has variables $z_{k\ell}\in \{0,1\}$  for each pair of terminal sets $k,\ell \in[K]$, with $k \leq \ell$. 
The model ensures that $z_{k\ell} = 1$ iff $r^\ell$ (and all vertices in $T^\ell$) are contained in the arborescence rooted at $r^k$;
$z_{kk} = 1$ implies that $r^k$ is a root node of an arborescence itself. 
We call $r^k$ with $z_{kk} = 1$ a {\em parent node} and say that $r^k$ is the {\em parent of} $r^\ell$ if $z_{k\ell} = 1$; in this case $r^\ell$ is the {\em child of} $r^k$. 
Each root node is either a parent or a child node and $r^1$ is always a parent node -- we set $z_{11} = 1$ (this is implied by the following constraints \eqref{ip:ext-dir-flow-z-sum-1}).
The separate set of variables $x^k \in \{0,1\}^E, k\in[K]$, for each terminal set models the edge capacities: The $k$th set of variables contains exactly those edges of the tree rooted at $r^k$.  
Finally, a variable $x_{ij}$ for each edge models the decision whether edge $\{i,j\}$ is included in the forest.
We have
\begin{align}
  \min\Bigsetm{c^\transp x}{(x, f, z) \in \LP^{edf}\ \tandinteger} \tag{IPedf} \label{ip:ext-dir-flow}
\end{align}
where
\begin{subequations}
\begin{align}
    \LP^{edf} := \Bigl\{ (x, f, z) \mathrel{}\Bigm\lvert 
     	x_{ij}            &\geq  \sum_{k\in[K]} x_{ij}^k  &&
                                         \begin{aligned}
                                           &\tforall\ \{i,j\} \in E
                                         \end{aligned}\label{ip:ext-dir-flow-capacity}\\
         \sum_{\ell=1}^k z_{\ell k} &= 1 && \begin{aligned}
                                          &\tforall\ k \in [K]\\
                                        \end{aligned}\label{ip:ext-dir-flow-z-sum-1}\\
	z_{kk}  &\ge z_{k\ell} && \begin{aligned}
                                         &\tforall\ k \in[K]\setminus\{1, K\} \\
                                         &\tandall\ \ell \geq k+1 
                                        \end{aligned}\label{ip:ext-dir-flow-z}\\
	 f_{ij}^{ks} + f_{ji}^{kt} &\leq x_{ij}^k && \begin{aligned}
	 	&\tforall\ k\in[K] \\
		&\tandall\ e=\{i,j\}\in E \\
		&\tandall\ s, t \in \terms_r^{k\ldots K} 
		\end{aligned}
		\label{ip:ext-dir-flow-edges-capacity}\\
   f^{kt}(\delta^-(i)) - f^{kt}(\delta^+(i)) &=
    \begin{cases}
         -z_{k\ell},&\text{if $i = r^{k}$}\\
        z_{k\ell},&\text{if $i = t$}\\
         0,&\text{otherwise}
    \end{cases}
    && \begin{aligned} 
	 &\tforall\ k\in[K]\\ 
	&\tandall\ t \in \terms_r^{k\ldots K} \\
	&\text{with }\tau(t) = \ell \\
    	&\tandall\ i \in V\\ 
	\end{aligned}
	\label{ip:ext-dir-flow-flow-conservation}\\
        x_{ij}                 &\in [0,1] && \tforall\ \{i,j\}\in E \\
        x_{ij}^k	      &\in [0,1] &&  \begin{aligned} &\tforall\ k\in [K]\\ &\tandall\ \tforall\ \{i,j\}\in E \end{aligned}\\
        f_{ij}^{kt}, f_{ji}^{kt}		&\in [0,1] && \begin{aligned} &\tforall\ k\in [K]\\ &\tandall\ t \in \terms_r^{k\ldots K} \\ &\tandall\ \{i,j\}\in E \end{aligned} \\
        z_{k\ell}           &\in [0,1] && \begin{aligned} &\tforall\ k\in [K]\\ &\tandall\ \ell \geq k\end{aligned}\, \Bigr\}.
\end{align}
\end{subequations}

The constraints \eqref{ip:ext-dir-flow-z-sum-1} and \eqref{ip:ext-dir-flow-z} imply a valid assignment of the $z$-variables by modeling a flat hierarchy between the root nodes:
a root node $r^k$ is either a parent or a child of exactly one other root node. 
\eqref{ip:ext-dir-flow-z-sum-1} states that every root $r^k, k\in[K]$, has to be a parent, i.e., a root node of an arborescence ($z_{kk} = 1$), or it has to be a child and it has to be  contained in another arborescence ($\exists i < k\colon z_{ik} = 1$). 
\eqref{ip:ext-dir-flow-z} states that if a root node $r^i$ is a child of another root node $r^k$ then $r^k$ has to be a parent node. 

Conditions \eqref{ip:ext-dir-flow-edges-capacity} and \eqref{ip:ext-dir-flow-flow-conservation} model a valid flow. 
Thereby, a flow of value $z_{k\ell}$ is send from root node $r^k$ to a terminal $t\in \terms_r^{k\ldots K}$ with $\tau(t)=\ell$. 
Hence, if $r^k$ is the parent of set $\ell$ each terminal in $T^\ell$ is connected to $r^k$.  

The constraints \eqref{ip:ext-dir-flow-edges-capacity} ensure the correct assignment of flow- and edge-variables. 
Here, the constraints affect each tree separately. 
The fact that an optimum solutions to the SFP consists of a disjoint set of trees is represented by the sum in constraints \eqref{ip:ext-dir-flow-capacity}: 
hence, any edge used in any tree needs to be payed for.

\begin{lemma}
\label{lemma:correctness:ext-dir-flow}
\eqref{ip:ext-dir-flow} models the Steiner Forest problem correctly. 
\end{lemma}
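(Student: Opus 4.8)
The plan is to establish, as in the correctness corollary for the tree-based formulation~\eqref{ip:ext-tree}, two cost-preserving directions: every feasible integer point of~\eqref{ip:ext-dir-flow} induces an undirected subgraph of the same cost that contains a feasible Steiner Forest, and conversely every feasible Steiner Forest is realized by a feasible integer point of~\eqref{ip:ext-dir-flow} of the same cost. Since $c\ge 0$, the two directions together force the optimum of~\eqref{ip:ext-dir-flow} to equal the instance optimum.

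For the first direction I would fix a feasible integer solution $(x,f,z)$ and first read off the hierarchy encoded by $z$: constraint~\eqref{ip:ext-dir-flow-z-sum-1} makes each $r^k$ either a \emph{parent} ($z_{kk}=1$) or the child of a unique earlier root, and~\eqref{ip:ext-dir-flow-z} forces that earlier root to be a parent itself, so the hierarchy has depth at most one; write $p(m)$ for the (parent) root with $z_{p(m),m}=1$. Next, since $\sum_{k} x^k_{ij}\le x_{ij}\le 1$ by~\eqref{ip:ext-dir-flow-capacity}, the edge sets $E^k:=\{e: x^k_e=1\}$ are pairwise disjoint and all contained in $F:=\{e:x_e=1\}$. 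Now for a parent $r^k$ and any $\ell$ with $z_{k\ell}=1$, constraint~\eqref{ip:ext-dir-flow-flow-conservation} (using the case $s=t$ of~\eqref{ip:ext-dir-flow-edges-capacity} to bound capacities by $x^k$) gives, for every $t\in\terms_r^{k\ldots K}$ with $\tau(t)=\ell$, a flow of value $1$ from $r^k$ to $t$ supported on $E^k$; a flow decomposition extracts an $r^k$--$t$ path in $F$. Applying this with $\ell=k$ for parents and with $k=p(m)$, $\ell=m$ for children shows that all of each terminal set $T^m$ — including $r^m$, which lies in $\terms_r^{p(m)\ldots K}$ — is connected in $F$ to a common root, hence $F$ connects all terminal pairs, i.e.\ $F$ contains a feasible Steiner Forest of cost at most $c^\transp x$.

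For the converse, let $F'$ be an optimum (hence acyclic) feasible Steiner Forest with components $F^1,\dots,F^Q$, dropping components with no terminal; for each $q$ let $\rho(q)$ be the least root index in $F^q$ and orient $F^q$ as an arborescence rooted at $r^{\rho(q)}$. I would then set $x_{ij}=1$ exactly on $E(F')$, $x^k_{ij}=1$ exactly on the edges of the component whose least root is $r^k$, $z_{kk}=1$ iff $r^k=r^{\rho(q)}$ for some $q$, $z_{k\ell}=1$ for $k<\ell$ iff $r^k,r^\ell$ lie in a common component with least root $r^k$, and route one unit of $f^{kt}$ along the unique directed $r^k$--$t$ path whenever $z_{k,\tau(t)}=1$ (all other $f^{kt}$ zero). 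Verifying the constraints is then routine: \eqref{ip:ext-dir-flow-z-sum-1} and~\eqref{ip:ext-dir-flow-z} follow since every root lies in exactly one component whose least root is a parent; \eqref{ip:ext-dir-flow-capacity} holds with equality because the components are edge-disjoint; \eqref{ip:ext-dir-flow-flow-conservation} holds because $f^{kt}$ is a path flow of the prescribed value; and~\eqref{ip:ext-dir-flow-edges-capacity} holds because within one component all path flows use the single arborescence orientation, so $f^{ks}_{ij}$ and $f^{kt}_{ji}$ are never both positive. The cost equals $\sum_{e\in E(F')}c_e$.

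The main obstacle is the connectivity argument in the first direction together with its bookkeeping: one must track the flat hierarchy carefully so that \emph{all} of each terminal set — in particular the root $r^m$ itself, and the case in which $r^m$ is a child rather than a parent — ends up connected; and one must settle for ``$F$ contains a feasible Steiner Forest'' rather than ``$F$ is a forest'', since~\eqref{ip:ext-dir-flow} imposes no acyclicity on the $x^k$. Phrasing the correctness as a cost-preserving correspondence up to deletion of unused edges, exactly as in the corollary for~\eqref{ip:ext-tree}, sidesteps this cleanly; the remaining constraint checks are purely mechanical.
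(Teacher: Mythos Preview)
Your proposal is correct and follows essentially the same two-direction argument as the paper: the Magnanti--Raghavan orientation of an optimal forest to build a feasible integer point, and the flat parent/child hierarchy read off from~$z$ together with the flow-conservation constraints to recover connectivity from a feasible integer point. You are in fact more careful than the paper on two points the paper leaves implicit---that~\eqref{ip:ext-dir-flow-z} forces depth one so the assigned ``parent'' is genuinely a parent, and that the child root $r^m$ itself lies in $\terms_r^{p(m)\ldots K}$ and is therefore connected---and your observation that the $x$-support need only \emph{contain} a feasible forest (rather than be acyclic) is the right way to phrase the correspondence.
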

\begin{proof}
Let $\tilde E \subseteq E$ be an optimal solution to the SFP. 
Start with $\tilde z := \mathbf{0}$. 
Now, for each connected component $\mathcal{C}$ in $G[\tilde E]$ set $\tilde z_{ii} = 1$ if $r^i$ is the root node with lowest index contained in $\mathcal{C}$ and
for all other root nodes $r^j\in\mathcal{C}, j\not= i$, set $\tilde z_{ij} = 1$.  
Notice that $\tilde z$ satisfies \eqref{ip:ext-dir-flow-z-sum-1} and \eqref{ip:ext-dir-flow-z} and that each terminal has exactly one assigned parent node. 
After fixing the $z$ variables the remaining part of the model describes a union of disjoint Steiner trees, one for each connected component. 
First, the component with parent node $r^k$ is represented by edge variables $\tilde x^k$. 
Second, $\tilde E$ can be oriented such that each connected component is an arborescence rooted at its parent node. 
Then, the arcs of the arborescences can be used for constructing flows from each parent node $r^i$ to each terminal $t\in T_r^i$ or $t\in T^j$ with $j>i$ and $z_{ij} = 1$. 
Since the connected components are disjoint constraint  \eqref{ip:ext-dir-flow-capacity} is satisfied.
Overall, the constructed solution is feasible for  \eqref{ip:ext-dir-flow} and has the same objective value.

An optimum solution $(\tilde x, \tilde f, \tilde z)$ to \eqref{ip:ext-dir-flow}  implies a hierarchy of the terminal sets with parent and child sets. 
Thereby, every set has exactly one assigned parent; in particular, every terminal of a set has the same parent.  
Hence, due to \eqref{ip:ext-dir-flow-flow-conservation} there exists a flow of one unit from each terminal to the assigned parent such that every terminal is connected. 
Constraints \eqref{ip:ext-dir-flow-capacity} and \eqref{ip:ext-dir-flow-edges-capacity} collect the used edges and hence, $\tilde E := \{\{i,j\}\in E \mid \tilde x_{ij} = 1\}$ is a feasible solution to the SFP with the same cost. 
\end{proof}

Let  $\Proj_x(\LP^{edf})$ and $\Proj_x(\LP^{df})$
denote the linear projections of 
$\LP^{edf}$ and $\LP^{df}$, respectively, 
into the undirected $x$ variable space.

\begin{lemma}
\label{lemma:df:vs:edf}
$
\Proj_x(\LP^{df})
\supsetneq
\Proj_x(\LP^{edf})
$, i.e., 
the extended directed flow-based formulation is stronger than the directed flow-based formulation. 
\end{lemma}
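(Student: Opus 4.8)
Here is the plan I would follow.

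\medskip

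\textit{Proof strategy.} I would prove $\supsetneq$ in two halves: first the inclusion $\Proj_x(\LP^{edf}) \subseteq \Proj_x(\LP^{df})$, and then the existence of an $x$ that lies in $\Proj_x(\LP^{df})$ but not in $\Proj_x(\LP^{edf})$. For the inclusion, I would start from an arbitrary $(x,f,z) \in \LP^{edf}$ and build a flow $\bar f$ with $(x,\bar f) \in \LP^{df}$; since the coupling variable $x$ is left untouched this immediately gives $x \in \Proj_x(\LP^{df})$.

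To build $\bar f$, fix a non-root terminal $t$ and set $\ell := \tau(t)$. By the flow-conservation constraints~\eqref{ip:ext-dir-flow-flow-conservation}, for each $m \le \ell$ the flow $f^{mt}$ carries $z_{m\ell}$ units from $r^m$ to $t$, and for each $m < \ell$ the flow $f^{m r^\ell}$ (which exists because $r^\ell \in \terms_r^{m\ldots K}$) carries $z_{m\ell}$ units from $r^m$ to $r^\ell$. For $m < \ell$ let $h^{mt}$ be the nonnegative flow obtained by concatenating the reversal of $f^{m r^\ell}$ with $f^{mt}$, i.e.\ $h^{mt}_{ij} := f^{m r^\ell}_{ji} + f^{mt}_{ij}$, which moves $z_{m\ell}$ units from $r^\ell$ to $t$; for $m = \ell$ put $h^{\ell t} := f^{\ell t}$. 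Let $g^{mt}$ arise from $h^{mt}$ by cancelling, on every edge, the flow running simultaneously in both directions, and set $\bar f^t := \sum_{m \le \ell} g^{mt}$. Each $g^{mt}$ is still a flow of value $z_{m\ell}$ from $r^\ell$ to $t$, so by~\eqref{ip:ext-dir-flow-z-sum-1} the flow $\bar f^t$ moves $\sum_{m\le\ell} z_{m\ell} = 1$ unit from $r^\ell = r_{\tau(t)}$ to $t$, which is exactly~\eqref{ip:dir-flow-1}.

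The heart of the proof, and what I expect to be the main obstacle, is the coupling constraint~\eqref{ip:dir-flow-2}: for $s,t$ in a common terminal set $T^\ell$ one must show $\bar f^s_{ij} + \bar f^t_{ji} \le x_{ij}$, even though the naive superposition of the flows $f^{m\cdot}$ with their reversed connector flows $f^{m r^\ell}$ can a priori occupy up to $2 x^m_{ij}$ on an edge in layer $m$. I would prove the sharper per-layer inequality $g^{ms}_{ij} + g^{mt}_{ji} \le x^m_{ij}$ for every $m \le \ell$ and every edge $\{i,j\}$, then sum it over $m$ and invoke $\sum_{m} x^m_{ij} \le x_{ij}$ from~\eqref{ip:ext-dir-flow-capacity}. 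For the per-layer bound, cancellation gives $g^{mt}_{ij} \le h^{mt}_{ij}$ everywhere, so if one of $g^{ms}_{ij}$, $g^{mt}_{ji}$ vanishes the inequality reduces to $h^{ms}_{ij} \le x^m_{ij}$ or $h^{mt}_{ji} \le x^m_{ij}$, both of which are instances of the pairwise-consistency constraint~\eqref{ip:ext-dir-flow-edges-capacity} applied to pairs from $\{s,t,r^\ell\} \subseteq \terms_r^{m\ldots K}$; and if both are positive then $g^{ms}_{ij} = h^{ms}_{ij} - h^{ms}_{ji}$ and $g^{mt}_{ji} = h^{mt}_{ji} - h^{mt}_{ij}$, and upon expanding the $h$-terms the four contributions of $f^{m r^\ell}$ cancel pairwise, leaving $g^{ms}_{ij} + g^{mt}_{ji} \le f^{ms}_{ij} + f^{mt}_{ji} \le x^m_{ij}$ by~\eqref{ip:ext-dir-flow-edges-capacity} once more (the case $m = \ell$ is immediate since $h^{\ell t} = f^{\ell t}$). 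Taking $s = t$ also gives $\bar f^t_{ij} + \bar f^t_{ji} \le x_{ij} \le 1$, so $\bar f$ is $[0,1]$-valued; hence $(x,\bar f) \in \LP^{df}$ and the inclusion follows.

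For strictness it suffices to exhibit one instance on which the LP relaxation of~\eqref{ip:ext-dir-flow} strictly beats that of~\eqref{ip:dir-flow}, since together with the inclusion just shown this forces $\Proj_x(\LP^{edf}) \subsetneq \Proj_x(\LP^{df})$. Instance~(B) of Figure~\ref{fig:lp-comparison} does the job: there $\min\{\mathbf{1}^{\transp} x \mid x \in \Proj_x(\LP^{df})\} = 2$, attained by value $\tfrac12$ on each of the four edges, whereas $\min\{\mathbf{1}^{\transp} x \mid x \in \Proj_x(\LP^{edf})\} = 3$, so that optimal fractional point of $\Proj_x(\LP^{df})$ lies outside $\Proj_x(\LP^{edf})$.
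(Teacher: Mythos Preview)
Your proposal is correct and follows essentially the same route as the paper: reverse the connector flows $f^{m r^\ell}$, concatenate with $f^{mt}$ to obtain a flow from $r^\ell$ to $t$ via $r^m$, cancel opposing flow on each edge, and then prove the per-layer capacity bound $g^{ms}_{ij}+g^{mt}_{ji}\le x^m_{ij}$ by the same two-case argument (one side zero vs.\ both positive, where the $f^{m r^\ell}$ contributions cancel) before summing over $m$ and applying~\eqref{ip:ext-dir-flow-capacity}. Your strictness witness, instance~(B) of Figure~\ref{fig:lp-comparison}, is exactly the paper's example as well.
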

\begin{proof}
Let $(\tilde x, \tilde f, \tilde z) \in \LP^{edf}$. 
For better overview we divide the proof into several parts.
Parts (A)--(D) show that $\Proj_x(\LP^{edf}) \subseteq \Proj_x(\LP^{df})$ and 
(E) gives an example where the strict inequality holds. 

\subparagraph{A. Flows are acyclic.} 
W.l.o.g. we can assume that any flow $\tilde f^{kt}, \tforall\ k\in[K] \tandall\ t \in \terms_r^{k\ldots K}$, is free of cycles and it satisfies $\tilde f_{ij}^{kt}=0 \vee \tilde f_{ji}^{kt} = 0, \forall \{i,j\}\in E$.

\subparagraph{B. Reverse flow.} 
We first introduce additional flow variables $\check f^{k r^\ell}$, $\forall \ell\in\{1, \ldots, K-1\}, \forall k\in \{\ell + 1, \dots, K\}$, i.e., $k > \ell$.  
Notice these flow variables do not exist since we have only flow variables $f^{kt}$ for a set $k$ and terminal $t\in \terms_r^{k\ldots K} $, i.e., $\tau(t)$ $\geq k$.   
The values of the new variables are set such that the flow from $r^\ell$ to $r^k$ is simply reversed: $\forall (i,j)\in A\colon \check f_{ij}^{k r^\ell} := \tilde f_{ji}^{\ell r^k}$.

\begin{figure}[t]
\begin{subfigure}[t]{0.4\textwidth}
  \centering
  \begin{tikzpicture}
      \useasboundingbox (-0.5,-0.7) rectangle (2.3,2.7);
  \begin{scope}[scale=1.75]
    \draw (0,1) 	node[rB,label={[mylabel]120:$r^k$}] (vrk) {};
    \draw (1,1) 	node[rA,label={[mylabel]30:$r^\ell$}] (vrl) {};
    \draw (0.5,0) node[tB,label={[mylabel]300:$t$}] (vt) {};
  \end{scope}
  \begin{scope}[myedge, sloped]
  \draw[<-,decoration={zigzag,segment length=10, amplitude=1.5,post=lineto, post length=4pt, pre length=4pt}] (vrk) edge[decorate] node[above] {$\tilde f^{\ell, r^k}$}  (vrl);
  \draw[->,decoration={zigzag,segment length=10, amplitude=1.5,post=lineto, post length=4pt}] (vrk) edge[decorate] node[below] {$\tilde f^{k, t}$}  (vt);
  \draw[->,decoration={zigzag,segment length=10, amplitude=1.5,post=lineto, post length=4pt}] (vrl) edge[decorate] node[below] {$\tilde f^{\ell, t}$}  (vt);
  \end{scope}
\end{tikzpicture}
\caption{}
\end{subfigure}\qquad
\begin{subfigure}[t]{0.4\textwidth}
  \centering
  \begin{tikzpicture}
      \useasboundingbox (-0.7,-0.5) rectangle (2.9,2.7);
  \begin{scope}[scale=1.75]
    \draw (0,1) 	node[rB,label={[mylabel]120:$r^k$}] (vrk) {};
    \draw (1,1) 	node[rA,label={[mylabel]30:$r^\ell$}] (vrl) {};
    \draw (0.5,0) node[tB,label={[mylabel]300:$t$}] (vt) {};
  \end{scope}
  \begin{scope}[myedge, sloped]
  \draw[->,decoration={zigzag,segment length=10, amplitude=1.5,post=lineto, post length=4pt}] (vrk) edge[decorate] node[above] {$\check f^{k, r^\ell}$}  (vrl);
  \draw[->,decoration={zigzag,segment length=10, amplitude=1.5,post=lineto, post length=4pt}] (vrk) edge[decorate] node[below] {$\tilde f^{k, t}$}  (vt);
  \draw[->,decoration={zigzag,segment length=10, amplitude=1.5,post=lineto, post length=4pt}] (vrl) edge[decorate] node[below] {$\tilde f^{\ell, t}$}  (vt);
  \draw[->, dashed] (vrk) .. controls (0.2,3.5) and (6,1.5) .. node[above] {$\bar f^{k, \ell, t}$} (vt);
  \end{scope}
\end{tikzpicture}
\caption{}
\end{subfigure}\qquad
\caption{\label{figure:df:vs:edf}
Schematic view on the involved flows in the proof of Lemma \ref{lemma:df:vs:edf}. 
$r^k$ and $r^\ell$ are root nodes for sets $T^k$ and $T^\ell$, with $\ell < k$ and $t\in T_r^k$. 
(a) The original flows. 
(b) The reverse flow $\check f^{k r^\ell}$ from $r^k$ to $r^\ell$, cf.\ part {\em B} in the proof, and the combined flow $\bar f^{k \ell t}$ from $r^k$ to $t$ over $r^\ell$, cf.\ part {\em C}. 
}
\end{figure}
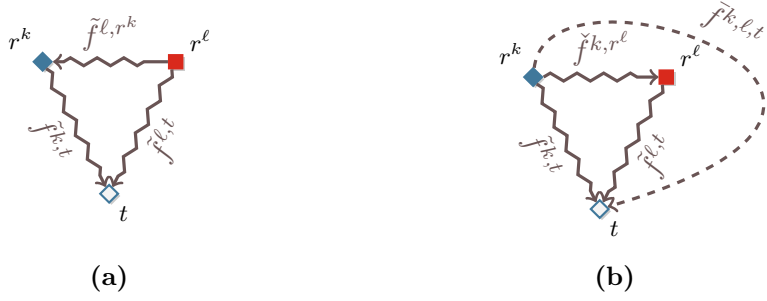

\subparagraph{C. Flow from $r^k$ to $t$ {\em over} $r^\ell$.}
Now, we construct a flow $\bar f^{k\ell t}$ for a set $k\in[K]\setminus\{1\}$, a set $\ell \in \{1, \ldots, k-1\}$, and a terminal $t\in T_r^k$.  
This flow will send $\tilde z_{\ell k}$ from $r^k$ to $t$ (over~$r^\ell$) by using the reverse flow from $r^\ell$ to $r^k$, i.e., $\bar f^{k\ell t} := \tilde f^{\ell t} + \check f^{k r^\ell}$.

\subparagraph{C.1. Feasibility and value.}
We show that $\bar f^{k\ell t}$ is a feasible flow from $r^k$ to $t$ with value $\tilde z_{\ell k}$, $\forall k\in [K]\setminus\{1\}, \forall t\in T_r^k, \forall \ell \in \{1, \ldots, k-1\}$. 
Let $i\in V$. 
We have:
	\begin{align}
	&\bar f^{k\ell t}(\delta^-(i)) - \bar f^{k\ell t}(\delta^+(i))  \nonumber\\
	=\, & \tilde f^{\ell t}(\delta^-(i)) + \check f^{k r^\ell}(\delta^-(i)) - \tilde f^{\ell t}(\delta^+(i)) - \check f^{k r^\ell}(\delta^+(i)). \nonumber
	\end{align}

\begin{itemize}
\item[] {\em Case} ``$i=r^k$'': $\tilde f^{\ell t}(\delta^-(r^k)) - \tilde f^{\ell t}(\delta^+(r^k)) = 0$ since $r^k$ is an internal node under flow $\tilde f^{\ell t}$. 
					Moreover, $\check f^{k r^\ell}(\delta^-(r^k)) - \check f^{k r^\ell}(\delta^+(r^k)) = -\tilde z_{\ell k}$ (the reverse flow).

\item[] {\em Case} ``$i=t$'':  Similar arguments: $\check f^{k r^\ell}(\delta^-(t)) - \check f^{k r^\ell}(\delta^+(t)) = 0$ since $t$ is an internal node under $\check f^{k r^\ell}$
				and  $\tilde f^{\ell t}(\delta^-(t)) - \tilde f^{\ell t}(\delta^+(t)) = \tilde z_{\ell k}$. 

\item[] {\em Case} ``$i=r^\ell$'': $\tilde f^{\ell t}(\delta^-(r^\ell)) - \tilde f^{\ell t}(\delta^+(r^\ell)) = -\tilde z_{\ell k}$ and $\check f^{k r^\ell}(\delta^-(r^\ell)) - \check f^{k r^\ell}(\delta^+(r^\ell)) = \tilde z_{\ell k}$. 
				Hence, the sum is 0. 
				
\item[] {\em Otherwise} :	Since $\tilde f^{\ell t}$ and $\check f^{k r^\ell}$ are flows the sum is 0. 
\end{itemize}

Hence, $\bar f^{k\ell t}$ is a feasible flow from $r^k$ to $t$ with value $\tilde z_{\ell k}$. 

\subparagraph{C.2. Acyclic $\bar f^{k\ell t}$.} 
Again, we assume w.l.o.g.\ that $\bar f^{k\ell t}_{ij}$ is acyclic, i.e, $\bar f_{ij}^{k\ell t}=0 \vee \bar f_{ji}^{k\ell t} = 0, \forall \{i,j\}\in E$. 

\subparagraph{C.3. Capacity: $\bar f^{k\ell s}_{ij} + \bar f^{k\ell t}_{ji} \leq \tilde x_{ij}^\ell$.} 
Now, for any $k \in [K]\setminus\{1\}$ and any $\ell \in \{1,\ldots, k-1\}$, consider two terminals $s, t\in T_r^k$ from the same terminal set, and an edge $\{i,j\}\in E$ with the two related arcs $a_1\in\{(i,j), (j,i)\}$ and  the reverse arc $a_2$.
We argue that $\bar f^{k\ell s}_{a_1} + \bar f^{k\ell t}_{a_2} \leq \tilde x_{ij}^\ell$.

If one flow is zero the inequality holds: 
E.g., if 
$\bar f^{k\ell t}_{a_2} = 0 
$
we have: 
$
\bar f^{k\ell s}_{a_1} + \bar f^{k\ell t}_{a_2}
=
\bar f_{a_1}^{k \ell s}
=
\tilde f_{a_1}^{\ell s} + \check f_{a_2}^{\ell r^k} 
\leq 
\tilde x_{ij}^\ell
$. 
The last inequality  is true due to constraint \eqref{ip:ext-dir-flow-edges-capacity}. 
The part with $\bar f^{k\ell s}_{a_1} = 0$ works analogously.

Otherwise, if both parts are $> 0$ we have:
$\bar f^{k\ell s}_{a_1} + \bar f^{k\ell t}_{a_2}
= \tilde f_{a_1}^{\ell s} + \check f_{a_1}^{k r^\ell} - \tilde f_{a_2}^{\ell s} - \check f_{a_2}^{k r^\ell} + 
   \tilde f_{a_2}^{\ell t} + \check f_{a_2}^{k r^\ell} - \tilde f_{a_1}^{\ell t} - \check f_{a_1}^{k r^\ell}
= \tilde f_{a_1}^{\ell s} - \tilde f_{a_2}^{\ell s} + \tilde f_{a_2}^{\ell t} - \tilde f_{a_1}^{\ell t}
\leq \tilde x_{ij}^\ell
$, again by constraint \eqref{ip:ext-dir-flow-edges-capacity}.  

\subparagraph{D. Solution to $\LP^{df}$.}  
Due to the previous discussion we are now able to construct a solution $(\hat x, \hat f) \in  \LP^{df}$ with the same objective value. 

\subparagraph{D.1. Variable assignment.} 
We use the same values for the undirected edges by assigning $\hat x := \tilde x$. 
Trivially, $\hat x\in[0,1]^{|E|}$. 

The flow variables $\bar f^{t}, \forall t\in\nonrootterms$, with $k=\tau(t)$, are assigned the following values: $\hat f^{t} := \tilde f^{kt} + \sum_{\ell\in\{1, \ldots, k-1\}} \bar f^{k\ell t}$. 
Obviously, it holds $\hat f^{t} \geq 0$; the upper bound of 1 follows from {\em D.3}. 

\subparagraph{D.2. Flow conservation and flow value 1.}
Consider a terminal $t\in\nonrootterms$ with $k=\tau(t)$ and a vertex $i\in V$. 
By inserting the definition we have: 
	\begin{align}
	&\hat f^{t}(\delta^-(i)) - \hat f^{t}(\delta^+(i)) \nonumber\\
	=\, & \tilde f^{kt}(\delta^-(i)) + \sum_{\ell\in\{1, \ldots, k-1\}} \bar f^{k\ell t}(\delta^-(i)) - \tilde f^{kt}(\delta^+(i)) - \sum_{\ell\in\{1, \ldots, k-1\}} \bar f^{k\ell t}(\delta^+(i)) \nonumber
	\end{align}

\begin{itemize}
\item[] {\em Case}  ``$i=r^k$'':  	$\tilde f^{kt}(\delta^-(i)) - \tilde f^{kt}(\delta^+(i)) = - \tilde z_{kk}$ and for each $\ell < k$ it holds $\bar f^{k\ell t}(\delta^-(i)) - \bar f^{k\ell t}(\delta^+(i)) = - \tilde z_{\ell k}$ (due to {\em C.1}). Overall we get $- \tilde z_{kk} + \sum_{\ell < k} - \tilde z_{\ell k} = - 1$ (due to constraint \eqref{ip:ext-dir-flow-z-sum-1}). 

\item[] {\em Case}  ``$i=t$'':  		Analogously, $\tilde f^{kt}(\delta^-(i)) - \tilde f^{kt}(\delta^+(i)) = \tilde z_{kk}$ and for each $\ell < k$ it holds $\bar f^{k\ell t}(\delta^-(i)) - \bar f^{k\ell t}(\delta^+(i)) = \tilde z_{\ell k}$ (due to {\em C.1}), and overall we have $\tilde z_{kk} + \sum_{\ell < k} \tilde z_{\ell k} = 1$ (due to constraint \eqref{ip:ext-dir-flow-z-sum-1}). 	 
				
\item[] {\em Otherwise}: 		Since $\tilde f^{kt}$ and $\bar f^{k\ell t}(\delta^-(i)), \forall \ell < k$, are flows (see {\em C.1}) the sum is~0.  
\end{itemize}

We conclude that $\hat f^{t}$ is a flow from $r^k$ to each terminal $t\in T_r^k, \forall k\in[K]$, with value~1. 

\subparagraph{D.3. $\hat x_{ij} \ge \hat f_{ij}^{s} + \hat f_{ji}^{t}$.} 
Last but not least, we need to show that constraints \eqref{ip:dir-flow-2} are satisfied. 
Let $\{i,j\}\in E$, $k\in[K]$, and $s, t\in T_r^k$. 
	\begin{align}
	\hat f_{ij}^{s} + \hat f_{ji}^{t} 	\stackrel{\hphantom{(5.55)}}{\leq} & \tilde f_{ij}^{ks} + \sum_{\ell\in\{1, \ldots, k-1\}} \bar f_{ij}^{k\ell s} + \tilde f_{ji}^{kt} + \sum_{\ell\in\{1, \ldots, k-1\}} \bar f_{ji}^{k\ell t} \nonumber \\
								\stackrel{\eqref{ip:ext-dir-flow-edges-capacity}}{\leq} & \tilde x_{ij}^k +  \sum_{\ell\in\{1, \ldots, k-1\}} \left( \bar f_{ij}^{k\ell s} + \bar f_{ji}^{k\ell t} \right) \nonumber\\
								\stackrel{\mathit{C.3}}{\leq} & \tilde x_{ij}^k + \sum_{\ell\in\{1, \ldots, k-1\}} \tilde x_{ij}^\ell 
														\leq \sum_{k\in\{1, \ldots, K\}} \tilde x_{ij}^k  
														\stackrel{\eqref{ip:ext-dir-flow-capacity}}{\leq} \tilde x_{ij} = \hat x_{ij}\nonumber
	\end{align}

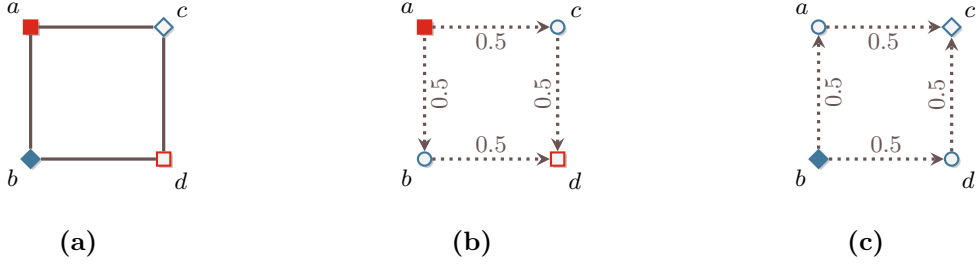
\begin{figure}[t]
\begin{subfigure}[t]{0.3\textwidth}
  \centering
  \begin{tikzpicture}
      \useasboundingbox (-0.7,-0.7) rectangle (2,2);
  \begin{scope}[scale=1.75]
    \draw (0,0) node[rB,label={[mylabel]210:$b$}] (v1) {};
    \draw (0,1) node[rA,label={[mylabel]120:$a$}] (v2) {};
    \draw (1,1) node[tB,label={[mylabel]30:$c$}] (v3) {};
    \draw (1,0) node[tA,label={[mylabel]300:$d$}] (v4) {};
  \end{scope}
  \begin{scope}[myedge, mylabel, sloped, above]
  \draw (v1) -- node{} (v2);
  \draw (v1) -- node{} (v4);
  \draw (v2) -- node{} (v3);
  \draw (v3) -- node{} (v4);
  \end{scope}
\end{tikzpicture}
\caption{}
\end{subfigure}\qquad
\begin{subfigure}[t]{0.3\textwidth}
  \centering
  \begin{tikzpicture}
   \useasboundingbox (-0.7,-0.7) rectangle (2,2);
  \begin{scope}[scale=1.75]
    \draw (0,0) node[mybluenode,label={[mylabel]210:$b$}] (v1) {};
    \draw (0,1) node[rA,label={[mylabel]120:$a$}] (v2) {};
    \draw (1,1) node[mybluenode,label={[mylabel]30:$c$}] (v3) {};
    \draw (1,0) node[tA,label={[mylabel]300:$d$}] (v4) {};
  \end{scope}
  \begin{scope}[myedge, ->, >=stealth, dotted, sloped, below]
  \draw[<-] (v1) to node[mylabel]{0.5} (v2);
  \draw (v2) to node[mylabel]{0.5} (v3);
  \draw[<-] (v4) to node[mylabel,above]{0.5} (v3);
  \draw (v1) to node[mylabel,above]{0.5} (v4);
  \end{scope}
\end{tikzpicture}
  \caption{}
\end{subfigure}\qquad
\begin{subfigure}[t]{0.3\textwidth}
  \centering
  \begin{tikzpicture}
      \useasboundingbox (-0.7,-0.7) rectangle (2,2);
  \begin{scope}[scale=1.75]
    \draw (0,0) node[rB,label={[mylabel]210:$b$}] (v1) {};
    \draw (0,1) node[mybluenode,label={[mylabel]120:$a$}] (v2) {};
    \draw (1,1) node[tB,label={[mylabel]30:$c$}] (v3) {};
    \draw (1,0) node[mybluenode,label={[mylabel]300:$d$}] (v4) {};
  \end{scope}
  \begin{scope}[myedge,->, >=stealth, dotted, sloped, below]
  \draw (v1) to node[mylabel,below]{0.5} (v2);
  \draw (v2) to node[mylabel]{0.5} (v3);
  \draw (v4) to node[mylabel,above]{0.5} (v3);
  \draw (v1) to node[mylabel,above]{0.5} (v4);
  \end{scope}
\end{tikzpicture}
\caption{}
\end{subfigure}
\caption{\label{figure:df:vs:edf}
An instance where the LP relaxation of the extended directed flow formulation gives a better bound than the directed flow formulation, cf.\ part (E) in the proof of Lemma \ref{lemma:df:vs:edf}. 
(a) depicts the input graph and  
(b) and (c) give valid flows for sets 1 and 2 (dashed arcs route flow of 0.5).
}
\end{figure}

\subparagraph{E. Example for strict inequality.} 
Figure \ref{figure:df:vs:edf} gives an example with $x\in \Proj_x(\LP^{df})$ but $x\not\in \Proj_x(\LP^{edf})$. 
The instance has unit edge costs and the two terminal sets $T^1 = \{a, d\}$ and $T^2 = \{b, c\}$ with $r^1 = a, r^2 = b$. 
The optimum solution to $\LP^{df}$ sets $x_{ij} := 0.5, \forall \{i,j\}\in E$, and the flows are given by 
Figure (b) and (c) with dashed arcs routing a flow value of value $0.5$. 
Hence, the optimum solution value of $\LP^{df}$ is 2.

On the other hand, this solution is not valid for model $\LP^{edf}$. 
A value of 0.5 for each edge implies a flow for the first terminal set as depicted in Figure (b). 
Then, it is not possible to route any flow for the second set (from node $b$ to $c$) without increasing the $x$ variables. 
Hence, it has to hold $z_{12} = 1$. 
However, sending a flow with value 1 from $a$ to nodes $b$ and $c$ while using the same arcs as in (b) is not possible. 
It is easy to see that the optimum solution to the LP relaxation of $\LP^{edf}$ has a value of  3 by picking any three edges. 
\end{proof}

\subsection{A Cut-Based Directed Formulation}
Let us now derive two directed cut-based formulations from~\eqref{ip:ext-dir-flow}.
The advantage of the first formulation is that it has few variables: 
We need two variables~$y_{ij}, y_{ji}$ and a variable~$x_{ij}$ for each edge~$\{i,j\} \in E$.
Additionally, for all~$k \in [K]$ and all $\ell \geq k$, we have a decision variable~$z_{k\ell}$ that tells us whether the terminals in $T^\ell$ should be connected to the root $r^k$, as before.
\begin{align}
  \min\Bigsetm{c^\transp x}{(x,y,z) \in \LP^{edc}\ \tandinteger} \tag{IPedc} \label{ip:ext-dir-cut}
\end{align}
where
\begin{subequations}
\begin{align}
    \LP^{edc} := \Bigl\{ (x,y,z) \mathrel{}\Bigm\lvert 
     y(\delta^+(S))            &\geq \srsum{\substack{\ell\leq k:\\ r^\ell \in S}} z_{\ell k} &&
                                         \begin{aligned}
                                           &\tforall\ k \in [K]\\
					  &\tandall\ S\subseteq V\colon T^k\cap S\not=T^k
                                         \end{aligned}\label{ip:ext-dir-cut-1}\\
         \sum_{\ell=1}^k z_{\ell k} &= 1 && \tforall\ k \in [K] \label{ip:ext-dir-cut-3}\\
        z_{kk} &\geq z_{k\ell} &&\begin{aligned}
                                 &\tforall\ k \in [K]\setminus\{1,K\}\\
                                 &\tandall\ \ell \geq k+1
                               \end{aligned}\label{ip:ext-dir-cut-z}\\
         y_{ij} + y_{ji}           &\leq x_{ij} && \tforall\ \{i,j\} \in E\label{ip:ext-dir-cut-4}\\
         y_{ij}, y_{ji}, x_{ij}    &\in [0,1] && \tforall\ \{i,j\} \in E\\
         z_{k\ell}                 &\in [0,1] && \begin{aligned} &\tforall\ k\in [K]\\ &\tandall\ \ell \geq k\end{aligned}\,\Bigr\}.
\end{align}
\end{subequations}

Using the directing procedure from~\cite{MR2005} we can construct a feasible solution to~\eqref{ip:ext-dir-cut} from any Steiner Forest.
\begin{lemma}
  Let $F:=(V_F, E_F)$ be a feasible Steiner Forest for $(G,\terms, c)$ with cost $C$.
  Then, there exists an integer feasible solution $(x^\ast, y^\ast, z^\ast)$ for~\eqref{ip:ext-dir-cut} with value $C$.
\end{lemma}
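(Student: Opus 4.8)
The plan is to mimic the construction used in the proof of Lemma~\ref{lemma:correctness:ext-dir-flow}, but now producing the $y$-variables from an \emph{orientation} of $F$ rather than from flow values. First I would set $x^\ast_{ij} := 1$ exactly for $\{i,j\} \in E_F$ and $x^\ast_{ij} := 0$ otherwise, so that $c^\transp x^\ast = C$ holds immediately. Next I would fix the $z$-variables: since $F$ is feasible, for each $k$ all of $T^k$ (in particular $r^k$) lies in one connected component of $F$. For each connected component $\mathcal{C}$ of $F$ that contains a root, let $r^{\rho}$ be the root of lowest index in $\mathcal{C}$, set $z^\ast_{\rho\rho} := 1$ and $z^\ast_{\rho j} := 1$ for every other root $r^j \in \mathcal{C}$, and set all remaining $z$-entries to $0$. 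One checks directly that this satisfies~\eqref{ip:ext-dir-cut-3} (each $r^k$ has the unique ``parent'' $r^{\rho(\mathcal{C})}$, whose index is $\le k$) and~\eqref{ip:ext-dir-cut-z} (if $z^\ast_{k\ell}=1$ for some $\ell>k$, then $r^k$ is the lowest-index root of its component, hence $z^\ast_{kk}=1$).

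For the $y$-variables I would invoke the directing procedure of~\cite{MR2005}: each connected component of $F$ is a tree, so orienting every edge of a component $\mathcal{C}$ away from $r^{\rho(\mathcal{C})}$ (and, for the components without a root, away from an arbitrary vertex) turns $\mathcal{C}$ into an arborescence. I would then set $y^\ast_{ij} := 1$ if $\{i,j\} \in E_F$ is oriented as the arc $(i,j)$, and $y^\ast_{ij} := 0$ otherwise. Since exactly one of $y^\ast_{ij}, y^\ast_{ji}$ equals $1$ precisely when $\{i,j\}\in E_F$, constraint~\eqref{ip:ext-dir-cut-4} holds with equality, and all variables lie in $\{0,1\}\subseteq[0,1]$.

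The only constraint requiring a genuine argument is the cut constraint~\eqref{ip:ext-dir-cut-1}. Fix $k \in [K]$ and $S \subseteq V$ with $T^k \cap S \neq T^k$. There is a unique $\ell_0 \le k$ with $z^\ast_{\ell_0 k}=1$, namely $\ell_0 = \rho(\mathcal{C})$ where $\mathcal{C}$ is the component of $r^k$, so the right-hand side equals $1$ if $r^{\ell_0}\in S$ and $0$ otherwise; in the latter case there is nothing to prove since $y^\ast \ge 0$. If $r^{\ell_0}\in S$, pick a terminal $t \in T^k \setminus S$ (it exists because $T^k \not\subseteq S$); as $t$ and $r^{\ell_0}$ lie in $\mathcal{C}$ and the edges of $\mathcal{C}$ form an arborescence rooted at $r^{\ell_0}$, there is a directed $r^{\ell_0}$-$t$-path inside $\mathcal{C}$, which starts in $S$ and ends outside $S$, hence contains an arc $(i,j)$ with $i\in S$, $j\notin S$; this arc contributes $y^\ast_{ij}=1$ to $y^\ast(\delta^+(S))$, so $y^\ast(\delta^+(S))\ge 1$. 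Assembling the pieces yields a feasible integer $(x^\ast,y^\ast,z^\ast)$ of value $C$.

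I expect the main (and essentially only) obstacle to be the bookkeeping around the root hierarchy together with the cut-crossing argument rather than any analytic difficulty; one must be careful that (i) a feasible forest really places each $T^k$ inside a single component, (ii) the lowest-index root $r^{\rho}$ of that component satisfies $\rho\le k$ so that $z^\ast_{\rho k}$ is a legal variable, and (iii) components containing no terminal contribute only extra, harmless terms to $y^\ast(\delta^+(S))$.
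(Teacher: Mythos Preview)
Your proposal is correct and follows essentially the same construction as the paper: orient each component of $F$ away from its lowest-index root via the Magnanti--Raghavan procedure, read off $x^\ast,y^\ast$ from the orientation, set $z^\ast_{\rho k}=1$ for each root $r^k$ in the component with lowest-index root $r^\rho$, and verify the cut constraint~\eqref{ip:ext-dir-cut-1} by the directed-path-crosses-the-cut argument. Your handling of the $z$-indices is in fact cleaner than the paper's (which has a minor index transposition), and your explicit treatment of root-free components is a small extra care the paper omits.
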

\begin{proof}
  To construct a feasible solution~$(x^\ast, y^\ast, z^\ast)$, we first orient the edges in~$F$ with the procedure from~\cite{MR2005} and denote the oriented edge set by~$A$.
  Suppose that~$F^1,\dots,F^Q$ are the connected components of~$F$ and let~$\rho(q)$ be the lowest index of any root node in~$F^q$, for~$q \in [Q]$.
  Then, for all~$\{i,j\}$, we set~$y^\ast_{ij} = 1$ or~$y^\ast_{ji} =1$ if we have~$(i,j) \in A$ or~$(j,i) \in A$, respectively. We set~$x^\ast_{ij} = 1$ in both cases. 
  Otherwise, we let~$y_{ij} = y_{ji} = x_{ij} = 0$. 
  For all $q \in [Q]$ and all terminal sets $T^k \subseteq F^q$, we set~$z_{k\rho(q)} = 1$ and~$z_{k\ell} = 0$ for all $\ell \not= \rho(q)$.
  Observe that this assignment is well defined: As $F$ is a feasible Steiner Forest we have either $T^k \subseteq F^q$ or $T^k \cap F^q = \emptyset$ for all~$k \in [K]$ and all~$q \in [Q]$.  
  This solution~$(x^\ast, y^\ast, z^\ast)$ has an objective value of $C$.
 
  By construction, it cannot happen that~$(i,j) \in A$ \emph{and}~$(j,i) \in A$ and thus constraint~\eqref{ip:ext-dir-cut-4} is satisfied by our assignment.
  Likewise, our choice is such that~$\sum_{\ell=1}^k z^\ast_{\ell k} = 1$ for all~$k \in [K]$.
  Thus, our solution~$(x^\ast, y^\ast, z^\ast)$ satisfies constraint~\eqref{ip:ext-dir-cut-3}. 
   
  It remains to show that~$(x^\ast, y^\ast, z^\ast)$ satisfies~\eqref{ip:ext-dir-cut-1}. 
  Consider an arbitrary cut-set~$S \subseteq V$ and some~$k \in [K]$ together with a terminal~$t \in T^k\setminus\{r^k\}$.
  Assume that~$t \not\in S$  and $T^k\subseteq F^q$, and observe that the right-hand side of~\eqref{ip:ext-dir-cut-1} is strictly positive if and only if~$S$ contains~$r^{\rho(q)}$ because~$z_{\ell k} = 0$ for all~$\ell \not= r^{\rho(q)}$.
  In this case, however, the directed forest~$A$ contains a directed~$r^{\rho(q)}$-$t$-path by construction and thus~$y^\ast(\delta^+(S)) \geq 1$.
  Since in particular~$\sum_{\ell=1}^k z^\ast_{\ell k} \leq 1$, this means that~\eqref{ip:ext-dir-cut-1} is satisfied.
 \end{proof}
 
On the other hand, integer feasible solutions to~\eqref{ip:ext-dir-cut} imply feasible Steiner Forests.
\begin{lemma}
Let $(x^\ast, y^\ast, z^\ast)$ be a feasible integer solution to~\eqref{ip:ext-dir-cut}. 
Then, the induced graph $F:=(V, \cset{ \{i,j\} \in E}{x^\ast_{ij} = 1}$ is a feasible Steiner Forest.
\end{lemma}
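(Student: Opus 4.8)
The plan is to show that for every terminal set $T^k$ all of $T^k$ lies in a single connected component of $F$; the feasibility of $F$ as a Steiner Forest then follows. (Since the objective $c^\transp x$ is a nonnegative linear function, an optimal integer solution is automatically minimal, hence acyclic; in general $F$ need not be literally a forest, but then it contains one with the same connectivity, which is all the correctness statement requires.)

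First I would read off the combinatorial meaning of the $z$-variables. Because $(x^\ast,y^\ast,z^\ast)$ is integral and satisfies $\sum_{\ell=1}^{k} z^\ast_{\ell k}=1$ by constraint~\eqref{ip:ext-dir-cut-3}, for every $k\in[K]$ there is a unique index $p(k)\le k$ with $z^\ast_{p(k),k}=1$; think of $r^{p(k)}$ as the "parent root'' to which the terminals of $T^k$ are assigned. (The hierarchy constraint~\eqref{ip:ext-dir-cut-z} plays no role in this direction.)

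The core of the argument is a reachability argument against constraint~\eqref{ip:ext-dir-cut-1}. Fix $k\in[K]$, put $p:=p(k)$, and let $D:=(V,\{(i,j): y^\ast_{ij}=1\})$ be the digraph of used arcs on the bidirection of $G$. Let $S\subseteq V$ be the set of nodes reachable from $r^p$ in $D$. Then $r^p\in S$ and, by closedness of $S$ under reachability, no arc of $D$ leaves $S$, so $y^\ast(\delta^+(S))=0$. I claim $T^k\subseteq S$. If not, then $T^k\cap S\ne T^k$, so $S$ is an admissible cut-set in~\eqref{ip:ext-dir-cut-1} for this $k$, and we obtain
\[
 0 \;=\; y^\ast(\delta^+(S)) \;\ge\; \sum_{\ell\le k:\; r^\ell\in S} z^\ast_{\ell k} \;\ge\; z^\ast_{pk} \;=\; 1,
\]
a contradiction. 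Hence every node of $T^k$ — in particular $r^k$ together with every other terminal of $T^k$ — is reachable from $r^p$ in $D$.

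Finally I would transfer this back to $F$: by constraint~\eqref{ip:ext-dir-cut-4} and integrality, $y^\ast_{ij}=1$ forces $x^\ast_{ij}=1$, so every arc of $D$ is an edge of $F$; a directed $r^p$-$v$-walk in $D$ thus yields an undirected $r^p$-$v$-path in $F$. Applying this to all $v\in T^k$ shows that $T^k$ lies in one connected component of $F$, and since $k$ was arbitrary, $F$ contains an $s$-$t$-path for all $s,t$ in a common terminal set; removing edges on cycles one at a time then turns $F$ into a genuine feasible Steiner Forest without increasing its cost. The only mildly delicate point — and the one worth stating rather than routinely checking — is to prove $T^k\subseteq S$ \emph{in one step}: connecting each non-root terminal to ``some'' active root and then separately arguing that $r^k$ sits in that same component would be clumsier, whereas taking $S$ to be the reachability set from the single root $r^{p(k)}$ handles both at once.
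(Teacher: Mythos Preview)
Your proof is correct and follows essentially the same approach as the paper: identify the unique parent root $r^{p(k)}$ via constraint~\eqref{ip:ext-dir-cut-3}, pick a cut-set containing it with no outgoing capacity, and use~\eqref{ip:ext-dir-cut-1} to force all of $T^k$ inside. The only cosmetic difference is that the paper takes the undirected $x$-component of $r^{p(k)}$ as the cut-set (so $y^\ast(\delta^+(S))\ge 1$ would, via~\eqref{ip:ext-dir-cut-4}, produce an $F$-edge leaving a maximal component), whereas you take the directed $y$-reachability set (so $y^\ast(\delta^+(S))=0$ immediately) and only invoke~\eqref{ip:ext-dir-cut-4} afterward to pass from $D$ to $F$.
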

\begin{proof}
Consider some terminal set $T^k$ and some terminal $t \in T^k$. 
We show that $r^k$ and $t$ lie in the same connected component of $F$.

By constraint~\eqref{ip:ext-dir-cut-3} there is exactly one $k^\ast \leq k$ with $z^\ast_{k^\ast k} = 1$.
Consider the connected component $F'$ of $F$ (induced by $x^\ast$) that contains $r^{k^\ast}$ and assume that $t \not\in F'$.
Then, in particular for $F'$, constraint~\eqref{ip:ext-dir-cut-1} yields that
\begin{align*}
 y^\ast(\delta^+(F')) \geq \slrsum{\substack{\ell \leq k:\\ r_{\ell} \in F'}} z^\ast_{\ell k} \geq z^\ast_{k^\ast k} = 1
\end{align*}

Thus, at least one edge leaving $F'$ must be contained in $F$ which contradicts the maximality of $F'$. 
We conclude that $t\in F'$.
\end{proof}

\begin{lemma}
\label{lemma:edc:vs:dc}
$\Proj_x(\LP^{dc}) \supsetneq \Proj_x(\LP^{edc})$. 
\end{lemma}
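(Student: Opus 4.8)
The plan is to prove the inclusion $\Proj_x(\LP^{edc})\subseteq\Proj_x(\LP^{dc})$ first and then exhibit a point witnessing strictness. For the inclusion, Lemma~\ref{lemma:df:vs:dc} lets me work with $\LP^{df}$ instead of $\LP^{dc}$: given $(x,y,z)\in\LP^{edc}$ I will construct flows $f^t$, one per non-root terminal $t\in\nonrootterms$, so that $(x,f)\in\LP^{df}$. The point is that the single arc vector $y$ of the extended cut model already carries, for each layer $k$, enough capacity both to route the $z$-weighted supplies from the roots $r^\ell$ ($\ell\le k$) into the terminals of $T^k$ and, after reversal, from $r^k$ to the other roots; the work is to stitch these pieces into genuine unit $r^k$-to-$t$ flows that are pairwise compatible in the sense of~\eqref{ip:dir-flow-2}.

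Concretely, I would fix $k\in[K]$ and invoke Gale's theorem (the feasible-flow criterion) twice on the bidirected graph with arc capacities $y$. First, for each non-root $t\in T^k$, with demand $1$ at $t$ and supply $z_{\ell k}$ at each $r^\ell$ ($\ell\le k$): the required cut inequality is trivial for cuts $S$ containing $t$ (the net supply is $\le 0$ by~\eqref{ip:ext-dir-cut-3}) and is precisely~\eqref{ip:ext-dir-cut-1} for cuts avoiding $t$ (those satisfy $T^k\cap S\neq T^k$), so I get $g^t$ with $0\le g^t\le y$ and the prescribed divergences. Second — and this is done once for \emph{all} terminals of the layer — with supply $z_{\ell k}$ at each $r^\ell$ ($\ell<k$) and demand $1-z_{kk}=\sum_{\ell<k}z_{\ell k}$ at $r^k$, giving a single flow $P$ with $0\le P\le y$; here the cut inequality again reduces to~\eqref{ip:ext-dir-cut-1} for cuts avoiding $r^k$ and is trivial otherwise. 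Reversing $P$ arcwise yields $\check P$ (so $\check P_{ij}=P_{ji}\le y_{ji}$), and I set $f^t:=g^t+\check P$. Adding divergences, and using $z_{kk}+\sum_{\ell<k}z_{\ell k}=1$ from~\eqref{ip:ext-dir-cut-3}, shows $f^t$ is a unit flow from $r^k$ to $t$; distinctness of the nodes involved follows from disjointness of the terminal sets.

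The step I expect to be the real work is checking $f^s_{ij}+f^t_{ji}\le x_{ij}$ of~\eqref{ip:dir-flow-2} for $s,t\in T^k\setminus\{r^k\}$, because a priori $g^s$, $g^t$ and $\check P$ may each use almost all of $y$, so the bound could slip to $2x_{ij}$. The rescue is that $\check P$ is the \emph{same} for every terminal of layer $k$: after cancelling antiparallel arc-flow on each edge, the signed value of $f^t$ on $\{i,j\}$ (direction $i\to j$) equals $(g^t_{ij}-g^t_{ji})-(P_{ij}-P_{ji})$, so the $P$-contribution drops out of $f^s_{ij}+f^t_{ji}$, and a short sign analysis bounds the latter by $g^s_{ij}+g^t_{ji}\le y_{ij}+y_{ji}\le x_{ij}$ via~\eqref{ip:ext-dir-cut-4}. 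This parallels parts~C.3 and~D.3 of the proof of Lemma~\ref{lemma:df:vs:edf}. Removing flow cycles afterwards keeps every arc value in $[0,1]$, so $(x,f)\in\LP^{df}$ and hence $x\in\Proj_x(\LP^{dc})$.

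For strictness I will reuse the four-cycle of part~(E) in the proof of Lemma~\ref{lemma:df:vs:edf}, with unit edge costs, two size-two terminal sets $T^1=\{a,d\}$ and $T^2=\{b,c\}$ forming the two diagonals, and roots $r^1=a$, $r^2=b$. The all-halves point $x_e=\tfrac12$ lies in $\Proj_x(\LP^{dc})$ — which here equals $\LP^{uc}$, since both terminal sets have size two and every cut of a $4$-cycle has at least two edges — at cost $2$. But it is not in $\Proj_x(\LP^{edc})$: by~\eqref{ip:ext-dir-cut-3} one must have $z_{11}=1$ and $z_{12}+z_{22}=1$; constraint~\eqref{ip:ext-dir-cut-1} for $T^1$ then forces every $y$-arc to point away from $a$ with value $\tfrac12$ (reverse arcs $0$); and constraint~\eqref{ip:ext-dir-cut-1} for $T^2$ at the cut separating $c$ from the other three nodes then requires $\tfrac12\ge z_{12}+z_{22}=1$, a contradiction. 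Hence $\tfrac12\cdot\mathbf 1\in\Proj_x(\LP^{dc})\setminus\Proj_x(\LP^{edc})$, which completes the proof.
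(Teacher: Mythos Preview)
Your proof is correct and follows essentially the same route as the paper's sketch: obtain root-to-terminal and root-to-root flows from the cut constraints~\eqref{ip:ext-dir-cut-1} (you do this via Gale's theorem as one transshipment per terminal, the paper as separate $r^\ell$-$t$ flows), reverse the root-to-root part to reroute everything through $r^k$, and verify the capacity constraint by the same cancellation argument as in parts~C.3/D.3 of Lemma~\ref{lemma:df:vs:edf}; your strictness witness is the same four-cycle instance the paper points to in Figure~\ref{fig:lp-comparison}. The only cosmetic difference is that you land in $\LP^{df}$ and invoke Lemma~\ref{lemma:df:vs:dc}, whereas the paper constructs $\hat y^k$ directly and lands in $\LP^{dc}$.
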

\begin{proof}[Proof (Sketch)] 
Let $(\tilde x, \tilde y, \tilde z)\in\LP^{edc}$. 
Set $\hat x := \tilde x$. 
Now, consider and fix a terminal set $k\in[K]$. 
Then, for each terminal $t\in T^k$, and each root $r^\ell$ with $\ell \leq k$ 
construct a flow $\tilde f^{\ell t}$ from $r^\ell$ to $t$ of value $z_{\ell k}$. 
Notice that if $k > 1$ we also have a flow from $r^\ell$ to $r^k$. 
Similar to the arguments and the flow construction used in the proof of Lemma \eqref{lemma:df:vs:edf}
we also consider the reversed flow $\tilde f^{k r^\ell}$ ($k > \ell$) and 
combine the flows to $\hat f^{k t} := \tilde f^{k t} + \sum_{\ell < k} (\tilde f^{k r^\ell} + \tilde f^{\ell t})$. 

It is possible to assume that $\hat f$ satisfies the following properties: 
(i) $\hat f_{ij}^{k t} \leq \tilde y_{ij}$ and $\hat f_{ji}^{k t} \leq \tilde y_{ji}, \tforall\ \{i,j\}\in E$, due to the directed cuts \eqref{ip:ext-dir-cut-1}, 
(ii) $\hat f^{k t}$ is asymmetric (as discussed in Lemma \eqref{lemma:df:vs:edf}),  
(iii), $\hat f^{k t}$ satisfies the flow conservation, 
and (iv) the flow value of $\hat f^{k t}$ is 1. 
Using this flow we set $\hat y_{ij}^k := \max_{t\in T^k}\{\hat f^{k t}_{ij}\}$. 
Due to properties (i)+(ii) it holds $\hat y_{ij}^k + y_{ji}^k \leq x_{ij}, \tforall\ \{i,j\}\in E$. 
Moreover, due to (iii)+(iv) $\hat y$ satisfies the directed cuts \eqref{ip:ext-dir-cut-1}.
Hence, $(\hat x, \hat y)$ is a feasible solution to $\LP^{dc}$ with the same solution value. 

An instance showing the strict inequality is given by Figure \ref{fig:lp-comparison}. 
\end{proof}

\subsection{A Strengthened Cut-Based Directed Formulation}

It will turn out that~\eqref{ip:ext-dir-cut} is weaker than the flow model~\eqref{ip:ext-dir-flow}.
We can retain the strength of~\eqref{ip:ext-dir-flow}, however, if we allow more variables. 
The following cut-based formulation is equivalent to~\eqref{ip:ext-dir-flow}. 
Its constraints \eqref{ip:str-ext-dir-cut-z-sum-1} and \eqref{ip:str-ext-dir-cut-z} for the correct assignment of $z$-variables are the same as in the flow-based model. 
\begin{align}
  \min\Bigsetm{c^\transp x}{(x,y,z) \in \LP^{sedc}\ \tandinteger} \tag{IPsedc} \label{ip:str-ext-dir-cut}
\end{align}
where
\begin{subequations}
\begin{align}
  \LP^{sedc} := \Bigl\{ (x,y,z) \mathrel{}\Bigm\lvert 
         y^k(\delta^+(S))            &\geq z_{k\ell} &&
                                         \begin{aligned}
                                           &\tforall\ k \in [K]\\
                                           &\tandall\ \ell \geq k\\
					&\tandall\ S\subseteq V\colon r^k \in S, S \cap T^{\ell} \not=T^\ell
                                         \end{aligned}\label{ip:str-ext-dir-cut-cuts}\\
         \sum_{\ell=1}^k z_{\ell k} &= 1 && \tforall\ k \in [K]
                                        \label{ip:str-ext-dir-cut-z-sum-1}\\
        z_{kk} &\geq z_{k\ell} &&\begin{aligned}
                                 &\tforall\ k \in [K]\setminus\{1,K\}\\
                                 &\tandall\ \ell \geq k+1
                               \end{aligned}\label{ip:str-ext-dir-cut-z}\\
         \sum_{k \in [K]} (y^k_{ij} + y^k_{ji})           &\leq x_{ij} && \tforall\ \{i,j\} \in E\label{ip:str-ext-dir-cut-capacity}\\
         y^k_{ij}, y^k_{ji}            &\in [0,1] && \begin{aligned}
                                              &\tforall\ \{i,j\} \in E\\
                                              &\tandall\ k \in [K]
                                              \end{aligned}\\
         x_{ij}    &\in [0,1] && \tforall\ \{i,j\} \in E\\
         z_{k\ell}                 &\in [0,1] && \begin{aligned} &\tforall\ k\in [K]\\ &\tandall\ \ell \geq k\end{aligned}\,\Bigr\}.
\end{align}
\end{subequations}

Constraints \eqref{ip:str-ext-dir-cut-cuts} are the directed cuts which depend on the set $k$, a second terminal set $\ell$ with $\ell\geq k$, and the related $z_{k\ell}$ variable.  
If a root node $r^k$ is an assigned parent node for terminal set $T^\ell$, i.e., $z_{k\ell} > 0$, then all directed cuts separating $r^k$ from any terminal in $T^\ell$ need to have a value of at least $z_{k\ell}$. 

Constraint \eqref{ip:str-ext-dir-cut-capacity} is a simple capacity constraint which implies that any used arc in any arborescence is payed for in the objective function.

\begin{lemma}
\label{lemma:correctness:sedc}
\eqref{ip:str-ext-dir-cut} models the Steiner Forest problem correctly. 
\end{lemma}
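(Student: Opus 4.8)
The plan is to establish the usual two-way correspondence between the formulation and the problem, exactly in the spirit of the proof of Lemma~\ref{lemma:correctness:ext-dir-flow} and of the two correctness lemmas for~\eqref{ip:ext-dir-cut}: every feasible Steiner Forest yields a feasible integer point of~$\LP^{sedc}$ of the same cost, and the subgraph induced by any feasible integer point of~$\LP^{sedc}$ connects every terminal set (and, after dropping cycle edges, is a feasible Steiner Forest of no larger cost). Since these two statements together pin down both optima, they give correctness.

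For the first direction I would take a feasible Steiner Forest $F$ with connected components $F^1,\dots,F^Q$ and cost $C$, and apply the Magnanti--Raghavan directing procedure: orient each $F^q$ into an arborescence rooted at $r^{\rho(q)}$, where $\rho(q)$ is the smallest root index occurring in $F^q$. I then set $x^\ast$ to the incidence vector of the edges of $F$; set $z^\ast_{\rho(q),k}=1$ whenever $T^k\subseteq F^q$ and all remaining $z^\ast$-entries to $0$; and for each \emph{parent} root $r^k$ (i.e.\ the root with $k=\rho(q)$ for its component $F^q$) let the layer $y^k$ be the arc-incidence vector of the oriented $F^q$, while $y^k=\mathbf{0}$ for every non-parent root. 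Constraints~\eqref{ip:str-ext-dir-cut-z-sum-1} and~\eqref{ip:str-ext-dir-cut-z} hold because each terminal set receives exactly one parent and the parent of any set is itself a parent node. The capacity constraint~\eqref{ip:str-ext-dir-cut-capacity} holds because the components are vertex-disjoint, so every edge of $F$ is used by a single layer $y^k$ and in exactly one of its two orientations. For the directed cuts~\eqref{ip:str-ext-dir-cut-cuts}: if $z^\ast_{k\ell}=1$ then $r^k$ is the parent root of its component $F^q$ and $T^\ell\subseteq F^q$, so $y^k$ contains a directed $r^k$-$t$-path for every $t\in T^\ell$; any cut $S$ with $r^k\in S$ and $T^\ell\not\subseteq S$ is crossed by such a path, whence $y^k(\delta^+(S))\ge 1=z^\ast_{k\ell}$, and if $z^\ast_{k\ell}=0$ the inequality is trivial.

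For the second direction I would take a feasible integer point $(x^\ast,y^\ast,z^\ast)$ of $\LP^{sedc}$, with layers $y^1,\dots,y^K$, and set $F:=(V,\{e\in E:x^\ast_e=1\})$. Fixing a terminal set $T^k$, constraint~\eqref{ip:str-ext-dir-cut-z-sum-1} provides the unique $k^\ast\le k$ with $z^\ast_{k^\ast k}=1$; let $F'$ be the connected component of $F$ that contains $r^{k^\ast}$. If some $t\in T^k$ (possibly $t=r^k$) lay outside $F'$, then the cut $S$ given by the vertex set of $F'$ would satisfy $r^{k^\ast}\in S$ and $T^k\not\subseteq S$, so~\eqref{ip:str-ext-dir-cut-cuts}, applied with the role of $k$ played by $k^\ast$ and that of $\ell$ by $k$ (admissible since $k^\ast\le k$), forces $y^{k^\ast}(\delta^+(S))\ge z^\ast_{k^\ast k}=1$; hence some arc leaving $S$ carries positive $y^{k^\ast}$-value, and by~\eqref{ip:str-ext-dir-cut-capacity} the corresponding edge lies in $\delta(S)$ and in $F$, contradicting the maximality of the component $F'$. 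Therefore all of $T^k$, in particular $r^k$, lies in $F'$, so $T^k$ is connected in $F$. As $c$ is nonnegative, deleting the edges of any cycle of $F$ leaves a forest that still connects every $T^k$ and costs at most $c^\transp x^\ast$; combined with the first direction this shows that~\eqref{ip:str-ext-dir-cut} and the Steiner Forest problem share the same optimal value.

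The flow/cut bookkeeping is routine; the step that needs genuine care is the capacity constraint~\eqref{ip:str-ext-dir-cut-capacity} in the first direction. Unlike in~\eqref{ip:ext-dir-cut}, the $y$-variables are here split into one layer per terminal set, so one has to check that the per-component arborescences never route two different layers $y^k$ through the same edge -- which is precisely where vertex-disjointness of $F^1,\dots,F^Q$ enters. A secondary subtlety is keeping the index ranges of~\eqref{ip:str-ext-dir-cut-cuts} straight: the constraint is stated only for $\ell\ge k$, and the soundness argument must instantiate it at the pair $(k^\ast,k)$, which is legitimate exactly because $k^\ast\le k$.
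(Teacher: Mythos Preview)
Your proposal is correct and follows essentially the same two-direction strategy as the paper's proof: orient each component of a feasible forest at its lowest-index root to build a feasible integer point of~$\LP^{sedc}$, and conversely use the cut constraints together with~\eqref{ip:str-ext-dir-cut-z-sum-1} to show that the edge set induced by any feasible integer point connects every terminal set to its parent root. Your write-up is in fact more thorough than the paper's (which largely defers to Lemma~\ref{lemma:correctness:ext-dir-flow}); in particular you spell out explicitly why disjointness of the components is what makes~\eqref{ip:str-ext-dir-cut-capacity} go through, and you handle the possible acyclicity gap (dropping cycle edges under nonnegative costs), which the paper glosses over by working directly with optimum solutions.
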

\begin{proof}
The proof is based on the same arguments as in the proof of Lemma \ref{lemma:correctness:ext-dir-flow}.  
Let $\tilde E \subseteq E$ be an optimal solution to the SFP. 
Variables $z$ can be assigned as before and then, $\tilde E$ can again be oriented such that each connected component is an arborescence rooted at its parent node  
giving values to variables $y^1, \ldots, y^K$ and $x$. 
Since the arborescences are disjoint it follows that constraints \eqref{ip:str-ext-dir-cut-capacity} are satisfied. 
Hence, we obtain a feasible solution to \eqref{ip:str-ext-dir-cut} with the same objective value. 

On the other hand, an optimum solution $(\tilde x, \tilde y, \tilde z)$ to \eqref{ip:str-ext-dir-cut} implies a valid hierarchy of the terminal sets. 
Moreover, constraints \eqref{ip:str-ext-dir-cut-cuts} ensure that each terminal set is connected to its parent node. 
Hence, $\tilde E := \{e\in E \mid \tilde x_e = 1\}$ is a feasible solution to the SFP with the same cost. 
\end{proof}

\begin{lemma}
\label{lemma:edf:vs:sedc}
$\Proj_x(\LP^{edf}) = \Proj_x(\LP^{sedc})$.
\end{lemma}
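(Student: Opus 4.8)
The plan is to establish the two inclusions separately, each time by invoking the max-flow--min-cut theorem (Gale~\cite{Gale1957}) layer by layer, exactly as in Lemma~\ref{lemma:df:vs:dc}. The first observation is that \eqref{ip:ext-dir-flow} and \eqref{ip:str-ext-dir-cut} share the very same constraints on the hierarchy variables $z$ (compare \eqref{ip:ext-dir-flow-z-sum-1}/\eqref{ip:str-ext-dir-cut-z-sum-1} and \eqref{ip:ext-dir-flow-z}/\eqref{ip:str-ext-dir-cut-z}); so in both directions I would keep the vectors $x$ and $z$ fixed and only rebuild the flow variables $f$ from the arc variables $y$, or conversely. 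Unlike in Lemma~\ref{lemma:df:vs:edf}, I expect no reverse-flow construction to be needed here, because both models already encode the same root hierarchy through $z$ -- the correspondence should be directly layer by layer.

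For $\Proj_x(\LP^{edf})\subseteq\Proj_x(\LP^{sedc})$ I would start from $(x,f,z)\in\LP^{edf}$ and set $y^k_{ij}:=\max_{t\in\terms_r^{k\ldots K}} f^{kt}_{ij}$. Then $y^k_{ij}+y^k_{ji}$ equals $f^{ks}_{ij}+f^{kt}_{ji}$ for some $s,t\in\terms_r^{k\ldots K}$, which is bounded by $x^k_{ij}$ via \eqref{ip:ext-dir-flow-edges-capacity}; summing over $k$ and using \eqref{ip:ext-dir-flow-capacity} then gives the capacity constraint \eqref{ip:str-ext-dir-cut-capacity}. For the directed cuts \eqref{ip:str-ext-dir-cut-cuts}, given $k$, $\ell\geq k$ and a cut $S$ with $r^k\in S$ and $S\cap T^\ell\neq T^\ell$, I would pick a terminal $t\in T^\ell\setminus S$ (note $t\in\terms_r^{k\ldots K}$) and use that $f^{kt}$ is, by \eqref{ip:ext-dir-flow-flow-conservation}, a flow of value $z_{k\ell}$ from $r^k\in S$ to $t\notin S$, so $y^k(\delta^+(S))\geq f^{kt}(\delta^+(S))\geq z_{k\ell}$. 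The $z$-constraints are inherited verbatim, so $x\in\Proj_x(\LP^{sedc})$.

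For $\Proj_x(\LP^{sedc})\subseteq\Proj_x(\LP^{edf})$ I would start from $(x,y,z)\in\LP^{sedc}$ and set $x^k_{ij}:=y^k_{ij}+y^k_{ji}$, so that $\sum_{k\in[K]} x^k_{ij}\leq x_{ij}\leq 1$ by \eqref{ip:str-ext-dir-cut-capacity}, giving \eqref{ip:ext-dir-flow-capacity}. For each $k$ and each $t\in\terms_r^{k\ldots K}$ with $\tau(t)=\ell$, any cut $S$ with $r^k\in S$ and $t\notin S$ satisfies $S\cap T^\ell\neq T^\ell$, so \eqref{ip:str-ext-dir-cut-cuts} forces $y^k(\delta^+(S))\geq z_{k\ell}$; max-flow--min-cut then yields an $r^k$-$t$ flow $f^{kt}$ of value exactly $z_{k\ell}$ with $f^{kt}_{ij}\leq y^k_{ij}$, which I would normalize to be acyclic and asymmetric (the same WLOG step taken in Lemma~\ref{lemma:df:vs:edf}), so that $f^{kt}_{ij}\leq z_{k\ell}\leq 1$. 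Then $f^{ks}_{ij}+f^{kt}_{ji}\leq y^k_{ij}+y^k_{ji}=x^k_{ij}$ is exactly \eqref{ip:ext-dir-flow-edges-capacity}, flow conservation \eqref{ip:ext-dir-flow-flow-conservation} holds by construction, the $z$-constraints carry over, and hence $(x,f,z)\in\LP^{edf}$.

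The one step I expect to require a little care is reconciling the two different granularities of capacity: the single aggregated arc variable $y^k$ in \eqref{ip:str-ext-dir-cut} versus the family of pairwise inequalities $f^{ks}_{ij}+f^{kt}_{ji}\leq x^k_{ij}$ in \eqref{ip:ext-dir-flow}. I believe the substitutions $x^k_{ij}:=y^k_{ij}+y^k_{ji}$ and, in the other direction, $y^k_{ij}:=\max_t f^{kt}_{ij}$ make both translations lossless, and that the bound $z_{k\ell}\leq 1$ together with \eqref{ip:str-ext-dir-cut-capacity} keeps all variables within $[0,1]$; apart from that, the argument is just the per-layer analogue of Lemma~\ref{lemma:df:vs:dc}, so I would also reuse the standard decomposition facts (an acyclic $r^k$-$t$ flow of value $v$ has every arc value at most $v$) without spelling them out.
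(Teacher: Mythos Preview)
Your proposal is correct and follows the same approach as the paper: both observe that the $z$-constraints coincide verbatim and then invoke max-flow--min-cut per layer $k$ to pass between the flow description~\eqref{ip:ext-dir-flow-flow-conservation} and the cut description~\eqref{ip:str-ext-dir-cut-cuts}, with the capacity correspondences $y^k_{ij}:=\max_t f^{kt}_{ij}$ and $x^k_{ij}:=y^k_{ij}+y^k_{ji}$ making the translation between~\eqref{ip:ext-dir-flow-capacity}/\eqref{ip:ext-dir-flow-edges-capacity} and~\eqref{ip:str-ext-dir-cut-capacity} explicit. The paper's proof is a two-sentence sketch that simply asserts these equivalences; your version spells out the constructions and the verifications in both directions, which is more careful but not substantively different.
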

\begin{proof}
The constraints concerning the $z$ variables are identical in both models. 
When considering one particular terminal set $k\in[K]$ constraints \eqref{ip:ext-dir-flow-flow-conservation} model a flow of value $z_{k\ell}$ 
from $r^k$ to each terminal $t\in T^\ell$, for each $\ell \in\{k,\ldots, K\}$ (except $r^k$ itself).
On the other hand, the directed cuts \eqref{ip:str-ext-dir-cut-cuts} ensure that each directed cut separating $r^k$ and $t$ has a value of at least $z_{k\ell}$. 
This is obviously equivalent. 
Moreover, 
constraints \eqref{ip:ext-dir-flow-capacity} and \eqref{ip:ext-dir-flow-edges-capacity} on the one hand and constraint \eqref{ip:str-ext-dir-cut-capacity} on the other hand are equivalent, too.  
\end{proof}

\begin{lemma}
\label{lemma:edc:vs:sedc}
$\Proj_x(\LP^{edc}) \supsetneq \Proj_x(\LP^{sedc})$. 
\end{lemma}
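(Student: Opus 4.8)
The plan is to prove the inclusion $\Proj_x(\LP^{sedc}) \subseteq \Proj_x(\LP^{edc})$ by aggregating the layered arc variables $y^1,\dots,y^K$ of~\eqref{ip:str-ext-dir-cut} into the single arc variables $y$ of~\eqref{ip:ext-dir-cut}, and then to read off strictness directly from Figure~\ref{fig:lp-comparison}. Note that both formulations share the same $z$-variables and the same $z$-constraints, so only the cut constraints and the capacity constraints need attention.

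For the inclusion I would start from an arbitrary $(\tilde x, (\tilde y^k)_{k \in [K]}, \tilde z) \in \LP^{sedc}$ and build a candidate point for $\LP^{edc}$ by keeping $\hat x := \tilde x$ and $\hat z := \tilde z$ and setting $\hat y_{ij} := \sum_{k \in [K]} \tilde y^k_{ij}$ and $\hat y_{ji} := \sum_{k \in [K]} \tilde y^k_{ji}$ for every $\{i,j\} \in E$. The $z$-constraints \eqref{ip:ext-dir-cut-3}--\eqref{ip:ext-dir-cut-z} are verbatim copies of \eqref{ip:str-ext-dir-cut-z-sum-1}--\eqref{ip:str-ext-dir-cut-z}, hence satisfied by $\hat z = \tilde z$; the bounds on $\hat x$ and $\hat z$ are inherited; and $\hat y_{ij} + \hat y_{ji} = \sum_{k \in [K]}(\tilde y^k_{ij} + \tilde y^k_{ji}) \leq \tilde x_{ij} = \hat x_{ij}$ by \eqref{ip:str-ext-dir-cut-capacity}, which yields both the capacity constraint \eqref{ip:ext-dir-cut-4} and the bounds $\hat y_{ij}, \hat y_{ji} \in [0,1]$.

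The substantive step is the cut constraint \eqref{ip:ext-dir-cut-1}. Here I would fix $k \in [K]$ and a cut-set $S \subseteq V$ with $T^k \cap S \neq T^k$, and observe that for each index $\ell \leq k$ with $r^\ell \in S$ the strengthened cut \eqref{ip:str-ext-dir-cut-cuts}, read with root index $\ell$, terminal-set index $k$, and the same cut-set $S$, is applicable: its side conditions are exactly $k \geq \ell$, $r^\ell \in S$, and $S \cap T^k \neq T^k$, all of which hold. It therefore gives $\tilde y^\ell(\delta^+(S)) \geq \tilde z_{\ell k}$. Summing these inequalities over the relevant indices $\ell$ and discarding the nonnegative contributions of the remaining layers,
\begin{align*}
  \hat y(\delta^+(S)) \;=\; \sum_{m \in [K]} \tilde y^m(\delta^+(S)) \;\geq\; \sum_{\substack{\ell \leq k:\\ r^\ell \in S}} \tilde y^\ell(\delta^+(S)) \;\geq\; \sum_{\substack{\ell \leq k:\\ r^\ell \in S}} \tilde z_{\ell k},
\end{align*}
which is precisely the right-hand side of \eqref{ip:ext-dir-cut-1}. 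This shows $(\hat x, \hat y, \hat z) \in \LP^{edc}$ with $\hat x = \tilde x$, hence $\Proj_x(\LP^{sedc}) \subseteq \Proj_x(\LP^{edc})$.

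For strictness I would point to instance~(B) of Figure~\ref{fig:lp-comparison} (instance~(C) works as well), where the linear relaxation of~\eqref{ip:ext-dir-cut} has optimum value $2.5$ while that of~\eqref{ip:str-ext-dir-cut} has optimum value $3$; since minimizing $c^\transp x$ over $\Proj_x(\LP^{edc})$ gives the same value as over $\LP^{edc}$, and likewise for $\LP^{sedc}$, equality of the two projections would force these two optima to coincide, so the inclusion is strict. I expect the only delicate point to be the index bookkeeping in the cut-constraint step: one must keep straight that the root index of~\eqref{ip:str-ext-dir-cut-cuts} is what appears as the summation index in~\eqref{ip:ext-dir-cut-1}, and check that the two cut-side conditions ($r^\ell \in S$ and "$S$ omits a terminal of $T^k$") are exactly what makes the relevant strengthened cuts applicable; everything else — the $z$-constraints, the capacity constraint, and the variable bounds — is immediate from the aggregation.
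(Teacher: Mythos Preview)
Your proof is correct and follows essentially the same approach as the paper: you aggregate the layered arc variables via $\hat y := \sum_{k\in[K]} \tilde y^k$, keep $x$ and $z$ unchanged, verify the capacity constraint~\eqref{ip:ext-dir-cut-4} from~\eqref{ip:str-ext-dir-cut-capacity}, and obtain~\eqref{ip:ext-dir-cut-1} by summing the individual strengthened cuts~\eqref{ip:str-ext-dir-cut-cuts} over the relevant root indices $\ell \leq k$ with $r^\ell \in S$, then invoke Figure~\ref{fig:lp-comparison} for strictness. Your index bookkeeping in the cut step is in fact spelled out a bit more carefully than in the paper's own proof.
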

\begin{proof}
Let $(\tilde x,\tilde y,\tilde z) \in \LP^{sedc}$. 
We argue that ($\hat x := \tilde x, \hat y := \sum_{k\in[K]} \tilde y^k, \hat z := \tilde z) \in \LP^{edc}$. 
Since $x$ and $z$ variables are unchanged constraints 
\eqref{ip:ext-dir-cut-3} and \eqref{ip:ext-dir-cut-z} and the variable bounds are satisfied. 
Moreover, it clearly holds \eqref{ip:ext-dir-cut-4} $\tforall\ \{i,j\}\in E$:  $\hat y_{ij} + \hat  y_{ji} = \sum_{k\in[K]} \tilde y_{ij}^k  + \sum_{k\in[K]} \tilde y_{ji}^k \leq \tilde x_{ij} = \hat x_{ij}$ due to 
\eqref{ip:str-ext-dir-cut-capacity}. 

Finally, consider a directed cut $S \subseteq V\colon S\cap T^k \not=\emptyset$ for some set $k\in[K]$. 
Notice that a cut $S$ is relevant to the sum in the right-hand side of constraint \eqref{ip:ext-dir-cut-1} 
if and only if it is a valid cut for constraint \eqref{ip:str-ext-dir-cut-cuts} ($\ell$ and $k$ are interchanged in both constraints). 
Hence, it holds: 
\[
\hat y(\delta^+(S)) = \sum_{\ell=1}^K \tilde y^\ell(\delta^+(S)) \geq \sum_{\ell=1}^k \tilde y^\ell(\delta^+(S)) 
\geq \srsum{\substack{\ell\leq k:\\ r^\ell \in S}} \tilde z_{\ell k} = \srsum{\substack{\ell\leq k:\\ r^\ell \in S}} \hat z_{\ell k} 
\]

An example for strict inequality, i.e., $x \in \Proj_x(\LP^{edc})$ and $x\not\in\Proj_x(\LP^{sedc})$, is given by Figure \ref{fig:lp-comparison}. 
\end{proof}
We summarize the results of the discussion in Figure~\ref{fig:lp-relaxations}.

\begin{figure}
\centering
\begin{tikzpicture}[
  ipnode/.style={draw},
  rel/.style={->, >=stealth},
  rellabel/.style={fill=white, rectangle, font=\scriptsize}
  ]
  \draw ( 0, 0) node[ipnode]   (ipundir)   {$\Proj_x (\LP^{uf}) = \LP^{uc}$};
  \draw (-6, 0) node[ipnode]   (ipmr)      {$\Proj_x(\LP^{mr})$};
  \draw ( 6, 0) node[ipnode]   (ipklsvz)   {$\Proj_x(\LP^{klsvz})$};
  \draw ( 0,-2) node[ipnode]   (ipdir)     {$\Proj_x(\LP^{df}) = \LP^{dc}$};
  \draw ( 4,-4) node[ipnode]   (ipext)     {$\LP^{edc}$};
  \draw ( 0,-6) node[ipnode, label=right:{\scriptsize Lemma\ \ref{lemma:edf:vs:sedc}}]   (ipsext)    {$\Proj_x(\LP^{edf}) = \Proj_x(\LP^{sedc})$}; 
  
  \draw[rel] (ipundir) -- node[rellabel] {Lemma~\ref{lemma:uc:vs:mr}}    (ipmr);
  \draw[rel] (ipundir) -- node[rellabel] {Lemma~\ref{lemma:uc:vs:klsvz}} (ipklsvz);
  \draw[rel] (ipundir) -- node[rellabel] {Lemma~\ref{lemma:uf:vs:df}}    (ipdir);

  \draw[rel] (ipdir)   -- node[rellabel] {Lemma~\ref{lemma:edc:vs:dc}}   (ipext);

  \draw[rel] (ipdir)   -- node[rellabel] {Lemma~\ref{lemma:df:vs:edf}}   (ipsext);

  \draw[rel] (ipext)   -- node[rellabel] {Lemma~\ref{lemma:edc:vs:sedc}} (ipsext);
\end{tikzpicture}
\caption{\label{fig:lp-relaxations}%
  Relationship between the different linear programming relaxations. The arrows point in the direction of the stronger relaxation.}
\end{figure}
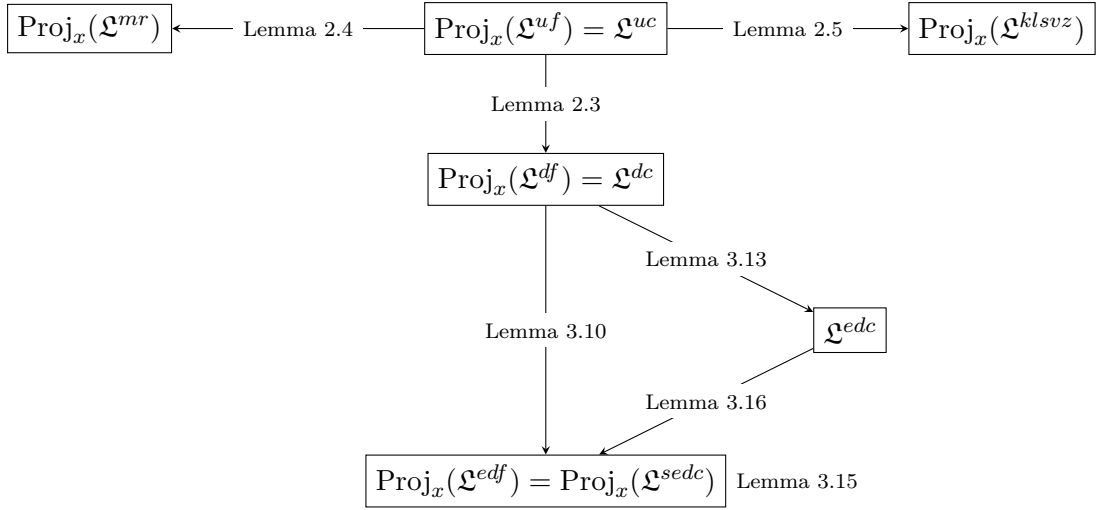

\section{Experimental Evaluation}\label{sec:experiments}

Figure~\ref{fig:lp-relaxations} in the previous section shows that the linear programming bound obtained from $\LP^{sedc}$ is never worse than the classical linear programming bounds for the Steiner Forest Problem. 
It does not tell us, however, by how much the new bound is better. 
While we cannot provide a theoretical guarantee on the quality of the bound, we evaluate its practical usefulness -- and the practical usefulness of the other linear programming formulations -- through computational experiments.   

\subsection{Setting}
We performed the experiments on a Debian 8 system with a \texttt{Intel(R) Xeon(R) E5-2690 v2} CPU running at 3 GHz.
Our implementation is in \texttt{C++} using \texttt{CPLEX 12.6.2} and was compiled with \texttt{g++-4.9.2} using the \texttt{-O2} flag.

\subsection{Details of the Implementation}

The formulations were solved as relaxations, i.e., all variables were set to be continous. 
We disabled the presolving and symmetry breaking capabilities of \texttt{CPLEX}.
Wherever it was useful and a polynomial time separation algorithm was available, we generated the constraints of the formulations dynamically:
\begin{itemize}
  \item The separation for the cut-set constraints~\eqref{ip:undir-cut-1} and~\eqref{ip:dir-cut-1} in~\eqref{ip:undir-cut} and in~\eqref{ip:dir-cut}, respectively, is standard; we use a minimum-$s$-$t$-cut procedure.
  \item In formulation~\eqref{ip:klsvz}, we separate the cut constraints~\eqref{ip:klsvz-1} and~\eqref{ip:klsvz-2} through repeated calls to a minimum-$s$-$t$-cut procedure. 
  \item We separate the subtour elimination constraints~\eqref{ip:ext-tree-2} in formulation~\eqref{ip:ext-tree} constraints with the standard construction. 
  We separate the three-cycle-inequalities~\eqref{ip:ext-tree-9}--\eqref{ip:ext-tree-11} through complete enumeration in time $O(|V|^3)$. 
  \item  We separate the cut-set constraints~\eqref{ip:ext-dir-cut-1} and~\eqref{ip:str-ext-dir-cut-cuts} in~\eqref{ip:ext-dir-cut} and in~\eqref{ip:str-ext-dir-cut}, respectively, with a minimum-$s$-$t$-cut procedure.
\end{itemize} 

\subsection{Benchmark Instances}

We generated 460 random network topologies using a method by Johnson, Minkoffs and Philipps~\cite{JMP2000}: 
First, distribute $n$ nodes uniformly at random in a unit square. 
Then, connect any two nodes $i$ and $j$ with an edge $\{i,j\}$ if their Euclidean distance is less than $\alpha / \sqrt{n}$, where $\alpha$ is a parameter for the random generator.
The cost of the edge $\{i,j\}$ is proportional to the Euclidean distance.
Finally, connect all nodes with a minimum Euclidean spanning tree to ensure that the instance is connected. 

To determine $k$ random terminal sets, we first select $p \cdot |V|$ nodes uniformly at random (the number $k\in [n/2]$ of terminal sets and the terminal percentage $p \in [0,1]$ are again parameters).
We then bring the selected nodes into a random order and draw $k-1$ distinct split points from $\{2,\dots,k-1\}$, thus splitting the random node order into $k$ distinct terminal sets.

For each $n \in \{25, 50, 150, 200\}$, we choose a small, a medium, and a large number of terminal sets $k$ (see Table~\ref{tab:nsolved} for details).
The percentage $t$ of terminal nodes is picked from $\{0.25, 0.5,  0.75, 1.0\}$ unless a combination of $n, k$ and $t$ results in a terminal set size of less than two. 
For each choice of $n$, $k$, and $t$, we generate five instances with $\alpha=1.6$ and five instances with $\alpha=2.0$.

\subsection{Solveability of the Linear Programming Relaxations}
In a first step, we evaluate the solveability of the linear programming models from the previous sections.
Figure~\ref{fig:solved-over-time} shows on the y-axis on how many instances the \emph{linear programming relaxation} of each model has been solved to optimality after a given amount of time on the x-axis.
There are 460 instances and a time limit of one hour. 
As a general trend, the largest part of the relaxations is solved within the first 400 seconds.
However, none of the models allows for the linear programming relaxation of \emph{all} instances to be solved: 
We observe that from the previously known models, only the na\"ive undirected cut-set formulation~\eqref{ip:undir-cut} allows the LP-solver to find the optimum solutions for the bulk of the relaxations. 
The new models~\eqref{ip:ext-dir-cut} and \eqref{ip:str-ext-dir-cut} solve a comparable number of relaxations to optimality and do so much faster than the na\"ive formulation~\eqref{ip:undir-cut}.
Out of the improvements of the na\"ive formulation~\eqref{ip:undir-cut}, mode~\eqref{ip:klsvz} performs best in terms of speed and number of instances solved. 
Using this model, the LP-solver finds optimum solutions to roughly 80\% of the instances. 
The directed cut-set formulation~\eqref{ip:dir-cut} and the new model~\eqref{ip:ext-tree} yield optimum solutions on around 55\% of the instances, with~\eqref{ip:dir-cut} performing slightly worse.
Finally, using the model~\eqref{ip:mr}, the LP-solver finds optimum solutions on less than 25\% of the relaxations.   

The time limit being chosen fairly generously, we identified three main causes for the solver to fail to solve a relaxation: 
First, the available memory was not sufficient to even build the (initial) model. 
Second, the initial model could be build, but the memory was not sufficient to add all the necessary constraints during separation.
Third, the separation process had not added all necessary constraints when the time limit was reached.
In case of the first failure type, we cannot extract any lower bound from the model. 
This type mainly occured on the static model~\eqref{ip:mr} where no separatation procedure could be used. 
Whether the constraints of type~\eqref{ip:mr-4}, \eqref{ip:mr-5}, and~\eqref{ip:mr-7} can be separated efficiently is an interesting open question.
In the other cases, suboptimal lower bounds are available.
Table~\ref{tab:nsolved} shows in how many cases we were able to extract \emph{any} bound from a linear programming relaxation
for each choice of the parameters $n$ and $k$, and a model $m$.
The table also shows in how many cases we obtained the optimum bound. 
In this sense, the table gives a more detailed picture on the situation in Figure~\ref{fig:solved-over-time}: 
The larger $n$ and $k$, the harder the relaxations are to solve. 

Let us first discuss model~\eqref{ip:mr}, as this model is the only one without dynamic constraint generation.
As could be expected given its size, the performance of~\eqref{ip:mr} degrades rapidly as $n$ or $k$ increase: 
It fails to reliably provide a bound on all but the smallest instances.
For $n > 25$ of if $k > 3$, we cannot expect to solve this model to optimality anymore.
Starting from $n=100$, no more bounds can be found using this model. 
All other models use a separation procedure and yield a valid bound in all cases.
While na\"ive cut-set model~\eqref{ip:undir-cut}, and the new models~\eqref{ip:ext-dir-cut} and~\eqref{ip:str-ext-dir-cut} yield optimum bounds even for the largest instances, the models~\eqref{ip:dir-cut} and \eqref{ip:ext-tree} start failing at $n=100$.
The model~\eqref{ip:klsvz} provides optimum bounds for $n \leq 150$.

\begin{landscape}
\begin{figure}
  \begin{tikzpicture}
  \input{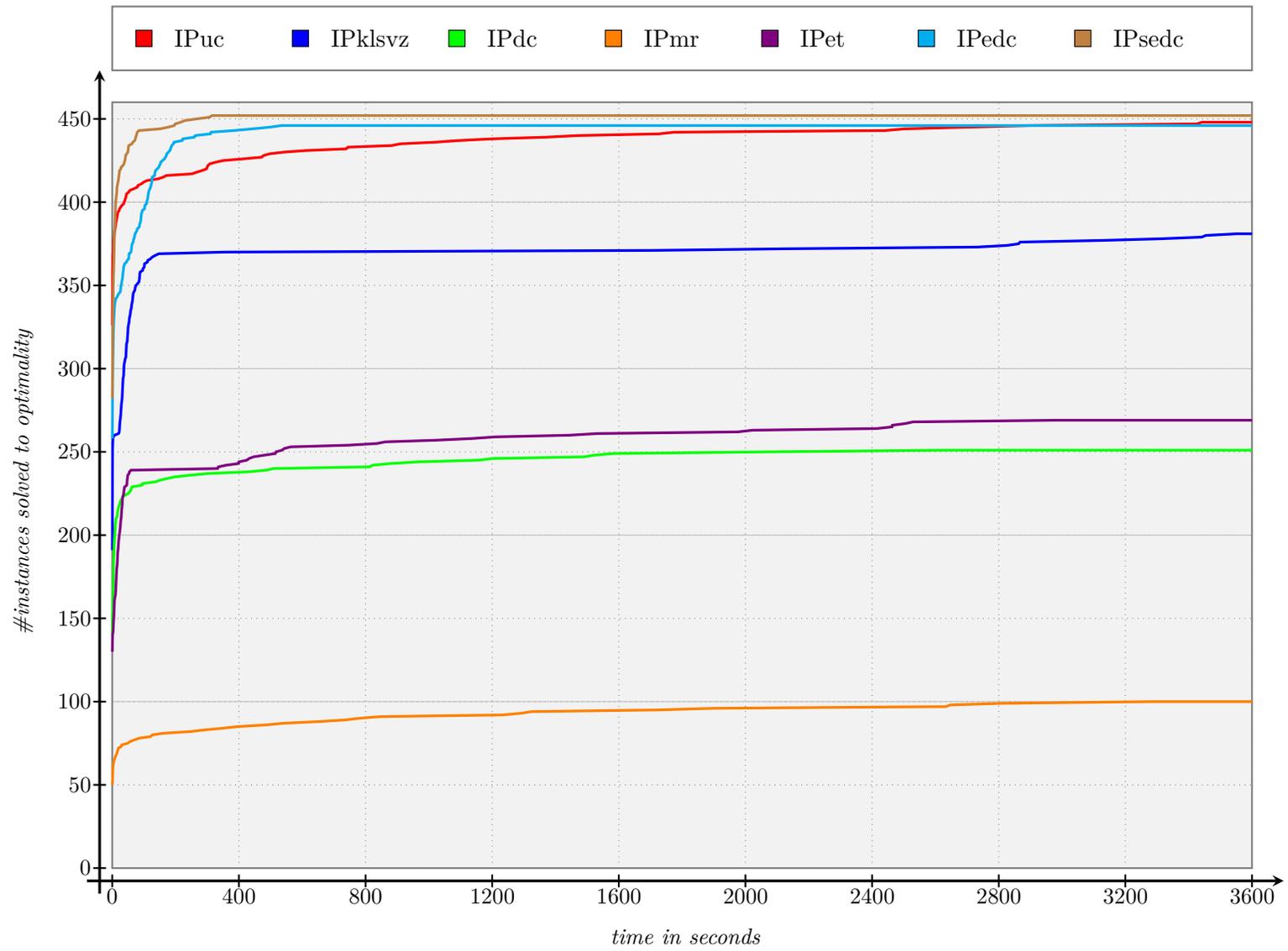}
  \end{tikzpicture}
  \caption{Number of instances solved to optimality after $x$ seconds.\label{fig:solved-over-time}}
\end{figure}
\end{landscape}

\begin{landscape}
\begin{table}[p]
\begin{tabular}{rrrrrrrrrrrrrrrr}
\toprule
    &    & \multicolumn{2}{c}{\ref{ip:undir-cut}} & \multicolumn{2}{c}{\ref{ip:dir-cut}} & \multicolumn{2}{c}{\ref{ip:klsvz}} & \multicolumn{2}{c}{\ref{ip:mr}} & \multicolumn{2}{c}{\ref{ip:ext-tree}} & \multicolumn{2}{c}{\ref{ip:ext-dir-cut}} & \multicolumn{2}{c}{\ref{ip:str-ext-dir-cut}} \\
$n$ & $k$ &       feas. &       opt. &      feas. &       opt. &      feas. &       opt. &      feas. &       opt. &      feas. &       opt. &      feas. &       opt. &      feas. &       opt. \\
\cmidrule(lr){1-2} \cmidrule(lr){3-4} \cmidrule(lr){5-6} \cmidrule(lr){7-8} \cmidrule(lr){9-10} \cmidrule(lr){11-12} \cmidrule(lr){13-14} \cmidrule(lr){15-16}  25 &   2 &       100\% &     100\% &      100\% &     100\% &      100\% &     100\% &      100\% &     100\% &      100\% &     100\% &      100\% &     100\% &      100\% &     100\% \\
 25 &   3 &       100\% &     100\% &      100\% &     100\% &      100\% &     100\% &      100\% &      75\% &      100\% &     100\% &      100\% &     100\% &      100\% &     100\% \\
 25 &   4 &       100\% &     100\% &      100\% &     100\% &      100\% &     100\% &       67\% &      37\% &      100\% &     100\% &      100\% &     100\% &      100\% &     100\% \\
\\
 50 &   3 &       100\% &     100\% &      100\% &     100\% &      100\% &     100\% &       60\% &      32\% &      100\% &     100\% &      100\% &     100\% &      100\% &     100\% \\
 50 &   4 &       100\% &     100\% &      100\% &     100\% &      100\% &     100\% &       35\% &      12\% &      100\% &      98\% &      100\% &     100\% &      100\% &     100\% \\
 50 &   5 &       100\% &     100\% &      100\% &     100\% &      100\% &     100\% &       20\% &       2\% &      100\% &     100\% &      100\% &     100\% &      100\% &     100\% \\
\\
100 &   5 &       100\% &     100\% &      100\% &      52\% &      100\% &     100\% &        0\% &       0\% &      100\% &      70\% &      100\% &     100\% &      100\% &     100\% \\
100 &  10 &       100\% &     100\% &      100\% &       0\% &      100\% &     100\% &        0\% &       0\% &      100\% &      28\% &      100\% &     100\% &      100\% &      98\% \\
100 &  15 &       100\% &     100\% &      100\% &       0\% &      100\% &     100\% &        0\% &       0\% &      100\% &       3\% &      100\% &     100\% &      100\% &     100\% \\
\\
200 &  10 &       100\% &      95\% &      100\% &       0\% &      100\% &      32\% &        0\% &       0\% &      100\% &       0\% &      100\% &      90\% &      100\% &      98\% \\
200 &  15 &       100\% &      92\% &      100\% &       0\% &      100\% &      38\% &        0\% &       0\% &      100\% &       0\% &      100\% &      88\% &      100\% &      95\% \\
200 &  20 &       100\% &      82\% &      100\% &       0\% &      100\% &      40\% &        0\% &       0\% &      100\% &       0\% &      100\% &      88\% &      100\% &      90\% \\
\bottomrule
\end{tabular}

\caption{Percentage of instances where a valid bound (an optimum bound) was found within one hour.\label{tab:nsolved}}
\end{table}
\end{landscape}

\subsection{Quality of the Linear Programming Bounds}
We now argue that the new models not only solve fast, but also provide good bounds. 
The easiest way to evaluate the quality of a bound would be to compare against the integer optimum. 
However, this optimum is not known in all cases. 
As an alternative, we look at the improvement over the bound obtained from the na\"ive model~\eqref{ip:undir-cut}: 
This model is fast and at worst yields a bound that is a 2-factor approximation for the integer optimum. 

It turns out that in our experiments, the bounds from~\eqref{ip:klsvz} are the same as the ones from the na\"ive model~\eqref{ip:undir-cut}.
We therefore do not show model~\eqref{ip:klsvz} in the following figures. 
Likewise, we removed model~\eqref{ip:mr} from the comparison as it did not provide a significant number of bounds. 
It remains to compare the known improvement~\eqref{ip:dir-cut} with the new models.

Figure~\ref{fig:bound-improvement} shows the factor by which the bounds obtained from \eqref{ip:dir-cut} with the new models~\eqref{ip:ext-dir-cut}, \eqref{ip:str-ext-dir-cut}, and~\eqref{ip:ext-tree} differ from the bound provided by~\eqref{ip:undir-cut}.
Recall that~\eqref{ip:undir-cut} is a guaranteed 2-approximation such that the integer optimum is at worst at the 200\% line in Figure~\ref{fig:bound-improvement}.
The figure is a standard box plot diagram in which each box corresponds to one model and aggregates all instances.
The lower and the upper whisker at each box show the maximum and the minimum factor throughout all the instances, respectively. 
Each box has a lower and an upper end; these depict the 25\%-percentile and the 75\%-percentile of the improvement factors, respectively. 
The thick line in each box shows the median of the improvement factors. 
Consider for instance the case where $n=25$ and focus on the model~\eqref{ip:dir-cut}: 
The lower whisker shows that the model provided a bound that was at least as good as the bound from~\eqref{ip:undir-cut} on all instances.
As shown by the upper whisker, the best bound found throughout all of the instances was a factor of 1.6 better than the bound from~\eqref{ip:undir-cut}.
The box itself shows that on 25\% of the instances, the factor was at least 1.4 (upper end of the box), on 50\% of the instances, the factor was at least 1.25 (median line), and on 75\% of the instances, the factor was at least 1.15 (lower end of the box).

Model~\eqref{ip:dir-cut} clearly improves on the na\"ive formulation. 
While the improvement tapers off towards the larger instances, we see a bound that is better by a factor of roughly 1.2 even for $n=200$.
We remark that~\eqref{ip:dir-cut} is no longer able to solve all linear programs to optimality for larger $n$ and thus expect the theoretical best bound provided by the model to be slightly better than depicted here. 
The median improvement of model~\eqref{ip:ext-tree} is comparable to the the one of model~\eqref{ip:dir-cut}. 
Its variance is much higher, though: While some bounds of its improve on~\eqref{ip:dir-cut}, others are far worse than the na\"ive bounds from~\eqref{ip:undir-cut}.
This observation holds true even for $n=25$ and $n=50$ where~\eqref{ip:ext-tree} solves to optimality on all instance.
We conclude that even optimum bounds from~\eqref{ip:ext-tree} can be very weak.
A more detailed analysis (not shown here) yields that the bounds from~\eqref{ip:ext-tree} deteriorate with a growing number of Steiner nodes: 
In fact, the bound from~\eqref{ip:ext-tree} is never worse than the na\"ive bound from~\eqref{ip:undir-cut} on the instances where all nodes are terminals.    

The new models~\eqref{ip:ext-dir-cut} and~\eqref{ip:str-ext-dir-cut} perform similarly. 
For $n=25$, their bound is never worse then the na\"ive bound and their median lies well above the median of the previous best model~\eqref{ip:dir-cut}.
The trend continues for larger $n$ where the bounds from these two models is strictly better than the na\"ive bound in all cases. 
We even see an increasing improvement that comes close to a factor of $2$ in the best case.
Model~\eqref{ip:str-ext-dir-cut} seems to perform slightly better than~\eqref{ip:ext-dir-cut}; we recall however, that~\eqref{ip:str-ext-dir-cut} solves to optimality more often for $n=200$ where the difference is most pronounced.     

\begin{figure}[t]
\begin{tikzpicture}
 \input{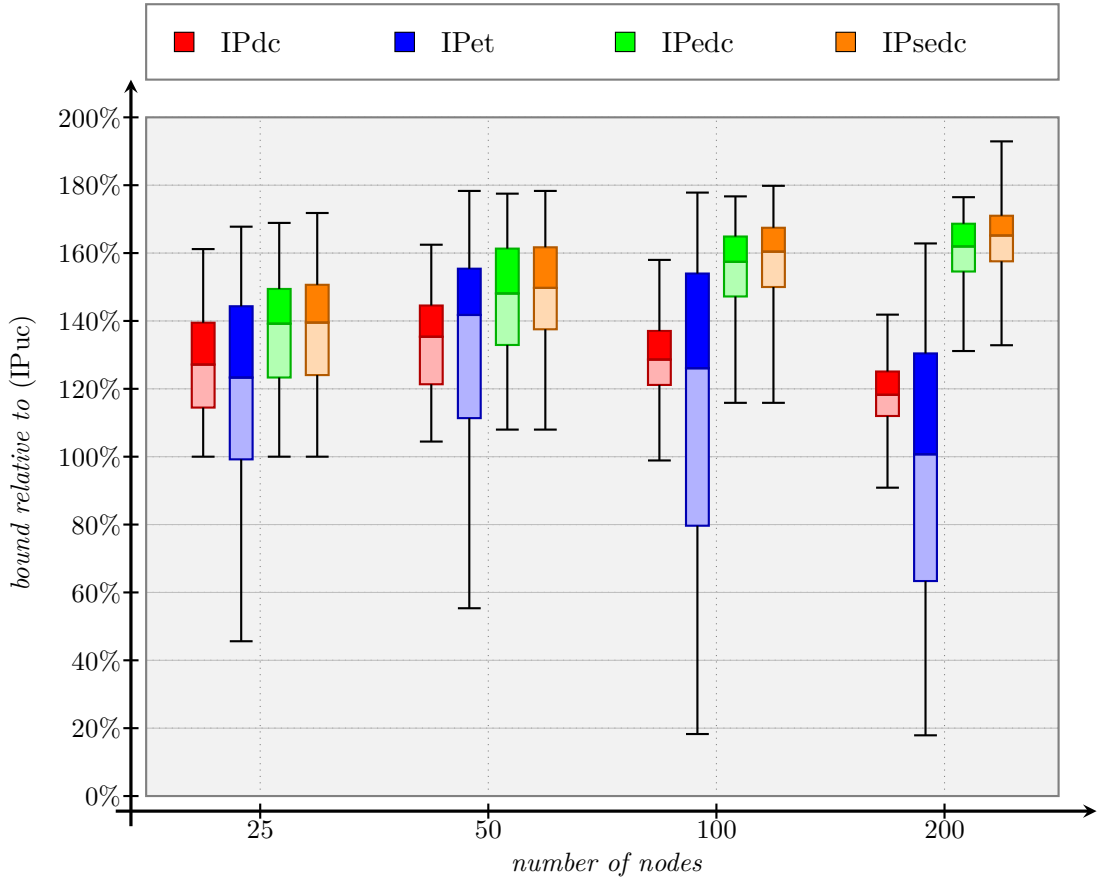}
\end{tikzpicture}
\caption{\label{fig:bound-improvement}
Quality of the linear programming bound of the best previous model and the new models. 
The values show the improvement over the linear programming bound over the naive cut-set formulation~\eqref{ip:undir-cut} at 100\%. 
The figure shows the (possibly suboptimal) bound obtained after 1 hour.}
\end{figure}

\section{Conclusions}
While our tree-based formulation~\eqref{ip:ext-tree} only seems to work well for instances without Steiner nodes, the new cut-based formulations~\eqref{ip:ext-dir-cut} and~\eqref{ip:str-ext-dir-cut} are tractable and provide lower bounds that come close to the integer optimum in many cases.
Switching between~\eqref{ip:ext-dir-cut} and~\eqref{ip:str-ext-dir-cut} allows us to trade a smaller number of variables for a slightly worse lower bound.
The obvious next step would be to design Branch-and-Bound algorithms based on these formulations. 
Another interesting question is whether the new formulations imply a primal-dual approximation algorithm and to find theoretical quality guarantees for them. 
Finally, it would be interesting to shed more lightonto the connection between~\eqref{ip:mr} and~\eqref{ip:ext-dir-cut} or~\eqref{ip:str-ext-dir-cut}. 
What happens if only a subset of the constraints~\eqref{ip:mr-7} is used, and which would be a good candidate subset? 
Can the constraints~\eqref{ip:mr-7} be separated efficiently?

\paragraph*{Acknowledgements}
This work was supported by a fellowship within the Postdoc-Program of the German Academic Exchange Service (DAAD).

\bibliographystyle{amsalpha}
\bibliography{steiner-forest}
\end{document}